\documentclass[11pt,a4]{article}

\usepackage{amsmath,amsthm,amsfonts,amssymb}
\usepackage[toc,page]{appendix}
\usepackage{fullpage}
\usepackage[showonlyrefs]{mathtools}
\usepackage{graphicx,tikz}
\usetikzlibrary{cd}
\usepackage{ifthen}
\usepackage{authblk}
\usepackage{makecell}
\usepackage{tcolorbox}
\usepackage{lipsum}
\usepackage{scalerel}
\tcbuselibrary{skins,breakable}
\usepackage{bm}
\usepackage{color}
\usepackage{authblk}
\usepackage{ytableau}

\usepackage[colorinlistoftodos,textsize=tiny]{todonotes}

\usetikzlibrary{positioning,arrows.meta}

\newboolean{ElectronicVersion}
\setboolean{ElectronicVersion}{true}

\ifthenelse{\boolean{ElectronicVersion}}{
	\usepackage[a4paper=true,pdftex,bookmarks,pagebackref,
	plainpages=false, pdfpagelabels=true
	]{hyperref}}{}

\makeatletter
\DeclareRobustCommand{\Udots}{%
	\vcenter{\offinterlineskip
		\halign{%
			\hbox to .8em{##}\cr
			\hfil.\cr\noalign{\kern.2ex}
			\hfil.\hfil\cr\noalign{\kern.2ex}
			.\hfil\cr}%
	}%
}
\makeatother

\usepackage{hyperref}
\hypersetup{
	bookmarksnumbered=true, 
	unicode=false, 
	pdfstartview={FitH}, 
	pdftitle={Tight Time-Space Lower Bounds for Collision Finding and Element Distinctness under Label Symmetry}, 
	pdfauthor={Fr\'ed\'eric Magniez and Sebastian Zur}, 
	pdfsubject={}, 
	pdfcreator={}, 
	pdfproducer={}, 
	pdfkeywords={}, 
	pdfnewwindow=true, 
	colorlinks=true, 
	linkcolor=blue, 
	citecolor=blue, 
	filecolor=blue, 
	urlcolor=blue 
}

\newcommand{\eq}[1]{\hyperref[eq:#1]{(\ref*{eq:#1})}}
\renewcommand{\sec}[1]{\hyperref[sec:#1]{Section~\ref*{sec:#1}}}
\newcommand{\thm}[1]{\hyperref[thm:#1]{Theorem~\ref*{thm:#1}}}
\newcommand{\lem}[1]{\hyperref[lem:#1]{Lemma~\ref*{lem:#1}}}
\newcommand{\cor}[1]{\hyperref[cor:#1]{Corollary~\ref*{cor:#1}}}
\newcommand{\app}[1]{\hyperref[app:#1]{Appendix~\ref*{app:#1}}}
\newcommand{\tabl}[1]{\hyperref[tab:#1]{Table~\ref*{tab:#1}}}
\newcommand{\defin}[1]{\hyperref[def:#1]{Definition~\ref*{def:#1}}}
\newcommand{\fig}[1]{\hyperref[fig:#1]{Figure~\ref*{fig:#1}}}
\newcommand{\clm}[1]{\hyperref[clm:#1]{Claim~\ref*{clm:#1}}}
\newcommand{\conj}[1]{\hyperref[conj:#1]{Conjecture~\ref*{conj:#1}}}
\newcommand{\probl}[1]{\hyperref[probl:#1]{Problem~\ref*{probl:#1}}}
\newcommand{\rem}[1]{\hyperref[rem:#1]{Remark~\ref*{rem:#1}}}
\newcommand{\para}[1]{\hyperref[para:#1]{Paragraph~\ref*{para:#1}}}
\newcommand{\exmp}[1]{\hyperref[exmp:#1]{Example~\ref*{exmp:#1}}}
\newcommand{\appx}[1]{\hyperref[appx:#1]{Appendix~\ref*{appx:#1}}}
\newcommand{\fct}[1]{\hyperref[fct:#1]{Fact~\ref*{fct:#1}}}

\newcommand{\Tr}{\operatorname{Tr}}
\newcommand{\id}{\mathrm{Id}}

\newcommand{\thmthm}[2]{\hyperref[thm:#1]{Theorem~\ref*{thm:#1}} and~\hyperref[thm:#2]{\ref*{thm:#2}}}
\newcommand{\lemlem}[2]{\hyperref[lem:#1]{Lemma~\ref*{lem:#1}} and~\hyperref[lem:#2]{\ref*{lem:#2}}}

{\endtcolorbox}

\newcommand{\nocontentsline}[3]{}
\newcommand{\tocless}[2]{\begingroup\let\addcontentsline\nocontentsline#1{#2}\endgroup}

\newtheorem{result}{Result}

\newtheorem{theorem}{Theorem}[section]
\newtheorem{lemma}[theorem]{Lemma}

\newtheorem{corollary}[theorem]{Corollary}
\newtheorem{fact}[theorem]{Fact}
\newtheorem{claim}[theorem]{Claim}

\newtheorem{remark}[theorem]{Remark}
\newtheorem{definition}[theorem]{Definition}

\usepackage{thm-restate}

\usepackage{color}
\definecolor{darkgreen}{rgb}{0,.5,0}
\definecolor{darkred}{rgb}{.7,.3,.3}
\definecolor{deepblue}{rgb}{0,.1,.7}

\newif\iflongversion
\newif\ifshortversion
\newif\ifediting
\editingtrue 

\def\ket#1{{\lvert}#1\rangle}
\def\bra#1{{\langle}#1\rvert}

\def\braket#1#2{{{\langle}#1\vert}#2\rangle}
\def\abs#1{\lvert #1 \rvert}

\newcommand{\norm}[1]{\lVert #1 \rVert}

\def\Tr{\mbox{Tr}}

\def\proj#1{\ket{#1}\bra{#1}}
\def\ketbra#1#2{\ket{#1}\bra{#2}}

\tikzset{
	mybox/.style = {
		rectangle,
		rounded corners=3mm,
		draw=blue!70!black,
		thick,
		fill=blue!5,
		minimum width=3cm,
		minimum height=1cm,
		align=center
	},
	arr/.style = {
		-{Straight Barb[scale=1]},
		line width=1pt,
		draw=blue!70!black
	}
}

\title{
Tight Time-Space Lower Bounds for Collision Finding \\
and Element Distinctness under Label Symmetry
}
\author{Fr\'ed\'eric Magniez\thanks{Email:\texttt{frederic.magniez@irif.fr}} } \author{Sebastian Zur\thanks{Email:\texttt{zursebastian@gmail.com}}}
\affil{CNRS, Université Paris Cité, IRIF, France}
\begin{document}
    \date{\today\\[1ex]\large}
    \pagenumbering{roman}
	\maketitle

\thispagestyle{empty}
\begin{abstract}
    How much memory is needed to retain the quantum speedup for collision finding? For a uniformly random function $f:[N]\to [N]$, the BHT algorithm~\cite{brassard1997collision} finds a collision using $O(N^{1/3})$ queries and a quantumly accessible classical table containing $O(N^{1/3})$ input-output pairs, whereas a logarithmic-space Grover search uses $O(\sqrt N)$ queries. Determining the optimal query-space tradeoff between these extremes remains a major open problem.

    We resolve this equation within the class of \emph{label-symmetric} algorithms, which treat the function $f$'s output labels as interchangeable. We prove that such algorithm that makes $T$ queries, uses $S$ qubits, and finds a collision in a uniformly random function $f:[M]\to [N]$ with constant probability satisfies
    \[
    T=\Omega(N^{1/3}) \qquad\text{and}\qquad T^2S=\Omega(N\log N).
    \]
    For the setting where $M=N$, these bounds are matched by a space-efficient implementation of the BHT algorithm. As a consequence of our tradeoff, any label-symmetric algorithm for the search version of Element Distinctness on $f: [n] \to [n^2]$ must satisfy
    \[
    T=\Omega(n^{2/3}) \qquad\text{and}\qquad T^2S=\Omega(n^2\log n),
    \]
    matching Ambainis's quantum walk~\cite{ambainis2004QWalkForElementDist}. Thus, both tradeoffs are optimal within the class of label-symmetric algorithms.

    To prove these results, we develop a space-sensitive version of the compressed oracle technique. The compressed oracle records the information learned by the algorithm in an evolving superposition of databases. Using label symmetry and representation theory, we show that an algorithm using $S$ qubits can effectively retain information about only $O(S/\log N)$ collision-free database entries. Substituting this estimate into the compressed oracle technique yields the stated tradeoffs. The key step identifies the relevant subspace of collision-free compressed databases with the least eigenspace of an arrangement graph, whose vertices are injective $s$-tuples and whose edges join tuples that differ in precisely a single coordinate. Of separate technical interest, we sharpen the previous analyses of the bottom of the spectrum of these arrangement graphs, identifying the exact spectral gap above the least eigenvalue.

\end{abstract}

\thispagestyle{empty}
\clearpage

\tableofcontents
\thispagestyle{empty}
\clearpage
\pagenumbering{arabic}

\section{Introduction}

\subsection{Context and motivation}

\paragraph{Post-quantum cryptography.}
One of the most dramatic potential applications of quantum computing is its ability to compromise widely used cryptographic tools. Shor's algorithm~\cite{shor1994Factoring}, for example, factors integers and computes discrete logarithms in quantum polynomial time. These problems form the main pillars of many of today's public-key cryptosystems, while the best-known classical algorithms for them require sub-exponential or exponential time. This threat has prompted an international effort in post-quantum cryptography to identify new fundamental primitives that would remain secure against quantum computers~\cite{PQCrypto2006,bernstein2009a}.

Note that an alternative approach to strengthen our current \emph{classical} cryptographic primitives is to shift to quantum technology and use quantum primitives such as the BB84 quantum key distribution~\cite{bb84}. Even though this technology is much more mature and currently being tested via satellites and telecom fibres, it still relies on classical cryptosystems in order, for instance, to authenticate classical channels of communication.

A less spectacular, but still serious threat comes from the BHT algorithm~\cite{brassard1997collision} for finding collisions in hash functions. Hash functions are also a pillar of modern symmetric-key cryptography. A basic security requirement for any hash-based cryptosystem is resistance to finding distinct inputs on which the hash function takes the same value, since many attacks begin by finding such a collision pair.
Even an ideal hash function is vulnerable to the generic randomised attack based on the birthday paradox, which gives a quadratic improvement over deterministic exhaustive search. Security parameters are therefore chosen with this attack in mind.
The BHT algorithm gives a further quantum improvement, finding a collision in a random function with a cubic-root quantum query complexity. Consequently, resisting quantum attacks requires larger security parameters than resisting classical attacks.

The quantum advantage for collision finding, however, strongly depends on the memory available to the algorithm. In particular, the BHT algorithm relies on a large memory with quantum access. This makes the role of memory in quantum collision-finding algorithms especially important.

\paragraph{Bottleneck of quantum memory.}
Quantum memory will probably be a major bottleneck for future quantum computers, especially when it must be accessed in quantum superposition in a single unit of time, as in a quantum analogue of random-access memory (RAM). Such memory has been formalised and used in several quantum algorithms, including the BHT algorithm and Ambainis's quantum walk algorithm for Element Distinctness~\cite{ambainis2004QWalkForElementDist}. In the former, the memory itself remains classical but allows quantum access and is therefore called QRACM; in the latter, the memory itself is quantum and is called QRAM (or a random-access gate). An early proposal and physical-architecture study of QRAM appears in~\cite{giovannetti2007QuantumRAM}.

In the Turing model, access to memory is sequential and thus takes linear time to address one cell of the memory. In the Random Access Memory (RAM) model, access is direct and can be done in basically one unit of time. The RAM model is closer to actual computers, but it has limitations when processing, for instance, massive data.
Note that in the circuit model, a multiplexer can implement the analogue of a RAM operation, but it costs linear circuit size, since it uses a linear number of Boolean gates with constant fan-in.

Thus, (quantum) polynomial-time algorithms usually do not explicitly require (Q)RAM, since they can simulate it with a polynomial overhead. Nonetheless, this remains critical for other algorithms, in particular those processing massive data. So algorithms such as Shor's algorithm do not explicitly require QRAM, but for those based on Grover Search~\cite{grover1996QSearch}, which provides a quadratic speed-up that is often significant for massive data, the quest for QRAM is fundamental.

Based on those facts, some criticisms were made against the BHT and Grover algorithms~\cite{bernstein2009b}. A 2D-grid architecture of $N$ classical processors with $\log N$ memory has been compared to a quantum computer with $\log N$ memory but with quantum access to a large classical memory (QRACM) of size $N$. In particular, it has been shown that the 2D-grid architecture proceeds as well as the quantum computer, and even better in some cases~\cite{bernstein2009b,jeffery11}.

To answer this criticism, another quantum algorithm was designed in~\cite{ChaillouxNS17} which uses only logarithmic-size quantum memory, but polynomial classical memory, while still providing a significant but smaller quantum advantage in its time complexity.

This illustrates the importance of understanding the role of memory in quantum algorithms. Since it is still unclear which memory model (QRAM, QRACM, or more restricted quantum memory) will be realistic in future quantum computers, it is natural to study how the time complexity depends on the available space, independently of the particular memory model. This leads to the study of time-space tradeoffs.

\paragraph{Time-space tradeoffs.}
Time-space tradeoffs quantify how the running time of an algorithm depends on the available memory. Their study has two complementary parts: algorithms provide upper bounds, while limitations on all algorithms provide lower bounds. Because unconditional lower bounds on computational time are generally out of reach, one needs further assumptions such as the exponential time hypothesis (ETH)~\cite{ImpagliazzoP99}, or, as in this work, one can use the query-complexity model, where access to the input is restricted to a specified oracle.
In that model, one input query is supposed to take one unit of time. Then, given some space-bound restriction, we prove a lower bound on the query complexity instead of the time complexity.

For instance, given comparison access to an array of size $N$, sorting the array requires
at least $\Omega(N \log N)$ (classical or quantum) comparisons~\cite{hoyerNS01}. If instead space is limited to at most $S$ bits, then the number of comparisons $T$ satisfies $TS=\Theta(N^2)$ (when $S=\Omega(\log N)$ and $S=O(N\log N)$)~\cite{BorodinFKLT81,PagterR98} for randomised algorithms.
In the case of quantum algorithms, one can show that $T^2S=\tilde\Theta(N^3)$~\cite{klauck2007quantum}.

Other such randomised and quantum tradeoffs have been established for a range of linear algebra problems including matrix-vector product, matrix inversion, matrix multiplication and powering~\cite{klauck2007quantum,BeameKW26}.

However, in the context of collision finding, the situation is much more complex.
First, in the classical evaluation model, when one has access to some random hash function $f:[N]\to[N]$ by querying $f(x)$ for any $x$, there exist methods, such as the parallel collision search algorithm~\cite{OorschotW99}, that succeed in finding a collision within $\tilde{O}(\sqrt{N})$ queries and only polylogarithmic memory, ruining the hope of any time-space tradeoff. Together with the $\Omega(\sqrt N)$ query lower bound for collision finding in a random function, even without a space restriction, this shows that additional memory cannot yield a substantial query-complexity improvement in this setting.

When the function is no longer random, the situation is quite different. This problem is known as Element Distinctness and consists of deciding whether the function $f$ is injective. The best-known algorithm satisfies the tradeoff $T^2 S=\tilde O(N^3)$~\cite{BeameCM13,ChenJWW22,LyuZ2023}.
However, the best lower bound is barely superlinear~\cite{ajtai2005EDlowerbound,BeameSSV03}. One can interpret this as a consequence of the short output size.
Still, in the comparison model, the situation is a bit easier, and by an amazing tour de force, a line of work has established the almost tight classical tradeoff of $TS=\Omega(N^{2-o(1)})$ (the upper bound $TS=\tilde O(N^2)$ comes from sorting algorithms).

Due to the inherent difficulty in establishing time-space tradeoff limitations for short-output problems, such as collision finding, the quest for such a tradeoff for finding a collision in a random function took some inspiring detours. One of them consists of finding $K$ collision pairs in a random hash function $f:[N]\to [N]$.
Classically, Dinur~\cite{Dinur20} proved the tight lower bound $T^2 S=\tilde{\Omega}(K^2 N)$, matching the upper bound of~\cite{OorschotW99}. The problem was subsequently considered in the quantum setting by Hamoudi and Magniez~\cite{hamoudi2020quantum}. Their work popularised the compressed oracle technique, also known as the recording queries technique, of Zhandry~\cite{zhandry2019record}, as a method for proving strong quantum query lower bounds in the low-success-probability regime, which in turn yield strong direct product bounds and time-space lower bounds. In particular, they proved $T^{3}S=\Omega(K^{3}N)$ and, by adapting the BHT algorithm, obtained $T^{2}S=\widetilde{O}(K^{2}N)$ throughout the range $\widetilde{\Omega}(\log N)\leq S\leq \widetilde{O}(K^{2/3}N^{1/3})$.

However, determining the optimal quantum time-space tradeoff for finding a \emph{single} collision remains today a major open problem, highlighted by Aaronson in~\cite{aaronson2021open} as quoting below: 
\begin{quote}
    \textbf{Open Problem} \cite[Problem 3]{aaronson2021open} {\it What are the optimal tradeoffs between the number of queries used by a quantum algorithm to solve the collision or the element distinctness problems, and the number of qubits or classical bits of memory?}
\end{quote}

\subsection{Contributions}

In this work, we solve the above open problem for a restricted class of quantum algorithms. We establish the first quantum time-space lower bound for finding a \emph{single} collision in a \emph{random} function $f:[M]\to[N]$, for the class of \emph{label-symmetric} algorithms, and show that the resulting tradeoff is tight in the standard case $M=N$.

Informally, an algorithm is \emph{label-symmetric} if its strategy does not depend on the labels assigned to the range elements: the strategy is similar for $f$ and $\sigma\circ f$, for any permutation $\sigma$ of $[N]$. The formal definition is stated in \defin{label-symmetric}.

\begin{result}[Informal version of {\thm{collision-time-space}}]
    Any label-symmetric algorithm that finds a collision in a uniformly random function $f:[M]\to[N]$ with constant probability and uses $T$ queries and $S$ qubits of space must satisfy
    \[
    T=\Omega(N^{1/3}) \qquad\text{and}\qquad T^2S=\Omega(N\log N).
    \]
\end{result}
It is natural to assume $M=\Omega(\sqrt N)$, in which case a uniformly random-function contains a collision with probability bounded away from zero. If $M=\omega(\sqrt N)$, this probability tends to one. Neither assumption is needed for the correctness of the lower bound.

Taking the domain size to be $n$ and the range size to be $n^2$ gives a tight consequence for the search version of Element Distinctness, because a uniformly random function $f:[n]\to[n^2]$ contains a collision with constant probability.

\begin{result}[Informal version of \cor{search-element-distinctness}]
    Any label-symmetric quantum algorithm that finds a collision in every non-injective function $f:[n]\to[n^2]$ with bounded error satisfies
    \[
    T=\Omega(n^{2/3})\qquad\text{and}\qquad T^2S=\Omega(n^2\log n).
    \]
\end{result}

It is natural to ask ourselves whether our restriction is too strong. In fact, both the uniform input distribution and the condition that an output pair forms a collision are already invariant under permutations of the range labels. The labels therefore carry no intrinsic meaning, and there is no evident reason for an algorithm to benefit from treating particular labels differently. 

This motivates our focus on label-symmetric algorithms. Indeed, both the BHT algorithm and Ambainis's quantum walk for Element Distinctness satisfy this condition (\lem{bht-label-symmetric} and \lem{bht-label-symmetric}). More generally, any algorithm whose only input-dependent operations are pairwise equality queries asking whether $f(x)=f(x')$ is automatically label-symmetric (\lem{equality-query-label-symmetric}).
\begin{result}[Informal versions of {\lem{bht-label-symmetric} and \lem{equality-query-label-symmetric}}]
    The following algorithms are label-symmetric:
    \begin{itemize}
        \item BHT algorithm for Collision Finding;
        \item Ambainis's quantum walk for Element Distinctness;
        \item Any algorithm whose only input-dependent operations are pairwise equality queries.
    \end{itemize}
\end{result}

For random collision finding with $M=N$, BHT with table size $r$ uses
\[
T=O\bigg(r+\sqrt{\frac Nr}\bigg)\qquad\text{and}\qquad S=O(r\log N).
\]
For $1\leq r\leq N^{1/3}$ this gives $T^2S=O(N\log N)$, while $r=N^{1/3}$ gives $T=O(N^{1/3})$. For Element Distinctness, Ambainis's quantum walk with $1\leq r\leq n^{2/3}$ gives $T^2S=O(n^2\log n)$. Thus, our tradeoffs for random collision finding and the search version of Element-Distinctness bounds are tight.

Our bounds also extend to algorithms with intermediate measurements, under a conditional version of label symmetry that we develop in \sec{intermediate-measurements}. This allows, for example, measuring a label and then using Grover search to find another occurrence of that label.

Of separate technical interest, we obtain a sharp analysis of the bottom of the spectrum of arrangement graphs. An arrangement graph may be viewed as an ordered analogue of a Johnson graph: its vertices are injective $s$-tuples $(y_1,\dots,y_s)\in[N]^s$, and two vertices are adjacent if and only if they differ in exactly one coordinate. We sharpen previous spectral results for these graphs~\cite{chen2013cyclic,araujo2017spectra} throughout the full regime $N\geq2s$, determining both the least eigenspace and the exact spectral gap above it. 
\begin{result}[Arrangement Spectral Gap, informal version of {\lem{spectrum}}]
    Assume $N\geq 2s$. The smallest eigenvalue of $A_{N,s}$ is $-s$, and the gap above the least eigenvalue is $N-2s+2$. 
\end{result}
This spectral estimate is a key ingredient in our space lower bound: it allows us to control the collision-free part of a compressed database that can be retained by an $S$-qubit label-symmetric algorithm. We give the precise spectral statement and explain its connection to the collision problem in the technical overview below.

\subsection{Related works}

The quest for quantum time-space tradeoff limitations for finding a collision for random hash functions was initiated for the lower bound part in~\cite{hamoudi2020quantum}, but for the case of finding multiple collisions only. The compressed oracle technique was introduced by Zhandry in~\cite{zhandry2019record}, mostly for studying the security of classical cryptosystems. Since then, its applications were developed for various settings including query complexity.

The quantum query complexity of symmetric algorithms has been studied before, for example for insertion into an ordered list~\cite{farhi1999} and for ordered search~\cite{carolan2025}, but these works use a different notion of symmetry (translation invariance). In our work, we consider the symmetry for which the algorithm’s strategy is unchanged when the range labels are permuted. This symmetry is more relevant for the collision finding problem.

\subsection{Open problems}

The main open problem is to remove the label-symmetry assumption. Because both the uniform input distribution and the collision success condition are invariant under range relabelling, it is difficult to imagine how treating particular labels asymmetrically could help. Nevertheless, the usual method of symmetrising an arbitrary algorithm stores a random permutation and can require $\Theta(N\log N)$ additional space, so it does not preserve the parameter that our lower bound tracks. A space-preserving symmetrisation argument, or a proof that avoids symmetry altogether, would extend the tradeoff to all algorithms.

\subsection{Technical overview}

The main difficulty in proving time-space tradeoffs for quantum collision finding is that the usual analysis only measures how much information the algorithm has learned, but does not directly capture how much of this information can be stored in a limited amount of quantum memory. Our proof overcomes this difficulty by combining three ingredients: the compressed oracle technique, the label symmetry of algorithms that treat output labels as interchangeable, and a representation-theoretic analysis of the linear spaces in the aforementioned compressed oracle technique. Together, these tools show that an algorithm using only $S$ qubits cannot maintain a large collision-free database, which limits how quickly it can create a collision.

\paragraph{Compressed-oracle progress.} Our proof is based on the compressed oracle technique~\cite{zhandry2019record,hamoudi2020quantum}, which we introduce in \sec{compressed-oracle}. In this technique, we represent what the algorithm knows about a uniformly random function $f:[M]\to[N]$, by a superposition of \textit{databases}. These databases are initially filled with ''empty`` cells $(x,\bot)$, representing that the algorithm has no information about the value $f(x)$, but after $t$ queries, each branch of the superposition contains a database with at most $t$ cells of the form $(x,f(x))$. The unitary evolution of these database branches through queries is quite subtle, but it was formally shown in the seminal work by Zhandry~\cite{zhandry2019record} that we can think of a query (to $x$) adding a random fresh cell $(x,f(x))$ to the database.

The progress of the algorithm can now be tracked by defining projections onto these databases, such as $\Pi_{\geq 1}$, which projects onto all branches such that there exist entries $(x_1,y),(x_2,y)$, i.e. a collision, or $\Pi_{=0}$ which projects onto all databases that contain no collision. More formally, if $\ket{\widetilde\psi_t}$ is the superposition of our databases after $t$ queries,
\[
\Delta_t:=\norm{\Pi_{\geq1}\ket{\widetilde\psi_t}}
\]
denotes the amplitude on such databases that contain at least one collision. A new query can create a collision only by adding new cell whose label $f(x)$ is already present at another occupied position. This intuition allows one to bound the increase of $\Delta_t$, formalised in \lem{one-query}, in terms of the number of occupied entries in the collision-free part of the database. Write $\Lambda_s$ for the projection onto databases with exactly $s$ occupied positions, then
\[
\Delta_{t+1}\leq\Delta_t+\frac4{\sqrt N}\bigg(\sum_{s=0}^t s\norm{\Pi_{=0}\Lambda_s\ket{\widetilde\psi_t}}^2\bigg)^{1/2}.
\]

Since $s$ is always bounded by $t$ and that $\{\Pi_{=0}\Lambda_s\}_s$ form a set of mutually orthogonal projections, using Cauchy-Schwarz we obtain the standard estimate $\Delta_T=O(T^{3/2}/\sqrt N)$. Since the probability of success of our algorithm is approximately equal to $\Delta_T$, we recover the usual $T=\Omega(N^{1/3})$ quantum query lower bound, if we require that $\Delta_T = \Omega(1)$. Our main task is to replace this crude bound by one that also depends on the available space.

\paragraph{The space restriction from label symmetry.} In \sec{symmetry-restrictions}, we introduce label symmetry at the level of the reduced state of the input register $\rho_I$, obtained by tracing out the algorithm's register in the joint state on the algorithm and the input.  Label symmetry implies that the support of $\rho_I$ is invariant under permutations of the $N$ range labels. However, any quantum algorithm that uses at most $S$ qubits, the Schmidt rank across the joint state is at most $2^S$, and hence the reduced state $\rho_I$ has a rank of at most $2^S$. 
Those two observations imply that the orbit (under the symmetric group $\mathfrak S_N$) of every vector $\ket{\beta}$ in the support of $\rho_I$ spans a space of dimension at most $2^S$, leading to the following \emph{orbit bound} that we state formally in \lem{orbit-span-symm}:
\[
\dim\operatorname{span}\{\sigma\ket{\beta}:\sigma\in\mathfrak S_N\}\leq2^S.
\]

In \sec{label-symmetric}, we verify that this label-symmetry assumption is satisfied by both the BHT collision-finding algorithm and Ambainis's quantum walk for Element Distinctness, as well as by the broader class of equality-query algorithms. This shows that label symmetry captures natural quantum algorithms and, together with the BHT upper bound, establishes the tightness of our tradeoff in the standard case $M=N$.

\paragraph{Collision-free compressed databases.} To exploit the orbit bound, in \sec{no-col} we introduce a representation-theoretic model for a database with $s$ occupied positions. After fixing these positions and omitting all $\bot$ entries, the recorded values may be identified with vectors in $\mathbb C[N]^{\otimes s}$. Two subspaces of this tensor space arise naturally.

The first is
\[
W:=\mathrm{span}\{\ket{\widehat{0}}\}^{\perp} \subseteq \mathbb C[N], \qquad \ket{\widehat{0}}:=\frac1{\sqrt N}\sum_{y\in[N]}\ket y.
\]
Each $\ket{\widehat{0}}$ intuitively represents `knowing nothing' about the function value, and is mapped to a $\bot$ cell in the compressed oracle technique. Thus, valid databases with $s$ non-$\bot$ cells lie precisely in $W^{\otimes s}$.

The second is the collision-free subspace
\[
C_s^\perp:=\operatorname{span}\{\ket{y_1,\ldots,y_s}:y_1,\ldots,y_s\text{ are pairwise distinct}\}.
\]
Consequently, the central space in our analysis is
\[
C_s^\perp\cap W^{\otimes s},
\]
which consists of states that are simultaneously collision-free and compatible with a database in the compressed-oracle representation.

\paragraph{Representation theory and arrangement graphs.} We next determine the $\mathfrak S_N$-representation carried by our space $C_s^\perp\cap W^{\otimes s}$. To achieve this, we define the deletion map $D:=\bigoplus_{r=1}^s d_r|_{C_s^\perp}$, obtained by, for each coordinate $r\in\{1,\ldots,s\}$, applying $d_r$ which deletes the $r$-th coordinate. On the one hand (\clm{kernel}), this map $D$ is related to $C_s^\perp\cap W^{\otimes s}$ through the identity
\[
\ker D=C_s^\perp\cap W^{\otimes s}.
\]

On the other hand, the operator $D^\dagger D$ has a natural graph-theoretic interpretation,
\[
D^\dagger D=A_{N,s}+s\cdot\id,
\]
where $A_{N,s}$ is the adjacency operator of the arrangement graph. This graph may be viewed as an ordered analogue of a Johnson graph: its vertices are the injective $s$-tuples in $[N]^s$, and two vertices are adjacent when they differ in exactly one coordinate. It follows that the space of collision-free compressed databases is precisely the $(-s)$-eigenspace of the arrangement graph:
\[
C_s^\perp\cap W^{\otimes s}=\ker(A_{N,s}+s\cdot\id),
\]

Our spectral analysis, stated in \lem{spectrum} and proved in \sec{proof-spectrum}, shows, for every $N\geq2s$, the space $\ker(A_{N,s}+s\cdot\id)$ decomposes into irreducible representations that are all of high dimension, namely at least
\[
\binom Ns-\binom N{s-1}.
\]
This exceeds $2^S$ whenever $s>2S/\log_2N$, meaning that a vector accessible to an $S$-qubit label-symmetric algorithm cannot have a nonzero component in $\ker(A_{N,s}+s\cdot\id) = C_s^\perp\cap W^{\otimes s}$ for such large values of $s$. This will be crucial for establishing our time-space tradeoff.

\paragraph{The noncommuting projections difficulty.} One difficulty is that the two natural conditions on a database, being collision-free and being a valid compressed database, are described by projections in the computational basis and the Fourier basis, respectively, and hence do not commute. The spectral gap of the arrangement graph controls precisely this discrepancy. In \lem{proj-diff}, we prove
\begin{equation}\label{eq:intro-noncom}
\norm{\Pi_{C_s^\perp}\Pi_{W^{\otimes s}}-\Pi_{C_s^\perp\cap W^{\otimes s}}}^2\leq\frac{2s-2}{N}.
\end{equation}
Consequently, once the component in $C_s^\perp\cap W^{\otimes s}$ has been ruled out, the squared norm of the projection onto the collision-free subspace $C_s^\perp\Pi_{W^{\otimes s}}$ is at most $(2s-2)/N$ times the squared norm of the original state.

\paragraph{From the database space bound to the tradeoff.} In \sec{database-size}, we combine all our aforementioned intermediary results to obtain a tighter bound, one that involves the space $S$, on the quantity
\[
\sum_{s=0}^t s\norm{\Pi_{=0}\Lambda_s\ket{\widetilde\psi_t}}^2.
\]

We bound the quantity $s\norm{\Pi_{=0}\Lambda_s\ket{\widetilde\psi_t}}^2$, depending on whether $s$ is larger than $2S/\log_2N$ or not. If $s$ is smaller, the bound is straightforward and we obtain
\[
s\norm{\Pi_{=0}\Lambda_s\ket{\widetilde\psi_t}}^2 \leq \min\left\{t,\frac{2S}{\log_2N}\right\}\norm{\Lambda_s\ket{\widetilde\psi_t}}^2.
\]

For $s>2S/\log_2N$, we know from our representation-theoretic analysis of the space $C_s^\perp\cap W^{\otimes s}$, that $\ket{\widetilde\psi_t}$ can not have any overlap with $C_s^\perp\cap W^{\otimes s}$. Note that \eq{intro-noncom} implies that, for any normalised vector $v$ orthogonal to $C_s^\perp\cap W^{\otimes s}$, we have
\[
\norm{\Pi_{C_s^\perp}\Pi_{W^{\otimes s}}v}^2\leq\frac{2s-2}{N}.
\]
Therefore, by formally identifying the projection $\Pi_{=0}$ with the projection $\Pi_{C_s^\perp}\Pi_{W^{\otimes s}}$ in \clm{intertwine}, we can bound
\[
s\norm{\Pi_{=0}\Lambda_s\ket{\widetilde\psi_t}}^2 \leq\frac{2s-2}{N}s\norm{\Lambda_s\ket{\widetilde\psi_t}}^2.
\]

Assuming that $t \leq \sqrt{N}$ and $S \geq \log_2N$, this obtain the space-sensitive bound
\[
\Delta_{t+1}\leq\Delta_t+\frac4{\sqrt N}\bigg(\sum_{s=0}^t s\norm{\Pi_{=0}\Lambda_s\ket{\widetilde\psi_t}}^2\bigg)^{1/2} \leq \Delta_t+\frac4{\sqrt N}\bigg(\min\left\{t,\frac{2S}{\log_2N}\right\}\bigg)^{1/2},
\]
which for $\Delta_T = \Omega(1)$ and  $T=\Omega(N^{1/3})$ requires $T^2S=\Omega(N\log N)$, proving \thm{collision-time-space}.

\section{Preliminaries}

\subsection{Linear algebra}

For a positive integer $n$, we write $[n]:=\{0,\dots,n-1\}$. We consider finite-dimensional complex inner product spaces ${\cal H} = \mathbb{C}^d$ for some dimension $d$. We use standard bra-ket notation for column and row vectors in $\mathbb{C}^d$.  We consider all bra-ket vectors to be normalised unless specified otherwise. For a finite set $S$, we let
$$\mathbb{C}[S] := \mathrm{span}\{\ket{s}:s\in S\},$$
and, for a positive integer $n$, we abbreviate $\mathbb C[[n]]$ as $\mathbb C[n]$.
For any two Hermitian operators $A, B$, we write $A\succeq B$ if their difference $A - B$ is positive semidefinite.
\begin{definition}[Spectral norm]
    Let $A \in \mathbb{C}^{d \times d}$ be a matrix. Then the \emph{spectral norm} (also known as the operator norm) of $A$ is 
    $$\norm{A} := \sup\limits_{\ket{v} \in \mathbb{C}^{d}} \norm{A\ket{v}},$$
    where $\norm{A\ket{v}}$ is the standard vector $\ell_2$-norm.
\end{definition}
Since we consider all bra-ket vectors to be normalised, the above supremum is implicitly over normalised vectors.

\begin{definition}[Fourier basis]\label{def:fourier}
    Let $\{\ket{y}\}_{y \in [N]}$ be the computational basis for $\mathbb{C}[N]$. Then $\{\ket{\widehat{y}}\}_{y \in [N]}$ is the \emph{Fourier basis} of $\mathbb{C}[N]$, where each $\ket{\widehat{y}}$ is defined as
    $$ \ket{\widehat{y}} := \frac{1}{\sqrt{N}} \sum_{z \in [N]} \omega_N^{-yz}\ket{z}.$$
    Here $\omega_N = e^{\frac{2\pi\iota}{N}}$, where $\iota$ denotes the imaginary unit to prevent ambiguity with the variable $i$.
\end{definition}

\subsection{Representation theory of the symmetric group}

We introduce the representation-theoretic preliminaries of the symmetric group that are used in the rest of this section. 

Let $\mathfrak S_n$ denote the symmetric group of degree $n$. Throughout this subsection, all representations are finite-dimensional complex representations, and group actions are on the left unless stated otherwise. A \textit{representation} $(V,\rho)$  of $\mathfrak S_n$ is a complex vector space $V$ equipped with a homomorphism 
\[ 
\rho:\mathfrak S_n\longrightarrow \mathrm{GL}(V). 
\]

When the representation homomorphism is clear from context, we suppress it from the notation and identify the representation $(V,\rho)$ with $V$. In this case, we write $\sigma v$ for $\rho(\sigma)v$.

A \textit{subrepresentation} $W$ of $V$ is given by a subspace $W\subseteq V$ such that 
\[
\sigma w\in W
\] 
for every $\sigma\in\mathfrak S_n$ and $w\in W$. A representation $V$ is called \textit{irreducible} if its only subrepresentations are $\{0\}$ and $V$ itself. 

Let $(V,\rho_V)$ and $(U,\rho_U)$ be representations of $\mathfrak S_n$. A linear map $B:V\longrightarrow U$ is $\mathfrak S_n$-\emph{equivariant} if 
\[
B(\rho_V(\sigma) v)=\rho_U(\sigma) B(v)
\] 
for every $\sigma\in\mathfrak S_n$ and $v\in V$.

The \textit{character} of a representation $(V,\rho)$ is the function
\[
\chi_{V}:\mathfrak S_n\to\mathbb C, \qquad \chi_{V}(\sigma):=\Tr(\rho(\sigma)).
\]
Thus,
\[
\chi_{V}(1)= \Tr(\rho(1)) = \Tr(\id_V) = \dim V,
\]
where $1\in\mathfrak S_n$ denotes the identity element.

The irreducible complex representations of $\mathfrak S_n$ are indexed by \textit{partitions} $\lambda$ of the integer $n$, denoted by $\lambda\vdash n$, that is, weakly decreasing sequences of positive integers
\[
\lambda=(\lambda_1,\dots,\lambda_r),
\qquad
\abs{\lambda}:=\lambda_1+\cdots+\lambda_r=n.
\]
We denote the corresponding irreducible representation by $S^\lambda$, and call it the \textit{Specht module} of shape $\lambda$. Its character is denoted by 
\[
\chi_\lambda:=\chi_{S^\lambda}.
\]
Thus, in particular,
\[
\chi_\lambda(1)=\dim S^\lambda.
\]

Each partition $\lambda$ has an associated \textit{Young diagram} $Y(\lambda)$, a collection of left-aligned rows of boxes with $\lambda_i$ boxes in row $i$. For example, the partition $(4,2,1)\vdash 7$ has Young diagram
\begin{equation*}
    \begin{ytableau}
    \, & \, & \, & \, \\
    \, & \, \\
    \,
    \end{ytableau}
\end{equation*}
We also regard the empty partition $\varnothing$ as the unique partition of $0$, with $\abs{\varnothing}=0$ and $Y(\varnothing)=\varnothing$.

A \textit{standard Young tableau} of shape $\lambda$, denoted by the symbol $\mathfrak{t}$, is a filling of the boxes of $Y(\lambda)$ with the numbers $1,\dots,n$, each used exactly once, such that the entries increase along rows and down columns. For instance, a standard Young tableau of shape $(4,2,1)$ is
\begin{equation*}
    \begin{ytableau}
    1 & 2 & 4 & 7 \\
    3 & 5 \\
    6
    \end{ytableau}
\end{equation*}
since the entries increase from left to right in each row and from top to bottom in each column. We write $\mathrm{SYT}(\lambda)$ for the set of standard Young tableaux of shape $\lambda$.

\begin{theorem}[Theorem~$2.5.2$ in~\cite{sagan2001symmetric}]
\[
   \abs{\mathrm{SYT}(\lambda)} = \dim S^\lambda.
\]
\end{theorem}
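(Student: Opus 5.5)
Since the paper cites this result from Sagan (Theorem 2.5.2) and does not construct Specht modules explicitly, I would prove it from the standard explicit construction of $S^\lambda$ via polytabloids. Concretely, for a tableau $t$ of shape $\lambda$, let $\{t\}$ denote its tabloid, i.e.\ its row-equivalence class. Let $M^\lambda=\mathbb C[\{\text{tabloids}\}]$ be the permutation module. Define the polytabloid $e_t:=\sum_{\pi\in C_t}\operatorname{sgn}(\pi)\,\pi\{t\}$, where $C_t$ is the column stabiliser of $t$, and set $S^\lambda:=\operatorname{span}\{e_t\}$. The claim then splits into two halves. First, the polytabloids $\{e_{\mathfrak t}:\mathfrak t\in\mathrm{SYT}(\lambda)\}$ are linearly independent, giving $\dim S^\lambda\ge\abs{\mathrm{SYT}(\lambda)}$. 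Second, they span $S^\lambda$, giving the reverse inequality.

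For independence, I would use the dominance order on tabloids. Here $\{s\}\trianglerighteq\{t\}$ when, for every $k$, the composition recording where $1,\dots,k$ sit in $\{s\}$ dominates that of $\{t\}$. The key lemma is that for a standard tableau $\mathfrak t$, the tabloid $\{\mathfrak t\}$ is the dominance-maximum among the tabloids appearing in $e_{\mathfrak t}$. The reason is that applying a nontrivial column permutation to a column-increasing tableau moves some smaller entry to a lower row. Then order the standard tableaux so that the maximal tabloids $\{\mathfrak t\}$ are distinct, and extend dominance to a total order. A standard triangularity argument then shows that in any nontrivial vanishing linear combination, the coefficient of the tableau with the largest leading tabloid must be zero, which yields independence.

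For spanning, I would use the Garnir straightening algorithm. Any $e_t$ can first be reduced, up to sign, to one with increasing columns, since $e_{\pi t}=\operatorname{sgn}(\pi)e_t$ for $\pi\in C_t$. If a row descent remains, the Garnir element built from the two offending columns gives a relation expressing $e_t$ as a combination of $e_{t'}$, where the $t'$ are strictly larger in the column-dominance order on tableaux. Since the set of tableaux is finite, iterating this terminates with a combination of standard polytabloids. Finally, one needs that this $S^\lambda$ is irreducible and that the family exhausts the irreducibles. For this I would invoke the submodule theorem (James), stated as: every submodule $U\subseteq M^\lambda$ either contains $S^\lambda$ or lies in $(S^\lambda)^\perp$. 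Together with the fact that the number of partitions equals the number of conjugacy classes, this identifies the representation labelled $S^\lambda$ in the paper with the module constructed above.

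The main obstacle I expect is the Garnir step. One must verify that the Garnir element annihilates $e_t$, which uses the sign-reversing trick under the submodule-theorem lemma. One must also check that each resulting tableau is strictly larger in the column-dominance order, so that the straightening process terminates. As an alternative that would sidestep straightening, I could prove $\sum_\lambda\abs{\mathrm{SYT}(\lambda)}^2=n!$ via the Robinson--Schensted correspondence. Combined with $\sum_\lambda(\dim S^\lambda)^2=n!$ and the inequality $\dim S^\lambda\ge\abs{\mathrm{SYT}(\lambda)}$ from independence, this forces equality for every $\lambda$.
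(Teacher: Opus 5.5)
Your proposal is correct and follows the argument the paper defers to. Sagan proves Theorem~2.5.2 by showing that the standard polytabloids $e_{\mathfrak t}$ are linearly independent, since each has $\{\mathfrak t\}$ as its dominance-maximum tabloid and these tabloids are distinct, and that they span $S^\lambda$ by Garnir straightening in the column-dominance order; your RSK shortcut for the spanning half is also valid.

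The identification paragraph is not needed when $S^\lambda$ is taken to be the polytabloid module. As written, though, it is incomplete: irreducibility plus the count of conjugacy classes is not enough. You also need the $S^\lambda$ to be pairwise non-isomorphic, which comes from Sagan's lemma that a nonzero homomorphism $S^\lambda\to M^\mu$ forces $\lambda\trianglerighteq\mu$. That same completeness is what underlies the identity $\sum_\lambda(\dim S^\lambda)^2=n!$ used in your alternative.
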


\subsection{Quantum query complexity}\label{sec:query}

In this work, the input is a function $f:[M]\to[N]$. The memory of a quantum algorithm ${\cal A}$ is described, without loss of generality, by registers ${\cal W}$, ${\cal X}$, and ${\cal Y}$. The input oracle acts on ${\cal X}\otimes{\cal Y}$, while ${\cal W}$ is an additional workspace register. The algorithm accesses $f\in[N]^M$ through the following oracle.

\begin{definition}[Oracle]\label{def:query-og}
    An \emph{oracle} ${\cal O}_f$, encoding the input function $f \in [N]^M$, is a unitary transformation that acts on
    $$ \mathrm{span}\{\ket{x}_{\cal X}\ket{y}_{\cal Y} : x \in [M], y \in [N]\}, $$
    with its action on the basis state $\ket{x}_{\cal X}\ket{y}_{\cal Y}$ defined as
    $$ {\cal O}_f\ket{x}_{\cal X}\ket{y}_{\cal Y} = \ket{x}_{\cal X}\ket{(y + f(x))\bmod N}_{\cal Y}. $$
\end{definition}

The input $f$ is typically drawn from some (hard) input distribution $\delta$ on $[N]^M$, denoted by $f \sim \delta$. Consequently, ${\cal O}_f$ is a random variable. In adversary methods and the compressed-oracle technique, this randomness is purified by introducing an additional \emph{input} register $\mathcal{I}$ that stores a superposition of function tables. If $f \sim \delta$, the register $\mathcal{I}$ is initialised as
$$
\ket{\delta} := \sum_{f \in [N]^M} \sqrt{\delta(f)} \ket{f}_{\cal I}.
$$
Here, $\ket{\delta}$ represents the initial state of the input register. It is important to note that this should not be confused with the initial state of the algorithm, which is the all-zero state. This purification of the input leads to the following purified oracle:
\begin{definition}[Purified Oracle]\label{def:query}
    A \emph{purified oracle} ${\cal O}$ is a unitary transformation that acts on
    $$ \mathrm{span}\{\ket{x}_{\cal X}\ket{y}_{\cal Y}\ket{f}_{\cal I} : x \in [M], y \in [N], f \in [N]^M\}, $$
    with its action on the basis state $\ket{x}_{\cal X}\ket{y}_{\cal Y}\ket{f}_{\cal I}$ defined as
    $$ {\cal O}\ket{x}_{\cal X}\ket{y}_{\cal Y}\ket{f}_{\cal I} = \ket{x}_{\cal X}\ket{(y + f(x))\bmod N}_{\cal Y}\ket{f}_{\cal I}. $$
\end{definition}

From the perspective of the algorithm, it is indistinguishable whether it interacts with the random variable ${\cal O}_f$ or the purified oracle ${\cal O}$ with input register initialised to $\ket{\delta}$. The relationship between the two is captured by the following expression:
$$
{\cal O} = \sum_{f \in [N]^M} {\cal O}_f \otimes \proj{f}_{\cal I}.
$$

It is equivalent, and in this work more convenient, to encode the query into the phase by viewing the ${\cal Y}$ register in the Fourier basis $\{\ket{\widehat{y}}\}_{y \in [N]}$ instead of the computational basis $\{\ket{y}\}_{y \in [N]}$.
In this Fourier basis, the oracle from \defin{query} acts on any basis state $\ket{x}_{\cal X}\ket{\widehat{y}}_{\cal Y}\ket{f}_{\cal I}$ as
$$ {\cal O}\ket{x}_{\cal X}\ket{\widehat{y}}_{\cal Y}\ket{f}_{\cal I} = \omega_N^{yf(x)}\ket{x}_{\cal X}\ket{\widehat{y}}_{\cal Y}\ket{f}_{\cal I}. $$

\begin{definition}[$T$-Query Quantum Algorithm]\label{def:algorithm}
    A \emph{$T$-query quantum algorithm} ${\cal A}$ on $[N]^M$ is a sequence of unitaries $U_0,\dots,U_T$ acting on
    \[
        {\cal H}_{\cal WXY} = {\cal H}_{\cal W} \otimes \mathbb C[M] \otimes \mathbb C[N],
    \]
    where ${\cal H}_{\cal W}$ is an arbitrary finite-dimensional  workspace.

    For a fixed input $f:[M]\to[N]$ and $t\in[T+1]$, define
    \[
        \ket{\psi_t^f({\cal A})} :=  U_t{\cal O}_fU_{t-1}{\cal O}_f\cdots {\cal O}_fU_0\ket{0}_{\cal WXY}.
    \]
    Thus, for $t<T$, $\ket{\psi_t^f({\cal A})}$ is the state immediately before the $(t+1)$-st query, while    $\ket{\psi_T^f({\cal A})}$ is the final state of the algorithm.

    For an input distribution $\delta$ on $[N]^M$, define the corresponding purified joint state by
    \[
        \ket{\psi_t({\cal A},\delta)} := \sum_{f\in[N]^M}\sqrt{\delta(f)}\ket{\psi_t^f({\cal A})}_{\cal WXY}\ket f_{\cal I}.
    \]
    Equivalently,
    \[
        \ket{\psi_t({\cal A},\delta)} = U_t{\cal O}U_{t-1}{\cal O}\cdots{\cal O}U_0\ket{0}_{\cal WXY}\ket{\delta}_{\cal I}.
    \]

    Finally, we define the reduced state of the input register,
    \[
        \rho_{\cal I}^t({\cal A},\delta) := \Tr_{\cal WXY}\left[\ket{\psi_t({\cal A},\delta)}\bra{\psi_t({\cal A},\delta)}\right].
    \]
\end{definition}
In the definition of the joint state $\ket{\psi_t({\cal A},\delta)}$, the unitaries $U_0,\dots,U_t$ act on a larger Hilbert space than originally defined, but each operator is implicitly understood to be tensored with the identity operator on ${\cal I}$.

Observe from the definitions in \defin{algorithm} that the reduced state of the input register can alternatively be written as 
\begin{equation}\label{eq:input-gram}
    \rho_{\cal I}^t({\cal A},\delta)=\sum_{f,g \in [N]^M}\sqrt{\delta(f)\delta(g)}\braket{\psi_t^g({\cal A})}{\psi_t^f({\cal A})}\ketbra{f}{g}.
\end{equation}

In this work, we consider search problems for which an input may have multiple valid outputs, or possibly no valid output at all. We therefore use \textit{average-case} quantum query complexity, defined with respect to an input distribution $\delta$, rather than the \textit{worst-case} quantum query complexity.
\begin{definition}[$\epsilon$-error Average-Case Quantum Query Complexity]\label{def:complex-gen}
    Let $\Sigma$ be a finite set of possible outputs, and let ${\sf F}:[N]^M\rightarrow 2^\Sigma$ be a search problem, where ${\sf F}(f)\subseteq\Sigma$ denotes the set of valid outputs on input $f$.

    For an input distribution $\delta$ on $[N]^M$, the \emph{$\epsilon$-error average-case quantum query complexity} of ${\sf F}$ with respect to $\delta$, is the minimum number of queries needed by any quantum query algorithm ${\cal A}$ such that 
    \[
    \Pr_{f\sim\delta} \bigl[{\cal A}\text{ outputs some }z\in{\sf F}(f)\bigr] \geq 1-\epsilon.
    \]
    Here the probability is over both the choice of $f\sim\delta$ and the measurement outcomes of ${\cal A}$.
\end{definition}

This choice of metric is discussed in depth in~\cite{jeffery2025compressed}. For collision finding, without the promise that the input contains a collision, no algorithm can solve the corresponding search relation on every input. This is why we work with average-case complexity under the uniform input distribution.

For collision finding, we take 
\[
\Sigma_{\rm coll} := \{(x_1,x_2,y)\in[M]^2 \times [N]:x_1<x_2\}
\]
and define
\begin{equation}\label{eq:collision-relation}
    {\sf Coll}(f) := \{(x_1,x_2,y)\in\Sigma_{\rm coll}:f(x_1)=f(x_2)=y\}.
\end{equation}
Thus, ${\sf Coll}(f)$ is the set of valid collision certificates of $f$, and it may be empty if $f$ is injective.

This formulation is asymptotically equivalent, up to one additional query and $\lceil\log_2N\rceil$ additional qubits, to the usual formulation where the algorithm only has to output the collision pair $(x_1,x_2)$, as the algorithm may query $f(x_1)$ and output the resulting common value $y=f(x_1)$. Thus the two formulations have the same asymptotic query and space complexities. This alternative formulation is more convenient in this work  because it interfaces directly with the compressed oracle readout bound in \lem{collision-success}.

\subsection{Space complexity}
In this work, by quantum \emph{time-space lower bounds} we mean lower bounds that relate the \emph{query complexity} $T$ of an algorithm, as in \defin{algorithm}, to its \emph{space complexity} $S$. Such bounds show that an algorithm solving the problem cannot simultaneously use few queries and little space.

This reflects the standard convention in the query-complexity model that each query takes one unit of time. Consequently, a query lower bound also yields a time lower bound, relativised to oracle access to the input. 

The space complexity $S$ is the number of qubits in the algorithm's workspace registers (i.e.\ the qubits on which the circuit operates throughout the computation). Equivalently, since ${\cal H}_{\cal WXY}$ denotes the algorithm's Hilbert space, 
\[
\dim({\cal H}_{\cal WXY}) \leq 2^S.
\]

Given our space constraints, the principle of deferred measurements cannot be invoked, since the usual coherent simulation that retains measurement outcomes in additional registers may increase the workspace. The recent literature on this topic~\cite{GirishR22,Zhandry24} shows that it is possible to differ intermediate measurements at the cost of either a larger space complexity or time complexity, and do not justify assuming that intermediate measurements can be postponed while preserving both our query and space bounds.

For simplicity, we first assume in this paper that all transformation are unitaries and that the algorithms makes no intermediate measurements. In \sec{intermediate-measurements}, we extend our proofs to intermediate measurements, under a symmetry condition on the measurements themselves.

\section{The compressed oracle technique for collision finding}\label{sec:compressed-oracle}

\subsection{The compressed oracle technique}
In the compressed oracle technique~\cite{zhandry2019record}, also known as the recording query technique, the input distribution $\delta$ is initialised to the uniform distribution over all functions from $[M]$ to $[N]$, which we denote by ${\sf U}$:
\begin{equation}\label{eq:deltacomp}
    \ket{{\sf U}}_{\mathcal{I}} \coloneq \frac{1}{\sqrt{N^M}} \sum_{f \in [N]^M} \ket{f}_{\mathcal{I}} = \bigotimes_{x\in[M]}\bigg(\frac{1}{\sqrt N}\sum_{y\in[N]}\ket y_{{\cal I}_x}\bigg).
\end{equation}
This construction also extends to product distributions, although we do not need that generality here. Such an extension appears in~\cite{hamoudi2020quantum}. For background on the technique, see~\cite{zhandry2019record,hamoudi2020quantum,chung2021compressed}. 

The input register ${\cal I}$ holding a computational basis state $\ket{f}_{\cal I}$, where $f \in [N]^M$, is the tensor product of the function values of $f$ for the different values of $x \in [M]$:
\[\ket{f}_{\cal I} = \bigotimes_{x \in [M]}\ket{f(x)}_{{\cal I}_x}.\]

We enlarge the Hilbert space of every cell ${\cal I}_x$ to $\mathbb C[[N] \cup \{\bot\}]$. We continue to write $\ket f$ for a computational basis state in $\mathbb C[[N] \cup \{\bot\}]^{\otimes M}$, so that $f(x)$ may equal $\bot$. For every $x \in [M]$, define the isometry
\begin{align*}
    \mathsf{Comp}_x:\mathbb C[N]\longrightarrow\mathbb C[[N] \cup \{\bot\}],\qquad \mathsf{Comp}_x :=\ket{\bot}\bra{\widehat{0}} +\sum_{z \in [N] \setminus \{0\}}\proj{\widehat z}.
\end{align*}
Thus, the uniform state $\ket{\widehat0}$, which represents having no information about the value in cell ${\cal I}_x$, is mapped to $\ket\bot$, while every nonzero Fourier state remains unchanged. In particular,
\[
\mathsf{Comp}_x^\dagger\mathsf{Comp}_x=\id_{\mathbb C[N]}, \qquad \mathsf{Comp}_x\mathsf{Comp}_x^\dagger =\id_{\mathbb C[[N] \cup \{\bot\}]}-\proj{\widehat0}.
\]
Taking the tensor product over $x \in [M]$ and extending it by the identity on the algorithm registers gives the isometry
\begin{equation}\label{eq:comp}
    \mathsf{Comp}:= \id_{\cal WXY} \otimes \bigotimes_{x \in [M]} \mathsf{Comp}_x.
\end{equation}

Let $\widetilde{\cal O}$ be the unitary \textit{recording query operator} obtained by extending the restriction of $\mathsf{Comp}{\cal O}\mathsf{Comp}^{\dagger}$ to $\operatorname{im}(\mathsf{Comp})$ locally: it acts only on the queried database cell and acts as the identity when the ${\cal Y}$ register is $\ket{\widehat0}$. Its explicit basis action is given in \lem{oracle}. This extension preserves $\operatorname{im}(\mathsf{Comp})$ and satisfies
\begin{equation}\label{eq:recording-prop}
   \widetilde{\cal O}\Pi_{\sf Comp} = \Pi_{\sf Comp}\widetilde{\cal O} =\mathsf{Comp}{\cal O}\mathsf{Comp}^\dagger, \qquad \widetilde{\cal O}\mathsf{Comp}=\mathsf{Comp}{\cal O},
\end{equation}
where $\Pi_{\sf Comp}:=\mathsf{Comp}\mathsf{Comp}^\dagger$ is the orthogonal projector onto $\operatorname{im}(\mathsf{Comp})$; its explicit form appears in \eq{image-iso}.

\begin{remark}
    Our formulation differs slightly from the recording-query formalism of Hamoudi and Magniez~\cite{hamoudi2020quantum}. There, the corresponding change of representation is implemented by a unitary on the enlarged space $\mathbb C[[N]\cup\{\bot\}]$: locally, it exchanges $\ket{\bot}$ with $\ket{\widehat 0}$ and fixes every $\ket{\widehat z}$ with $z\neq0$. The map $\mathsf{Comp}_x$  above is precisely the restriction of this unitary to the original subspace $\mathbb C[N]$. Thus, the two formulations give equivalent descriptions of the recording query dynamics.

    We instead follow the isometric viewpoint of Jeffery and Zur~\cite{jeffery2025compressed}, in which $\mathsf{Comp}$ maps the standard input space into the enlarged database space and one explicitly keeps track of its image $\operatorname{im}(\mathsf{Comp})$. Although the two viewpoints are equivalent at the level of the recording query dynamics, making this
    image explicit is convenient for our space-sensitive analysis. In particular, all compressed states remain in $\operatorname{im}(\mathsf{Comp})$, and later, after restricting to the occupied database cells, this image condition becomes the Fourier-space constraint that enters our representation-theoretic analysis of collision-free databases.
\end{remark}

In the compressed oracle technique, we study a modified version of the joint state $\ket{\psi_t({\cal A},\delta)}$. Since the rest of this work only considers the case $\delta={\sf U}$, we omit the distribution parameter and simply write $\ket{\psi_t({\cal A})}$. In this modified state, each oracle call is replaced by $\widetilde{\cal O}$ and the input is initialised to $\ket{\bot^M}$ instead of $\ket{\sf U}$:
\[ 
\ket{\widetilde{\psi}_t({\cal A})} = U_t\widetilde{\cal O}U_{t-1}\widetilde{\cal O}\dots \widetilde{\cal O}U_0\ket{0}_{\cal WXY}\ket{\bot^M}_{\cal I}.
\] 
Note that since $\ket{0}_{\cal WXY}\ket{\bot^M}_{\cal I}$ lies in $\operatorname{im}(\mathsf{Comp})$, it is immediate from \eq{recording-prop} that all intermediate compressed states remain in this image.

These two perspectives are almost identical:
\begin{lemma}[Theorem 3.3 in~\cite{hamoudi2020quantum}]
    Fix a $T$-query algorithm ${\cal A}$ with inter-query unitaries $U_0,\dots,U_T$. Then, for every $t\in[T+1]$, the states 
    \begin{align*}
        &\ket{\psi_t({\cal A})} = U_t{\cal O}U_{t-1}{\cal O}\dots {\cal O}U_0\ket{0}_{\cal WXY}\ket{\sf U}_{\cal I}, &&\ket{\widetilde{\psi}_t({\cal A})} = U_t\widetilde{\cal O}U_{t-1}\widetilde{\cal O}\dots \widetilde{\cal O}U_0\ket{0}_{\cal WXY}\ket{\bot^M}_{\cal I}.
    \end{align*}
    obtained from the standard and compressed oracle models, respectively, satisfy $$\mathsf{Comp}\ket{\psi_t({\cal A})} = \ket{\widetilde{\psi}_t({\cal A})}.$$
\end{lemma}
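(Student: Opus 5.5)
The plan is to prove the identity by induction on $t$, using the intertwining relation $\widetilde{\cal O}\mathsf{Comp}=\mathsf{Comp}{\cal O}$ from \eq{recording-prop} together with the fact that the algorithm's unitaries commute with $\mathsf{Comp}$.

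\textbf{Base case.} We first check that $\mathsf{Comp}\ket{0}_{\cal WXY}\ket{\sf U}_{\cal I}=\ket{0}_{\cal WXY}\ket{\bot^M}_{\cal I}$. By \eq{deltacomp}, $\ket{\sf U}_{\cal I}$ factorises as $\bigotimes_{x}\frac1{\sqrt N}\sum_y\ket y_{{\cal I}_x}$, and each factor is exactly $\ket{\widehat 0}$. By definition $\mathsf{Comp}_x\ket{\widehat0}=\ket\bot$, since the remaining terms $\proj{\widehat z}$ with $z\neq0$ annihilate $\ket{\widehat 0}$ by orthonormality of the Fourier basis. Because $\mathsf{Comp}$ acts as the identity on ${\cal WXY}$, the claim follows.

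\textbf{Inductive step.} Each $U_i$ acts only on ${\cal WXY}$ (tensored with $\id_{\cal I}$), while $\mathsf{Comp}=\id_{\cal WXY}\otimes\bigotimes_x\mathsf{Comp}_x$ acts only on ${\cal I}$. Hence $\mathsf{Comp}\,U_i=U_i\,\mathsf{Comp}$, where on the right $U_i$ is understood as $U_i\otimes\id$ on the enlarged input space. Assume $\mathsf{Comp}\ket{\psi_{t}}=\ket{\widetilde\psi_{t}}$. Then
\[
\mathsf{Comp}\ket{\psi_{t+1}}=\mathsf{Comp}\,U_{t+1}{\cal O}\ket{\psi_t}=U_{t+1}\mathsf{Comp}\,{\cal O}\ket{\psi_t}=U_{t+1}\widetilde{\cal O}\,\mathsf{Comp}\ket{\psi_t}=U_{t+1}\widetilde{\cal O}\ket{\widetilde\psi_t}=\ket{\widetilde\psi_{t+1}}.
\]
The base case gives $t=0$, since $\mathsf{Comp}\,U_0\ket{0}\ket{\sf U}=U_0\ket0\ket{\bot^M}$.

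\textbf{Main obstacle.} The only substantive ingredient is the relation $\widetilde{\cal O}\mathsf{Comp}=\mathsf{Comp}{\cal O}$. The paper states it in \eq{recording-prop} as a property of the construction, so I would take it as given.

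If one wanted to justify it directly, it follows from the definition of $\widetilde{\cal O}$ as an extension of $\mathsf{Comp}{\cal O}\mathsf{Comp}^\dagger$ from $\operatorname{im}(\mathsf{Comp})$. Since $\mathsf{Comp}^\dagger\mathsf{Comp}=\id$, we get $\widetilde{\cal O}\mathsf{Comp}=\mathsf{Comp}{\cal O}\mathsf{Comp}^\dagger\mathsf{Comp}=\mathsf{Comp}{\cal O}$. The part needing care is that the local extension is well defined and unitary. This reduces to the observation that ${\cal O}$ acts in the Fourier basis of ${\cal Y}$ as a phase $\omega_N^{y f(x)}$ on the single cell ${\cal I}_x$. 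That phase is a permutation of that cell's Fourier basis (shifting $\widehat z$ to $\widehat{z-y}$ up to sign conventions), which is conjugated by $\mathsf{Comp}_x$ into a unitary on $\mathbb C[[N]\cup\{\bot\}]$ that can be extended by the identity off the image.
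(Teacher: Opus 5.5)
Your proof is correct and is essentially the argument behind the cited result. The paper takes $\widetilde{\cal O}\mathsf{Comp}=\mathsf{Comp}{\cal O}$ from \eq{recording-prop} as part of the construction, and the identity then follows exactly as you argue: $\mathsf{Comp}\ket{\sf U}_{\cal I}=\ket{\bot^M}_{\cal I}$ because each factor of $\ket{\sf U}$ is $\ket{\widehat 0}$, $\mathsf{Comp}$ commutes with each $U_i$, and one inducts on $t$ (your side justification of the local extension is also consistent with \lem{oracle}).
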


As in \defin{algorithm}, we write 
\[
\widetilde{\rho}_{\cal I}^t({\cal A}) = \Tr_{\cal WXY}\left[\ket{\widetilde{\psi}_t({\cal A})}\bra{\widetilde{\psi}_t({\cal A})}\right]
\]
for the reduced state of the input register in the compressed oracle model.

The reason for studying the compressed oracle model is that by correctly tracking the $\bot$ symbols in the ${\cal I}$ register of $\ket{\widetilde{\psi}_t({\cal A})}$ with every query, we are able to record what the algorithm has learned about the input. First of all, when applied to a basis state $\ket{w,x,\widehat{p}}_{\cal WXY}\ket{f}_{\cal I}$, $\widetilde{\cal O}$ only changes the value of $f(x)$ stored in register ${\cal I}_x$:
\begin{lemma}[Lemma~4.1 in~\cite{hamoudi2020quantum}]\label{lem:oracle}
    Fix $p\in[N]\setminus\{0\}$ and apply the recording query operator $\widetilde{\cal O}$ to a basis state $\ket{w,x,\widehat{p}}_{\cal WXY}\ket{f}_{\cal I}$ with $\ket{f} \in \mathbb{C}[[N] \cup \{\bot\}]^{M}$. Then the content of the cell register ${\cal I}_x$ transforms as
    \[
    \ket{f(x)}_{{\cal I}_x}\longmapsto \sum_{y\in [N] \cup \{\bot\}} \gamma^{(p)}_{y,f(x)} \ket{y}_{{\cal I}_x},
    \]
    where, for $y,z\in[N]$ with $y\neq z$ 
    \begin{align*}
        &\gamma^{(p)}_{y,\bot}=\frac{\omega_N^{py}}{\sqrt{N}}, &&\gamma^{(p)}_{\bot,\bot}=0, &&\gamma^{(p)}_{\bot,z}=\frac{\omega_N^{pz}}{\sqrt{N}}.
    \end{align*}
    \begin{align*}
        &\gamma^{(p)}_{y,z}=\frac{1 -\omega_N^{py} - \omega_N^{pz}}{N}, &&\gamma^{(p)}_{y,y}=\frac{1+\omega_N^{py}(N-2)}{N}.
    \end{align*}
    
    If $p = 0$, then none of the registers are changed.
\end{lemma}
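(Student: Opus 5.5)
The computation will be done directly, one cell at a time. By \eq{recording-prop}, $\widetilde{\cal O}$ restricted to $\operatorname{im}(\mathsf{Comp})$ equals $\mathsf{Comp}\,{\cal O}\,\mathsf{Comp}^\dagger$. On the basis state $\ket{w,x,\widehat p}$, the purified oracle ${\cal O}$ acts only on cell ${\cal I}_x$ and multiplies it by the diagonal phase operator
\[
P_p:=\sum_{y\in[N]}\omega_N^{py}\proj y.
\]
So the whole statement reduces to the single-cell operator $\mathsf{Comp}_x P_p \mathsf{Comp}_x^\dagger$ on $\mathbb C[[N]\cup\{\bot\}]$, together with the local extension described before \eq{recording-prop}. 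That extension acts only on the queried cell and is the identity when ${\cal Y}$ holds $\ket{\widehat 0}$. For $p=0$ we have $P_0=\id$, so nothing changes, which gives the last sentence of the lemma.

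For $p\neq0$, I first note that $P_p$ is the shift operator in the Fourier basis: $P_p\ket{\widehat z}=\ket{\widehat{z-p}}$ with indices mod $N$, by \defin{fourier}. Next I write the local operator explicitly. Since $\mathsf{Comp}_x\mathsf{Comp}_x^\dagger$ is the projector onto $\mathrm{span}\{\ket\bot\}\oplus\mathrm{span}\{\ket{\widehat z}:z\neq0\}$, I would take the local extension to be
\[
\widetilde{\cal O}_x=\mathsf{Comp}_x P_p\mathsf{Comp}_x^\dagger+\proj{\widehat0}.
\]
Here $\ket{\widehat 0}$, viewed inside $\mathbb C[[N]\cup\{\bot\}]$, is the leftover direction orthogonal to the image. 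I would check that this is the extension the paper intends; the natural alternative is the Hamoudi--Magniez unitary that swaps $\ket\bot$ and $\ket{\widehat0}$, conjugated around $P_p$, and it agrees with this choice on $\operatorname{im}(\mathsf{Comp})$. The point needing care is which vector completes the unitary. The cleanest justification is the Hamoudi--Magniez form $\widetilde{\cal O}_x=R P_p R$, where $R$ is the unitary swapping $\ket\bot\leftrightarrow\ket{\widehat0}$ and $P_p$ is extended by $P_p\ket\bot=\ket\bot$. I would adopt this form and verify that it restricts correctly.

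I then compute the matrix entries $\gamma^{(p)}_{y,z}=\bra y RP_pR\ket z$ by cases. For $z=\bot$: $R\ket\bot=\ket{\widehat0}$, then $P_p\ket{\widehat 0}=\frac1{\sqrt N}\sum_y\omega_N^{py}\ket y$, which is orthogonal to $\ket{\widehat0}$ because $p\neq0$. So $R$ fixes it, giving $\gamma_{y,\bot}=\omega_N^{py}/\sqrt N$ and $\gamma_{\bot,\bot}=0$. For $z\in[N]$: write
\[
R\ket z=\ket z-\tfrac1{\sqrt N}\ket{\widehat0}+\tfrac1{\sqrt N}\ket\bot .
\]
Applying $P_p$ gives $\omega_N^{pz}\ket z-\frac1{\sqrt N}P_p\ket{\widehat0}+\frac1{\sqrt N}\ket\bot$. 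Applying $R$ again sends $\ket\bot\mapsto\ket{\widehat0}$ and $\ket z\mapsto R\ket z$, and fixes $P_p\ket{\widehat0}$. The $\ket\bot$ coefficient is then $\omega_N^{pz}/\sqrt N$. The $\ket y$ coefficient is
\[
\omega_N^{pz}\Bigl(\delta_{yz}-\tfrac1N\Bigr)-\frac{\omega_N^{py}}N+\frac1N,
\]
which gives $\gamma_{y,z}=(1-\omega_N^{py}-\omega_N^{pz})/N$ for $y\neq z$ and $\gamma_{y,y}=(1+(N-2)\omega_N^{py})/N$.

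The only real obstacle is fixing the extension convention so that it matches \eq{recording-prop}. Everything else is the routine bookkeeping above.
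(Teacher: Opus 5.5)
Your direct single-cell computation is correct and matches all the stated coefficients. It is the standard argument: the paper does not reprove this lemma but imports it from Hamoudi--Magniez, where it comes from exactly this expansion of $RP_pR$, with $R$ swapping $\ket{\bot}$ and $\ket{\widehat 0}$ and with $P_p\ket{\bot}=\ket{\bot}$.

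Your worry about the extension convention also resolves cleanly. Your two candidate extensions coincide identically, since $RP_pR\ket{\widehat 0}=RP_p\ket{\bot}=\ket{\widehat 0}$ gives $RP_pR=\mathsf{Comp}_xP_p\mathsf{Comp}_x^\dagger+\proj{\widehat 0}$. Any unitary extension that preserves $\operatorname{im}(\mathsf{Comp})$ can differ from this only by a phase on $\ket{\widehat 0}$, and the stated formulas fix that phase to $1$.
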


Since we start with the input initialised to $\ket{\bot^M}_{\cal I}$, \lem{oracle} implies the following consequence, which is the cornerstone of the compressed oracle technique:
\begin{lemma}[Fact~3.2 in~\cite{hamoudi2020quantum}]\label{lem:size}
    For any $T$-query algorithm ${\cal A}$ and every $t\in[T+1]$, the state $\ket{\widetilde{\psi}_t({\cal A})}$ is a linear combination of basis states $\ket{w,x,\widehat{p}}_{\cal WXY}\ket{f}_{\cal I}$ where $f$ contains at most $t$ entries different from $\bot$.
\end{lemma}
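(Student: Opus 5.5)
The plan is induction on $t$, using the explicit single-query action of \lem{oracle}. At $t=0$ the state is $\ket{0}_{\cal WXY}\ket{\bot^M}_{\cal I}$, and $f=\bot^M$ has zero non-$\bot$ entries. Now fix $t$, and suppose $\ket{\widetilde{\psi}_t({\cal A})}$ is a combination of basis states whose database has at most $t$ non-$\bot$ entries. The operator $U_{t+1}$ acts only on ${\cal WXY}$, tensored with the identity on ${\cal I}$. Expanding $U_{t+1}$ in the basis $\ket{w,x,\widehat p}_{\cal WXY}$ therefore only reshuffles the ${\cal WXY}$ part and leaves the ${\cal I}$ register's computational basis states $\ket f$ untouched. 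So $U_{t+1}$ preserves the span of basis states with at most $t$ non-$\bot$ entries, and it suffices to show that $\widetilde{\cal O}$ raises this count by at most one.

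To see this, write any basis state in the span as $\ket{w,x,\widehat p}\ket f$ with the ${\cal Y}$ register in the Fourier basis.

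\textbf{Case $p=0$.} By \lem{oracle}, $\widetilde{\cal O}$ acts as the identity, so the count is unchanged.

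\textbf{Case $p\neq0$.} By \lem{oracle}, only the cell register ${\cal I}_x$ changes. It is mapped to a superposition $\sum_y\gamma^{(p)}_{y,f(x)}\ket y$, and every other cell ${\cal I}_{x'}$ with $x'\neq x$ keeps its content $f(x')$. Each resulting basis state $\ket{f'}$ agrees with $f$ outside $x$. Hence the number of non-$\bot$ entries of $f'$ is at most (that of $f$) $+1$, with the increase occurring only when $f(x)=\bot$ and $y\neq\bot$.

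By linearity, $\widetilde{\cal O}U_t\cdots$ applied to the span of states with at most $t$ non-$\bot$ entries lands in the span with at most $t+1$ such entries. Applying $U_{t+1}$ keeps it there, which closes the induction.

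The only subtle point is the locality of $\widetilde{\cal O}$: that it acts nontrivially only on ${\cal I}_x$, controlled by ${\cal X}$ and the Fourier value of ${\cal Y}$. This holds by construction, since the paper defines $\widetilde{\cal O}$ as the local extension described before \lem{oracle}, and \lem{oracle} states its basis action explicitly. I therefore expect no real obstacle beyond correctly bookkeeping the expansion in the Fourier basis of ${\cal Y}$.
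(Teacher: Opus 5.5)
Your proof is correct and follows the argument the paper has in mind. The paper states this as a direct consequence of \lem{oracle} and the initial state $\ket{\bot^M}_{\cal I}$: inter-query unitaries act only on ${\cal WXY}$, and each recording query changes only cell ${\cal I}_x$, so the number of non-$\bot$ entries grows by at most one per query. One small slip: the $t=0$ state is $U_0\ket{0}_{\cal WXY}\ket{\bot^M}_{\cal I}$, not $\ket{0}_{\cal WXY}\ket{\bot^M}_{\cal I}$, but your own observation that unitaries on ${\cal WXY}$ leave the database untouched covers this immediately.
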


\noindent For any $\ket{f} \in \mathbb{C}[[N] \cup \{\bot\}]^{\otimes M}$, we write $\abs{f} = s$ if $f$ contains precisely $s$ entries different from $\bot$.

\subsection{Application to collision finding}\label{sec:collision-progress}

We define the following projectors by giving the computational basis states on which they project:
\begin{itemize}
    \item $\Pi_{\geq1}$ and $\Pi_{=0}$: all basis states $\ket{w,x,\widehat p}_{\cal WXY}\ket f_{\cal I}$ such that $f$ does or does not contain a collision (excluding $\bot$), respectively.
    \item $\Lambda_s$, where $s\geq0$: all basis states $\ket{w,x,\widehat p}_{\cal WXY}\ket f_{\cal I}$ such that $f$ contains exactly $s$ non-$\bot$ entries.
\end{itemize}

Recall from \eq{recording-prop} the projection
\begin{equation}\label{eq:image-iso}
    \Pi_{\sf Comp}:= \mathsf{Comp}\mathsf{Comp}^\dagger = \id_{\cal WXY}\otimes\bigotimes_{x\in[M]}(\id-\proj{\widehat{0}}),
\end{equation}
i.e.~the orthogonal projector onto the image of the isometry $\mathsf{Comp}$ from \eq{comp}. Here each identity inside the tensor product acts on $\mathbb C[[N] \cup \{\bot\}]$.

Let $\eta=(w,x,\widehat p)$ denote a basis value of the algorithm registers, using the chosen computational bases of ${\cal W}$ and ${\cal X}$ and the Fourier basis of ${\cal Y}$. Let $I\subseteq[M]$. We write $\Lambda_{\eta,I} \preceq \Lambda_{\abs{I}}$ for the orthogonal projector onto all basis states with algorithm registers $\eta$ and occupied set $I$. Suppose that $I=\{i_1<\cdots<i_s\}$, and let ${\cal H}_{\eta,I}$ denote the image of $\Lambda_{\eta,I}$. 

\paragraph{Progress measure.}

We follow the standard compressed oracle progress argument for finding collisions~\cite{liu2019finding,hamoudi2020quantum}. Recall that, for $t<T$, $\ket{\widetilde{\psi}_t}$ denotes the state immediately before the $(t+1)$-st query, while $\ket{\widetilde{\psi}_T}$ is the final state, and that
\begin{equation}\label{eq:compress-t}
    \ket{\widetilde{\psi}_t}=\mathsf{Comp}\ket{\psi_t}, \qquad \Pi_{\sf Comp}\ket{\widetilde{\psi}_t}=\ket{\widetilde{\psi}_t},
\end{equation}
and define
\[
\Delta_t:=\norm{\Pi_{\geq1}\ket{\widetilde{\psi}_t}}.
\]

The unitary $U_{t+1}$ applied after the $(t+1)$-st query acts trivially on the database register. Therefore,
\[
\Delta_{t+1}=\norm{\Pi_{\geq1}U_{t+1}\widetilde{\cal O}\ket{\widetilde{\psi}_t}}=\norm{\Pi_{\geq1}\widetilde{\cal O}\ket{\widetilde{\psi}_t}}.
\]
Separating the part of the state that already contains a collision and using that $\widetilde{\cal O}$ is unitary, we obtain
\begin{equation}\label{eq:progress-split}
    \Delta_{t+1} \leq \Delta_t + \norm{\Pi_{\geq1}\widetilde{\cal O}\Pi_{=0}\ket{\widetilde{\psi}_t}}.
\end{equation}

We first need a tool to upper bound the maximal progress made by a single query. This will be accomplished later with the help of the following lemma.
\begin{lemma}\label{lem:one-query}
    Let $\phi$ be a not necessarily normalised collision-free vector, i.e. $\Pi_{=0}\phi = \phi$. Then
    \begin{equation}\label{eq:one-query-collision}
        \norm{\Pi_{\geq1}\widetilde{\cal O}\phi}\leq\frac4{\sqrt N}\bigg(\sum_{\eta}\sum_{I\subseteq[M]}\abs{I}\norm{\Lambda_{\eta,I}\phi}^2\bigg)^{1/2}.
    \end{equation}
\end{lemma}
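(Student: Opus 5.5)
Decompose $\phi$ along the basis of algorithm registers and occupied sets, $\phi=\sum_{\eta,I}\Lambda_{\eta,I}\phi$. The recording query operator $\widetilde{\cal O}$ acts only on the queried cell ${\cal I}_x$, where $x$ is the value recorded in $\eta=(w,x,\widehat p)$, and leaves $\eta$ unchanged. The first step is therefore to write $\widetilde{\cal O}\Lambda_{\eta,I}\phi$ using \lem{oracle}. If $p=0$ nothing changes, so $\Pi_{\geq1}\widetilde{\cal O}\Lambda_{\eta,I}\phi=0$ because $\phi$ is collision-free. For $p\neq0$, a collision can only appear when the new value $y$ in cell $x$ equals some value $f(x')$ with $x'\in I\setminus\{x\}$. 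All other branches either stay collision-free or delete the cell. The output for distinct $\eta$ lies in orthogonal subspaces, since $\eta$ is preserved. Hence
\[
\norm{\Pi_{\geq1}\widetilde{\cal O}\phi}^2=\sum_\eta\Big\|\Pi_{\geq1}\widetilde{\cal O}\sum_I\Lambda_{\eta,I}\phi\Big\|^2 .
\]
It then suffices to prove, for each fixed $\eta$ with query position $x$, the bound $\norm{\Pi_{\geq1}\widetilde{\cal O}\phi_\eta}^2\le \frac{16}{N}\sum_I\abs{I}\norm{\Lambda_{\eta,I}\phi}^2$, where $\phi_\eta$ is the $\eta$-component.

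Fix $\eta$ and group the basis databases $f$ in the support of $\phi_\eta$ by their restriction $g$ to $[M]\setminus\{x\}$. For fixed $g$, the cell ${\cal I}_x$ holds a vector $\sum_{z\in[N]\cup\{\bot\}}\alpha_z\ket z$. Because $\phi$ is collision-free, $g$ is injective on its support and $\alpha_z=0$ for $z\in\mathrm{Im}(g)$. Different $g$ give orthogonal outputs, because $\widetilde{\cal O}$ does not touch $g$. The collision part of the output is the projection of $\widetilde{\cal O}$ applied to the cell vector onto $\mathrm{span}\{\ket y: y\in\mathrm{Im}(g)\}$, a set of size $k:=\abs{\mathrm{Im}(g)}$.

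With $P_g$ the projector onto $\mathrm{span}\{\ket y:y\in\mathrm{Im}(g)\}$, split the contribution into two parts. For the $\bot$ component, $\norm{P_g\sum_y\gamma_{y,\bot}\alpha_\bot\ket y}^2=k\abs{\alpha_\bot}^2/N$. For the $z\in[N]\setminus\mathrm{Im}(g)$ components, $\gamma_{y,z}=(1-\omega^{py}-\omega^{pz})/N$ for $y\neq z$. Write $\sum_z\gamma_{y,z}\alpha_z=\frac1N(1-\omega^{py})\sum_z\alpha_z-\frac1N\sum_z\omega^{pz}\alpha_z$. By Cauchy--Schwarz each sum is bounded by $\sqrt N\norm{\alpha}$, so every entry $y\in\mathrm{Im}(g)$ gets amplitude at most $3\norm{\alpha}/\sqrt N$. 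Summing over the $k$ labels gives squared norm at most $9k\norm{\alpha}^2/N$. Combining the two parts with $(a+b)^2\le2a^2+2b^2$ yields a bound $\le 16\,k\,\norm{\alpha}^2/N$, counting constants loosely. Since $k\le\abs{I}$ for every $I=\mathrm{supp}(g)\cup\{x\}$ or $\mathrm{supp}(g)$ appearing in the branch, this is at most $\frac{16}{N}\sum_I\abs I\norm{\Lambda_{\eta,I}\phi_g}^2$. Summing over $g$ and then over $\eta$, and taking square roots, gives the claimed $\frac4{\sqrt N}$ bound.

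The main obstacle is the constant bookkeeping in the occupied-cell case. The coefficients $\gamma_{y,z}$ are not small individually in a way that sums trivially, so the amplitude must be handled as a linear functional of $\alpha$. One might worry that the Fourier-type sum $\sum_z\omega^{pz}\alpha_z$ could also be large; the fix is exactly the crude Cauchy--Schwarz estimate above, which is tight enough. Care is also needed to keep track of $k$ versus $\abs I$ across the $\bot$ and non-$\bot$ branches, which differ by one occupied cell; using $k\le\abs I$ in both cases handles this.
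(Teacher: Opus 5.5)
Your proof is correct, but it is organized differently from the paper's. Apart from one constant slip, noted at the end, everything checks out: the orthogonality across $\eta$ and across $g$, the identification of $\Pi_{\geq1}$ with $P_g$ on cell $x$, the bounds $k\abs{\alpha_\bot}^2/N$ and $9k\norm{\alpha'}^2/N$, and the bookkeeping $k\le\abs I$ in both the $\bot$ and non-$\bot$ sectors.

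\textbf{The paper's route.} It decomposes $\phi$ into blocks $\phi_{\eta,y,I}$ according to the value $y$ stored in the queried cell $x$. Inside a fixed block, distinct input basis states give orthogonal outputs, because the input is recovered by resetting cell $x$ to the fixed $y$. So each block is bounded using only the entrywise estimates $\abs{\gamma_{z,\bot}}\le1/\sqrt N$ and $\abs{\gamma_{z,y_1}}\le3/N$, together with a count of at most $\abs I$ collision-creating labels. It then applies the triangle inequality over the $N+1$ values of $y$, followed by Cauchy--Schwarz. This costs a factor $\sqrt N$, which is exactly absorbed by the $3/N$ coefficients. Finally it combines the two pieces as $a+3b\le4\sqrt{a^2+b^2}$.

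\textbf{Your route.} You fix the contents $g$ of all other cells instead. Orthogonality across $g$ is then automatic, and the problem collapses to the single-cell map $P_g\widetilde{\cal O}_x$ with only $k$ relevant output rows. This is more local: it avoids the triangle inequality over $y$ entirely and treats interference among the $z$-components directly, via Cauchy--Schwarz on each row. The paper's block organization follows the standard recording-query analysis; yours is slightly sharper.

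\textbf{The slip.} Combining your two parts with $(a+b)^2\le2a^2+2b^2$ gives $\frac{2k}{N}\abs{\alpha_\bot}^2+\frac{18k}{N}\norm{\alpha'}^2$. That is constant $18$, not $16$ (take $\alpha_\bot=0$), so as written you would only get $\sqrt{18}/\sqrt N$. The fix stays inside your framework: apply Cauchy--Schwarz to each full row including the $\bot$ entry,
$$\abs{\beta_y}^2\le\Bigl(\abs{\gamma_{y,\bot}}^2+\sum_{z\in[N]\setminus\mathrm{Im}(g)}\abs{\gamma_{y,z}}^2\Bigr)\norm{\alpha}^2\le\frac{1+9}{N}\norm{\alpha}^2 .$$
Summing over the $k$ labels in $\mathrm{Im}(g)$ gives $\frac{10k}{N}\norm{\alpha}^2$. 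This is within the required $16$, so the lemma follows with the constant $\sqrt{10}<4$.
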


\begin{proof}

    The recording query operator $\widetilde{\cal O}$ acts as the identity when the ${\cal Y}$ register is in the state $\ket{\widehat{0}}$. This component cannot create a collision from a collision-free database. Hence, we may assume in the remainder for the proof that $\phi$ has no support on the subspace where $\ket{\widehat{p}}_{\cal Y} = \ket{\widehat{0}}_{\cal Y}$.

    For $y\in[N] \cup \{\bot\}$, define
    \[
    \Lambda_{\eta,y,I}:=\proj{y}_{{\cal I}_x}\Lambda_{\eta,I}
    \]
    and decompose
    \[
    \phi=\sum_{y\in[N] \cup \{\bot\}}\sum_{\eta,I}\phi_{\eta,y,I},\qquad \phi_{\eta,y,I}:=\Lambda_{\eta,y,I}\phi.
    \]
    Since $\phi$ is collision-free, $y=\bot$ implies $x\notin I$, while $y\in[N]$ implies $x\in I$.
    
    Fix a block $\phi_{\eta,\bot,I}$ and write $s=\abs{I}$. For a computational-basis state $\ket{\eta,f}$ in its support, \lem{oracle} gives
    \[
    \Pi_{\geq1}\widetilde{\cal O}\ket{\eta,f}=\sum_{j\in I}\gamma^{(p)}_{f(j),\bot}\ket{\eta,f_{x\leftarrow f(j)}},
    \]
    where $f_{x\leftarrow z}$ is obtained from $f$ by setting the value at $x$ equal to $z$. The output states on the right-hand side are pairwise orthogonal, since the values $f(j)$ are pairwise distinct. Moreover, outputs arising from different input basis states in the same block are orthogonal, since the input database is recovered by replacing the value at $x$ by $\bot$. Using $\abs{\gamma^{(p)}_{z,\bot}}\leq1/\sqrt N$ by \lem{oracle}, we obtain
    \begin{equation}\label{eq:one-query-bot}
        \norm{\Pi_{\geq1}\widetilde{\cal O}\phi_{\eta,\bot,I}}^2\leq\frac{s}{N}\norm{\phi_{\eta,\bot,I}}^2.
    \end{equation}
    
    Now fix $y_1\in[N]$. For a computational-basis state $\ket{\eta,f}$ in the support of $\phi_{\eta,y_1,I}$, a collision can only be created if the new value at $x$ agrees with the value at one of the other occupied positions. Hence,
    \[
    \Pi_{\geq1}\widetilde{\cal O}\ket{\eta,f}=\sum_{j\in I\setminus\{x\}}\gamma^{(p)}_{f(j),y_1}\ket{\eta,f_{x\leftarrow f(j)}}.
    \]
    The output states are again pairwise orthogonal. Outputs arising from different input basis states in the same block are also orthogonal, since the input database is recovered by replacing the value at $x$ by the fixed value $y_1$. Using $\abs{\gamma^{(p)}_{z,y_1}}\leq3/N$ for $z\neq y_1$ by \lem{oracle}, we obtain
    \begin{equation}\label{eq:one-query-y}
        \norm{\Pi_{\geq1}\widetilde{\cal O}\phi_{\eta,y_1,I}}^2\leq\frac{9(s-1)}{N^2}\norm{\phi_{\eta,y_1,I}}^2\leq\frac{9s}{N^2}\norm{\phi_{\eta,y_1,I}}^2.
    \end{equation}
    
    For fixed $y$, the outputs belonging to different pairs $(\eta,I)$ are orthogonal. Indeed, the algorithm registers are unchanged by the query. If $y=\bot$, then the final occupied set is $I\cup\{x\}$, from which $I$ can be recovered because $x$ is contained in $\eta$. If $y\in[N]$, then the occupied set remains equal to $I$. It follows from \eq{one-query-bot} that
    \[
    \norm{\Pi_{\geq1}\widetilde{\cal O}\sum_{\eta,I}\phi_{\eta,\bot,I}}\leq\frac1{\sqrt N}\bigg(\sum_{\eta,I}\abs{I}\norm{\phi_{\eta,\bot,I}}^2\bigg)^{1/2}.
    \]
    Similarly, for every fixed $y_1\in[N]$, \eq{one-query-y} gives
    \[
    \norm{\Pi_{\geq1}\widetilde{\cal O}\sum_{\eta,I}\phi_{\eta,y_1,I}}\leq\frac3N\bigg(\sum_{\eta,I}\abs{I}\norm{\phi_{\eta,y_1,I}}^2\bigg)^{1/2}.
    \]
    Using the triangle inequality over $y_1\in[N] \cup \{\bot\}$ and then Cauchy-Schwarz, we obtain
    \[
    \begin{split}
        \norm{\Pi_{\geq1}\widetilde{\cal O}\phi}
        &\leq\frac1{\sqrt N}\bigg(\sum_{\eta,I}\abs{I}\norm{\phi_{\eta,\bot,I}}^2\bigg)^{1/2}+\frac3N\sum_{y_1\in[N]}\bigg(\sum_{\eta,I}\abs{I}\norm{\phi_{\eta,y_1,I}}^2\bigg)^{1/2}\\
        &\leq\frac1{\sqrt N}\bigg(\sum_{\eta,I}\abs{I}\norm{\phi_{\eta,\bot,I}}^2\bigg)^{1/2}+\frac3{\sqrt N}\bigg(\sum_{y_1\in[N]}\sum_{\eta,I}\abs{I}\norm{\phi_{\eta,y_1,I}}^2\bigg)^{1/2}\\
        &\leq\frac4{\sqrt N}\bigg(\sum_{\eta,I}\abs{I}\norm{\Lambda_{\eta,I}\phi}^2\bigg)^{1/2}.\qedhere
    \end{split}
    \]
\end{proof}

\paragraph{From progress to success probability.}

We finally relate $\Delta_T$ to the actual success probability of the algorithm. Recall that $\sf U$ denotes the uniform distribution over functions $f:[M]\rightarrow[N]$ and that for the collision finding problem, the set of valid outputs is defined in \eq{collision-relation} as
\[
{\sf Coll}(f):=\{(x_1,x_2,y)\in[M]^2 \times [N]:x_1<x_2\text{ and }f(x_1)=f(x_2)=y\}.
\]
Let $p_{\rm succ}^{\sf U}$ denote the probability that the algorithm outputs a triple $(x_1,x_2,y)$ in ${\sf Coll}(f)$ on an input $f \sim {\sf U}$.

\begin{lemma}\label{lem:collision-success}
    Let ${\cal A}$ be a $T$-query quantum algorithm. Then,
    \[
    p_{\rm succ}^{\sf U}\leq\bigg(\Delta_T+\sqrt{\frac{2}{N}}\bigg)^2.
    \]
\end{lemma}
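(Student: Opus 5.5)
The plan is to follow the standard readout argument of Zhandry and of Hamoudi--Magniez. The success probability is $p_{\rm succ}^{\sf U}=\norm{\Pi_{\rm succ}\ket{\psi_T}}^2$, where $\Pi_{\rm succ}$ projects onto basis states $\ket{w,x,y}\ket{f}_{\cal I}$ whose output register (a designated part of ${\cal W}$, measured in the computational basis) contains $(x_1,x_2,y)$ with $f(x_1)=f(x_2)=y$. Since $\mathsf{Comp}$ is an isometry and $\ket{\widetilde\psi_T}=\mathsf{Comp}\ket{\psi_T}$, we have
\[
\norm{\Pi_{\rm succ}\ket{\psi_T}}=\norm{\mathsf{Comp}\,\Pi_{\rm succ}\,\mathsf{Comp}^\dagger\ket{\widetilde\psi_T}}.
\]
We then split $\ket{\widetilde\psi_T}=\Pi_{\geq1}\ket{\widetilde\psi_T}+\Pi_{=0}\ket{\widetilde\psi_T}$ and apply the triangle inequality. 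The first term has norm at most $\Delta_T$, since $\mathsf{Comp}\,\Pi_{\rm succ}\,\mathsf{Comp}^\dagger$ has norm at most $1$. It remains to show
\[
\norm{\mathsf{Comp}\,\Pi_{\rm succ}\,\mathsf{Comp}^\dagger\,\Pi_{=0}\ket{\widetilde\psi_T}}\leq\sqrt{2/N}.
\]

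To bound this second term, we decompose by the output value $(x_1,x_2,y)$ held in the algorithm register. These are orthogonal blocks preserved by all operators involved, because $\mathsf{Comp}$ acts only on ${\cal I}$. For a fixed triple, $\Pi_{\rm succ}$ acts on cells ${\cal I}_{x_1}\otimes{\cal I}_{x_2}$ as $\proj{y}\otimes\proj{y}$. Conjugating by $\mathsf{Comp}_{x_1}\otimes\mathsf{Comp}_{x_2}$ gives $P\otimes P$ with $P=\mathsf{Comp}\proj{y}\mathsf{Comp}^\dagger=\ketbra{u}{u}$, where
\[
\ket u=\mathsf{Comp}\ket y=\frac1{\sqrt N}\ket\bot+\Bigl(\ket y-\frac1{\sqrt N}\ket{\widehat0}\Bigr).
\]
On a collision-free vector restricted to this block, the cells $x_1$ and $x_2$ cannot both equal $y$. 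We therefore expand $P\otimes P$ over the computational basis of the two cells $\{\bot\}\cup[N]$. The only contributions come from pairs of local basis states $(a,b)$ that are not both $y$, and each such pair carries an overlap factor $\braket{u}{a}\braket{u}{b}$. If one of $a,b$ is $y$, the other factor has modulus at most $1/\sqrt N$: either $\braket u\bot=1/\sqrt N$, or $\abs{\braket{u}{z}}=1/N$ for $z\neq y$.

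The main obstacle is that summing over many $(a,b)$ naively loses a factor of $N$. The fix I would try is to write the restricted collision-free vector as $\ket{v}=(\id-\proj{y}\otimes\proj{y})\ket{v}$. Then
\[
\abs{(\bra u\otimes\bra u)\ket v}=\abs{(\bra u\otimes\bra u)(\id-\proj y\otimes\proj y)\ket v}\leq\norm{(\id-\proj y\otimes\proj y)(\ket u\otimes\ket u)}\,\norm{v}.
\]
Now $\norm{\ket u\otimes\ket u}^2=1$ and the overlap with $\ket{y}\ket{y}$ has squared modulus $\abs{\braket yu}^4=(1-1/N)^2$. This gives $\norm{(\id-\proj y\otimes\proj y)(\ket u\otimes\ket u)}^2=1-(1-1/N)^2\leq 2/N$. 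Squaring, summing over orthogonal blocks indexed by $(x_1,x_2,y)$ and by the remaining registers, and using $\norm{\Pi_{=0}\ket{\widetilde\psi_T}}\leq1$ yields the bound $\sqrt{2/N}$.

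One point needs care in this step: $\ket{u}$ lives in $\mathbb{C}[[N]\cup\{\bot\}]$ while $\Pi_{=0}$ is defined with $\bot$ excluded. The local projector $\proj y\otimes\proj y$ is exactly the part of $\Pi_{\geq1}$ that is relevant to this fixed triple, so the rank-one bound applies blockwise. Combining the two terms and squaring gives $p_{\rm succ}^{\sf U}\leq(\Delta_T+\sqrt{2/N})^2$.
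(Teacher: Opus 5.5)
Your reduction (pull back through $\mathsf{Comp}$, split $\ket{\widetilde\psi_T}$ into its $\Pi_{\geq1}$ and $\Pi_{=0}$ parts, bound the second part blockwise by a rank-one estimate) is valid. The gap is in the final numerical step, and as structured your argument cannot reach the stated constant.

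Since $\braket{y}{u}=1-1/N$, the overlap is $\abs{\braket{yy}{uu}}^2=\abs{\braket{y}{u}}^4=(1-1/N)^4$, not $(1-1/N)^2$. Hence $\norm{(\id-\proj{y}\otimes\proj{y})\ket{u}\ket{u}}^2=1-(1-1/N)^4=4/N-O(1/N^2)$, which exceeds $2/N$ for every $N\geq3$. This is not a slip that a sharper estimate inside your framework can repair. The supremum of $\norm{(\bra{u}\otimes\bra{u}\otimes\id)v}$ over unit vectors $v$ with no $\ket{y}\ket{y}$ component on cells $x_1,x_2$ is exactly $\sqrt{1-(1-1/N)^4}$. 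Excluding all $\ket{z}\ket{z}$ instead changes this only at lower order. So what you actually prove is $p_{\rm succ}^{\sf U}\leq(\Delta_T+2/\sqrt N)^2$. That weaker bound would still suffice for \cor{collision-success} and \thm{collision-time-space}, but it is not the lemma as stated. The information you discarded is that $\ket{\widetilde\psi_T}\in\operatorname{im}(\mathsf{Comp})$. After applying $\Pi_{=0}$ the vector generally leaves this image, so you can no longer drop the $\ket{\widehat0}$-component of $\ket{u}$.

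The fix is to use the image constraint before separating out collisions. Set $\ket{g}:=\ket{y}-N^{-1/2}\ket{\widehat0}$, so that $\ket{u}=\ket{g}+N^{-1/2}\ket{\bot}$ and $\braket{g}{u}=\norm{g}^2=1-1/N$. Write $\ket{u}\ket{u}=\ket{g}\ket{g}+\ket{r}$; then $\braket{gg}{r}=0$ and $\norm{r}^2=1-(1-1/N)^2\leq 2/N$. On $\operatorname{im}(\mathsf{Comp})$, contracting any cell with $\bra{\widehat0}$ gives zero. Therefore $(\bra{g}\otimes\bra{g}\otimes\id)\phi_b=(\bra{y}\otimes\bra{y}\otimes\id)\phi_b$ for each output block $\phi_b$. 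This is the amplitude that the recorded database holds $y$ at both positions $x_1\neq x_2$, so it is bounded by $\norm{\Pi_{\geq1}\phi_b}$. Summing over the orthogonal blocks and applying the triangle inequality gives $\sqrt{p_{\rm succ}^{\sf U}}\leq\Delta_T+\sqrt{2/N}$.

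This is exactly the mechanism inside Zhandry's readout lemma, whose general error term is $\sqrt{1-(1-1/N)^k}\leq\sqrt{k/N}$. The paper invokes that lemma as a black box with $k=2$, and then bounds the recorded-collision probability $p_{\rm rec}$ by $\Delta_T^2$.
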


\begin{proof}
We apply the compressed oracle readout bound of \cite[Lemma~5]{zhandry2019record} with $k=2$.  Zhandry states the result for a random oracle with range $\{0,1\}^n$. The same proof applies to range $[N]\cong\mathbb Z_N$ by replacing the $n$-fold tensor-product Hadamard transform with the quantum Fourier transform over $\mathbb Z_N$, giving an error term of $\sqrt{2/N}$.

More precisely, we identify an output $(x_1,x_2,y)$ of ${\cal A}$ with the tuple
\[
(x_1,x_2,y_1,y_2) = (x_1,x_2,y,y),
\]
and consider the relation
\[ 
{\cal R} := \{(x_1,x_2,y_1,y_2) \in[M]^2\times[N]^2:: x_1<x_2,\ y_1=y_2\}.
\]
The success event in Zhandry's lemma is therefore precisely the event
\[
x_1<x_2 \qquad\text{and}\qquad f(x_1)=f(x_2)=y,
\]
which occurs with probability $p_{\rm succ}^{\sf U}$.

Now run ${\cal A}$ with the compressed/recording oracle and measure the compressed database after the algorithm produces its output. Let $p_{\rm rec}$ denote the probability that the output $(x_1,x_2,y)$ satisfies $x_1<x_2$ and that the measured database ${\cal D}$ contains
\[
{\cal D}(x_1)={\cal D}(x_2)=y.
\]
By \cite[Lemma~5]{zhandry2019record},
\[
\sqrt{p_{\rm succ}^{\sf U}} \leq \sqrt{p_{\rm rec}} + \sqrt{\frac{2}{N}}.
\]

Whenever the event defining $p_{\rm rec}$ occurs, the recording database contains two distinct positions with the same recorded value, and hence contains a collision. Therefore
\[
p_{\rm rec} \leq \norm{\Pi_{\geq1}\ket{\widetilde{\psi}_T}}^2 = \Delta_T^2.
\]
Consequently,
\[
\sqrt{p_{\rm succ}^{\sf U}} \leq \Delta_T+\sqrt{\frac{2}{N}}.
\]
Squaring both sides proves the claim.
\end{proof}

\paragraph{Known query lower bound.}
Even before incorporating the space restriction, \lem{one-query} already recovers the standard query lower bound. Indeed, by \lem{size}, every database in the support of $\Pi_{=0}\ket{\widetilde{\psi}_t}$ has at most $t$ occupied positions, and hence
\[
\sum_\eta\sum_{I\subseteq[M]}\abs I \norm{\Lambda_{\eta,I}\Pi_{=0}\ket{\widetilde{\psi}_t}}^2 \leq  t\norm{\Pi_{=0}\ket{\widetilde{\psi}_t}}^2 \leq t.
\]
Therefore, \eq{progress-split} and \lem{one-query} give
\[
\Delta_{t+1}\leq\Delta_t+4\sqrt{\frac{t}{N}}, \qquad\text{and thus}\qquad \Delta_T\leq\frac{4}{\sqrt N}\sum_{t=0}^{T-1}\sqrt{t} \leq\frac{4T^{3/2}}{\sqrt N},
\]
where $\Delta_0=0$. Together with \lem{collision-success}, this yields, for $T\geq1$, the standard bound $p_{\rm succ}^{\sf U}=O(T^3/N)$ and hence $T=\Omega(N^{1/3})$ for constant success probability. In the next section, we refine precisely the crude estimate above: label symmetry and the $S$-qubit space bound allow us to replace the factor $t$ by $\min\{t,2S/\log_2N\}$, which yields the optimal time-space tradeoff.

\section{Space bounds in the compressed oracle technique}\label{sec:space-bounds}

\subsection{Symmetry restrictions}\label{sec:symmetry-restrictions}

In our setting, both the uniform input distribution and the collision-finding success condition are invariant under arbitrary permutations of the range labels. Thus, no range label has intrinsic significance, making it natural to expect that treating particular labels asymmetrically offers no advantage. We formalise the corresponding symmetry at the level of the algorithm's reduced input state,
and we call it \emph{label symmetry}.

Although this motivates our restriction, label symmetry cannot presently be imposed without loss of generality under a space bound: symmetrising an arbitrary algorithm by explicitly storing a permutation can require $\Theta(N\log N)$ additional qubits~\cite{ambainis2005new,AmbainisMRR11}, which would obscure the time-space tradeoff. Related invariant-algorithm restrictions have been studied for insertion into an ordered list~\cite{farhi1999} and for ordered search~\cite{carolan2025}.

We first define symmetry for a general group action.
\begin{definition}\label{def:symmetric-algo}
    Let $({\cal H}_{\cal I},\pi)$ be a unitary representation of a finite group $G$. Let ${\cal A}$ be a $T$-query algorithm whose algorithm registers use at most $S$ qubits, and let $\delta$ be an input distribution. We say that $({\cal A},\delta)$ is \emph{$G$-symmetric with respect to $\pi$} if, for every $t\in[T+1]$,
    \[
    \pi(g)\rho_{\cal I}^t({\cal A},\delta)\pi(g)^\dagger     =\rho_{\cal I}^t({\cal A},\delta)\qquad\text{for every }g\in G.
    \]
\end{definition}
In our application, $\mathfrak S_N$ is the symmetric group on the $N$ range labels. We extend every $\sigma\in\mathfrak S_N$ to $[N] \cup \{\bot\}$ by setting $\sigma(\bot)=\bot$, and define its unitary actions on the standard and compressed input registers by
\begin{equation}\label{eq:V-action}
    V_\sigma\ket{f}:=\ket{\sigma\circ f},
\end{equation}
both for $f:[M]\to[N]$ and $f:[M]\to[N] \cup \{\bot\}$. 

We further extend this action to the algorithm's registers by tensoring with the identity; thus, range relabelling acts trivially on all algorithm registers, including the query registers ${\cal X}$ and ${\cal Y}$. Because the uniform state $\ket{\widehat0}$ is fixed by every permutation, these actions satisfy
\begin{equation}\label{eq:comp-equivariant}
    (\id_{\cal WXY} \otimes V_\sigma)\mathsf{Comp} =\mathsf{Comp}(\id_{\cal WXY} \otimes V_\sigma).
\end{equation}
In particular, $(\id_{\cal WXY} \otimes V_\sigma)$ commutes with $\Pi_{\sf Comp}$. Since the action fixes $\bot$ and only permutes the remaining labels, it also commutes with $\Pi_{=0}$, $\Pi_{\geq1}$, $\Lambda_s$, and every $\Lambda_{\eta,I}$.

A direct consequence of \eq{input-gram} and the invariance of ${\sf U}$ under $\mathfrak S_N$-symmetry is the following fact.
\begin{fact}\label{fct:gram}
Under the uniform distribution ${\sf U}$, $\mathfrak S_N$-symmetry is equivalent to the Gram-matrix condition
\begin{equation*}
    \braket{\psi_t^f({\cal A})}{\psi_t^g({\cal A})} =\braket{\psi_t^{\sigma \circ f}({\cal A})}{\psi_t^{\sigma \circ g}({\cal A})}
\end{equation*}
for all $f,g\in[N]^M$, $\sigma\in\mathfrak S_N$, and $t\in[T+1]$.
\end{fact}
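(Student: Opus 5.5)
We prove \fct{gram} by writing out both sides of the symmetry condition in \defin{symmetric-algo} via \eq{input-gram} and comparing matrix entries in the computational basis of ${\cal I}$. Take $\pi(\sigma)=V_\sigma$ from \eq{V-action}, and write $\rho^t=\rho_{\cal I}^t({\cal A},{\sf U})$. Since ${\sf U}(f)=N^{-M}$ for every $f$, \eq{input-gram} reads
\[
\rho^t=\frac{1}{N^M}\sum_{f,g\in[N]^M}\braket{\psi_t^g}{\psi_t^f}\ketbra{f}{g},
\]
so the $(f,g)$ entry of $\rho^t$ is $N^{-M}\braket{\psi_t^g}{\psi_t^f}$.

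Next we compute the conjugated state. Because $V_\sigma\ket f=\ket{\sigma\circ f}$, we have $V_\sigma\ketbra{f}{g}V_\sigma^\dagger=\ketbra{\sigma\circ f}{\sigma\circ g}$. Hence
\[
V_\sigma\rho^tV_\sigma^\dagger=\frac1{N^M}\sum_{f,g}\braket{\psi_t^g}{\psi_t^f}\ketbra{\sigma\circ f}{\sigma\circ g}.
\]
The map $f\mapsto\sigma\circ f$ is a bijection of $[N]^M$, and the uniform weights are unchanged under it; this is the only place the invariance of ${\sf U}$ enters. Reindexing by $f'=\sigma\circ f$ and $g'=\sigma\circ g$, the $(f',g')$ entry of $V_\sigma\rho^tV_\sigma^\dagger$ becomes $N^{-M}\braket{\psi_t^{\sigma^{-1}\circ g'}}{\psi_t^{\sigma^{-1}\circ f'}}$.

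The two operators agree exactly when all their entries agree. Therefore $V_\sigma\rho^tV_\sigma^\dagger=\rho^t$ holds for all $\sigma\in\mathfrak S_N$ if and only if
\[
\braket{\psi_t^{g}}{\psi_t^{f}}=\braket{\psi_t^{\sigma^{-1}\circ g}}{\psi_t^{\sigma^{-1}\circ f}}\qquad\text{for all }f,g\in[N]^M,\ \sigma\in\mathfrak S_N .
\]
As $\sigma$ ranges over $\mathfrak S_N$, so does $\sigma^{-1}$. Swapping the names $f\leftrightarrow g$ turns each such inner product into its complex conjugate, which preserves the equality. So this family of conditions is exactly the stated Gram condition, for each $t\in[T+1]$ separately. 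Both directions of the equivalence follow from the same entrywise identity.

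There is no real obstacle; the only care needed is the bookkeeping of $\sigma$ versus $\sigma^{-1}$ and the order of bra and ket. Because the condition is quantified over the whole group and over all pairs $(f,g)$, both are harmless.
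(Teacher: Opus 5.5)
Your proof is correct and follows the route the paper intends. The paper records this fact as a direct consequence of \eq{input-gram} and the invariance of ${\sf U}$, and your entrywise comparison of $V_\sigma\rho^tV_\sigma^\dagger$ with $\rho^t$ after reindexing $f'=\sigma\circ f$, $g'=\sigma\circ g$ is exactly that computation spelled out; the final step is only a renaming of the bound variables $f\leftrightarrow g$ and $\sigma\mapsto\sigma^{-1}$, so no complex conjugation is needed.
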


Symmetric algorithms satisfy the following key space bound.
\begin{lemma}\label{lem:orbit-span-symm}
    Let ${\cal A}$ be an algorithm whose algorithm registers use at most $S$ qubits, and suppose that $({\cal A},\delta)$ is $G$-symmetric with respect to $\pi$. Then, for every $t\in[T+1]$ and every $\ket{\beta}\in\operatorname{supp}(\rho_{\cal I}^t({\cal A},\delta))$,
    \[
        \dim\operatorname{span}\{\pi(g)\ket{\beta}:g\in G\}\leq2^S.
    \]
\end{lemma}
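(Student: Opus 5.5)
The plan is to combine two facts: the support of $\rho_{\cal I}^t$ is a $G$-invariant subspace, and $\rho_{\cal I}^t$ has rank at most $2^S$. The orbit of any $\ket{\beta}$ in the support then lies in a space of dimension at most $2^S$.

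\textbf{Rank bound.} Write $\rho:=\rho_{\cal I}^t({\cal A},\delta)$. It is the reduced state of the pure state $\ket{\psi_t({\cal A},\delta)}$ on ${\cal H}_{\cal WXY}\otimes{\cal H}_{\cal I}$. Take a Schmidt decomposition
\[
\ket{\psi_t({\cal A},\delta)}=\sum_{k=1}^{r}\sqrt{\lambda_k}\,\ket{a_k}_{\cal WXY}\ket{b_k}_{\cal I},
\]
with orthonormal families $\{\ket{a_k}\}$, $\{\ket{b_k}\}$ and $\lambda_k>0$. Then $\rho=\sum_k\lambda_k\proj{b_k}$, so
\[
\operatorname{supp}(\rho)=\operatorname{span}\{\ket{b_k}\}.
\]
Since the $\ket{a_k}$ are orthonormal in ${\cal H}_{\cal WXY}$, we get $r\leq\dim{\cal H}_{\cal WXY}\leq 2^S$. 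Hence $\dim\operatorname{supp}(\rho)\leq 2^S$.

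\textbf{Invariance of the support.} By $G$-symmetry, $\pi(g)\rho\pi(g)^\dagger=\rho$ for every $g\in G$. Because $\pi(g)$ is unitary, it maps the range of $\rho$ onto the range of $\pi(g)\rho\pi(g)^\dagger$. Concretely, $\pi(g)\rho\ket{v}=\rho\,\pi(g)\ket{v}$, which lies in the range of $\rho$. Since $\rho$ is Hermitian, its range equals its support. Therefore $\pi(g)\operatorname{supp}(\rho)\subseteq\operatorname{supp}(\rho)$ for all $g$.

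\textbf{Conclusion.} For $\ket{\beta}\in\operatorname{supp}(\rho)$, every $\pi(g)\ket{\beta}$ lies in $\operatorname{supp}(\rho)$. So the span of the orbit is a subspace of $\operatorname{supp}(\rho)$, and its dimension is at most $2^S$.

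No step is a real obstacle. The only point needing care is the rank bound: it must use the workspace dimension $\dim{\cal H}_{\cal WXY}\leq 2^S$ (all algorithm registers, including ${\cal X},{\cal Y}$), not the dimension of the input register. This is a Schmidt-rank argument and does not depend on the particular group $G$ or representation $\pi$.
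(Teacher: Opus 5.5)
Your proposal is correct and follows the paper's own proof: $G$-invariance of $\rho_{\cal I}^t({\cal A},\delta)$ makes its support $G$-invariant, so the orbit span lies inside that support. The Schmidt rank of the pure joint state across the algorithm--input cut then bounds its dimension by $\dim{\cal H}_{\cal WXY}\leq 2^S$.
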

\begin{proof}
    By assumption, $\rho_{{\cal I}}^t({\cal A},\delta)$ is $G$-invariant, and therefore
    \[
        \operatorname{span}\{\pi(g)\ket{\beta}:g\in G\}
        \subseteq\operatorname{supp}(\rho_{{\cal I}}^t({\cal A},\delta)).
    \]
    Since the joint state $\ket{\psi_t({\cal A},\delta)}$ is pure, its Schmidt rank across the algorithm-input cut gives
    \[
        \operatorname{rank}\bigl(\rho_{{\cal I}}^t({\cal A},\delta)\bigr)
        \leq\dim({\cal H}_{\cal WXY})\leq2^S,
    \]
    which proves the claim.
\end{proof}

\begin{definition}[Label-symmetric algorithm]\label{def:label-symmetric}
    A quantum query algorithm ${\cal A}$ is \emph{label-symmetric} if $({\cal A},{\sf U})$ is $\mathfrak S_N$-symmetric with respect to the range-label representation $\sigma\mapsto V_\sigma$ in \eq{V-action}.
\end{definition}
Every algorithm whose only input-dependent operation asks whether $f(x)=f(x')$ is automatically label-symmetric: such an equality query is unchanged by any permutation of the range labels. We state and prove this formally in \lem{equality-query-label-symmetric}.

Let $\mathsf{Comp}_{\cal I}:=\bigotimes_{x\in[M]}\mathsf{Comp}_x$. The reduced standard and compressed input states obey
\[
\widetilde\rho_{\cal I}^t =\mathsf{Comp}_{\cal I}\rho_{\cal I}^t\mathsf{Comp}_{\cal I}^\dagger.
\]
By \eq{comp-equivariant}, label symmetry therefore transfers to the compressed reduced state without changing its rank. In particular, for every basis value $\eta$ of the algorithm registers, the projected part $(\proj{\eta}_{\cal WXY}\otimes\id_{\cal I})\ket{\widetilde\psi_t}$ has an $\mathfrak S_N$-orbit span of dimension at most $2^S$.

In the rest of this section, we identify the components of a collision-free compressed database whose nonzero vectors have high-dimensional $\mathfrak S_N$-orbits. The orbit bound above then forces every state in the reduced state of the input register of a space-bounded, label-symmetric algorithm to be orthogonal to those components.

\subsection{Collision-free compressed databases}\label{sec:no-col}

To control the progress in \eq{progress-split}, we study two constraints on a database with $s$ recorded entries. Compression requires each recorded register to be orthogonal to $\ket{\widehat{0}}$, while collision-freeness requires the recorded labels to be pairwise distinct. We define the corresponding subspaces and their intersection below.

\begin{definition}
Let $s\geq 1$ be some integer. After fixing the $s$ occupied positions and omitting the $\bot$-entries, the space of \emph{compressed databases} is $W^{\otimes s}$, where 
\begin{equation}\label{eq:W-space}
    W:=\ker(\bra{\widehat{0}}) = \left\{\ket{\psi}\in \mathbb{C}[N]:\braket{\widehat{0}}{\psi}=0\right\}.
\end{equation}

The \emph{collision subspace} of $\mathbb C[N]^{\otimes s}$ is
\begin{equation}\label{eq:C-space}
    C_s := \mathrm{span}\{\ket{y_1,\dots,y_s}: y_i=y_j \text{ for some } i\neq j\} \subset \mathbb{C}[N]^{\otimes s}.
\end{equation}
Its orthogonal complement is the \emph{collision-free subspace} as
\begin{equation}\label{eq:C-perp}
      C_s^\perp := \mathrm{span}\{\ket{y_1,\dots,y_s}: y_1,\dots,y_s \text{ are pairwise distinct}\} \subset \mathbb{C}[N]^{\otimes s}.
\end{equation}

Lastly, the space of \emph{collision-free compressed databases} with $s$ recorded entries is the intersection $C_s^\perp\cap W^{\otimes s}$. 
\end{definition}

We make the intuitive correspondence between the spaces $W^{\otimes s},C_s^\perp$ and (collision-free) compressed database states
precise in \sec{database-size}.

A related subtlety is that the operator $\Pi_{W^{\otimes s}}\Pi_{C_s^\perp}$ is in general not equal to the projection $\Pi_{C_s^\perp\cap W^{\otimes s}}$, because a product of orthogonal projections need not coincide with the orthogonal projection onto the intersection, unless the two projections commute. In our case, that means that projecting a compressed database onto the collision-free subspace need not to preserve the compression constraint. In \lem{proj-diff} we show that, although these two operators are not identical, they are nevertheless close in operator norm.

\subsection{Representations in $C_s^\perp\cap W^{\otimes s}$}\label{sec:database-representations}

In this section we analyse the $\mathfrak S_N$-irreducible representations appearing in $C_s^\perp\cap W^{\otimes s}$. We view $\mathbb{C}[N]$ as the permutation representation of $\mathfrak S_N$ by making it permute the labels:
\[
\sigma\ket{y}=\ket{\sigma(y)} \qquad (\sigma\in\mathfrak S_N,\ y\in[N]).
\]
This induces a diagonal action of $\mathfrak S_N$ on $\mathbb C[N]^{\otimes s}$:
\[
\sigma\ket{y_1,\dots,y_s} = \ket{\sigma(y_1),\dots,\sigma(y_s)}.
\]

We also let $\mathfrak S_s$ act on $\mathbb C[N]^{\otimes s}$ by permuting the tensor factors. Since we later multiply by seminormal idempotents on the right, we use the right-action for this action:
\[
\ket{y_1,\dots,y_s}\cdot\pi = \ket{y_{\pi(1)},\dots,y_{\pi(s)}} \qquad (\pi\in\mathfrak S_s).
\]
The left $\mathfrak S_N$-action and the right $\mathfrak S_s$-action commute.

We show that only Specht modules of shape $\lambda$ whose first row has length exactly $N-s$ occur in $C_s^\perp\cap W^{\otimes s}$, that is, $\lambda=(N-s,\mu)$, when $N$ is large enough. The proof is postponed to the end of this section, and naturally breaks into two intermediate results we will present shortly.
\begin{theorem}\label{thm:kernel}
    Assume $s \geq 1$ and $N\ge 2s$. Then
    \[
    C_s^\perp\cap W^{\otimes s} \cong \bigoplus_{\mu\vdash s} \abs{\mathrm{SYT}(\mu)} S^{(N-s,\mu)}.
    \]
\end{theorem}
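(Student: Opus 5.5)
The plan is to compute both sides as $\mathfrak S_N$-representations, using the deletion map $D=\bigoplus_{r=1}^s d_r|_{C_s^\perp}$ from the technical overview, with $d_r:C_s^\perp\to C_{s-1}^\perp$. I will use the identity $\ker D=C_s^\perp\cap W^{\otimes s}$ (\clm{kernel}) together with Schur's lemma.

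\textbf{Step 1: decompose $C_s^\perp$.} The injective $s$-tuples form a single $\mathfrak S_N$-orbit with stabiliser $\mathfrak S_{N-s}$. Hence
\[
C_s^\perp\cong\operatorname{Ind}_{\mathfrak S_{N-s}}^{\mathfrak S_N}\mathbf 1 .
\]
By Frobenius reciprocity and the branching rule, the multiplicity of $S^\lambda$ in $C_s^\perp$ is the number of ways to remove $s$ boxes from $Y(\lambda)$ one at a time, staying a Young diagram at every step and ending at the one-row shape $(N-s)$. Equivalently, it is the number of standard fillings of the skew shape $\lambda/(N-s)$. In particular, only shapes with $\lambda_1\ge N-s$ occur.

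\textbf{Step 2: the shapes with $\lambda_1=N-s$.} For $\lambda=(N-s,\mu)$ with $\mu\vdash s$, the hypothesis $N\ge 2s$ gives $N-s\ge s\ge\mu_1$. So $\lambda$ is a genuine partition, and the skew shape $\lambda/(N-s)$ is just $Y(\mu)$ placed below the first row, disconnected from the remaining boxes of that row. The count from Step 1 is therefore $\abs{\mathrm{SYT}(\mu)}$, which is exactly the multiplicity in the statement.

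\textbf{Step 3: the inclusion $\supseteq$.} Each $d_r$ is $\mathfrak S_N$-equivariant, since deleting a coordinate commutes with relabelling. Its codomain $C_{s-1}^\perp$ contains only shapes with first row at least $N-s+1$. By Schur's lemma, $D$ therefore annihilates the whole $(N-s,\mu)$-isotypic part of $C_s^\perp$, so $\ker D$ contains $\bigoplus_{\mu\vdash s}\abs{\mathrm{SYT}(\mu)}S^{(N-s,\mu)}$.

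\textbf{Step 4: the reverse inclusion, which is the main obstacle.} It remains to show that $D$ is injective on every isotypic component with $\lambda_1\ge N-s+1$. Equivalently, the image of $D^\dagger$ must contain these components; $D^\dagger$ inserts a new coordinate and sums over the labels not already used.

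My first attempt would use the commuting right $\mathfrak S_s$-action on tensor positions, giving a Schur--Weyl-type bimodule decomposition
\[
C_s^\perp\cong\bigoplus_\lambda S^\lambda\otimes S^{\lambda/(N-s)},
\]
where $S^{\lambda/(N-s)}$ is the skew $\mathfrak S_s$-module. I would refine it using Young's seminormal idempotents along the chain $\mathfrak S_1\subset\cdots\subset\mathfrak S_s$, acting on the right. Each seminormal basis vector then corresponds to an ordered path of box removals. Compatibility with $d_s$, and then with the other $d_r$ by symmetry under the right action, should show the following: on a path whose first removed box lies outside the first row, deleting the last coordinate is an isomorphism onto the corresponding component of $C_{s-1}^\perp$. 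This would give injectivity of $D$ on every component with $\lambda_1>N-s$.

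If this bookkeeping becomes unwieldy, my fallback is spectral. I would show directly that, on each isotypic component with $\lambda_1\ge N-s+1$, the operator $A_{N,s}+s\cdot\id=D^\dagger D$ has no zero eigenvalue. To do this I would compute its action through Jucys--Murphy elements, which is essentially the content of \lem{spectrum}. Either route yields $\ker D$ equal to the sum over $\lambda_1=N-s$, completing the proof.
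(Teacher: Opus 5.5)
\textbf{There is a genuine gap in Step 4, which is the only nontrivial direction.} Steps 1--3 are correct. They prove the inclusion $\supseteq$ and the multiplicities $\abs{\mathrm{SYT}(\mu)}$ using only Frobenius reciprocity and Schur's lemma; the paper never isolates this direction and gets both inclusions at once from \lem{spectrum}. The reverse inclusion, however, is not proved.

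\textbf{Your first route cannot work as formulated.} Take $s=3$, $N\ge 6$, $\lambda=(N-2,1,1)$.
The skew shape $\lambda/(N-3)$ is the box $(1,N-2)$ together with the column $(2,1),(3,1)$. So $S^\lambda$ has multiplicity $3$ in $C_3^\perp$, and in two of the three paths the last coordinate sits at $(3,1)$, outside the first row.
On the other hand, $\lambda/(N-2)$ is a single column of two boxes, so $S^\lambda$ has multiplicity $1$ in $C_2^\perp$.
By Schur's lemma, $d_3$ acts on multiplicity spaces as a map $\mathbb C^3\to\mathbb C^1$. Restricted to the span of those two paths it has a nonzero kernel, so no path-by-path isomorphism statement can give injectivity.

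The real difficulty lies elsewhere anyway. $\ker d_s$ is typically large; here it is $2$-dimensional out of $3$. Since each $d_r$ is $d_s$ composed with a right position permutation, $\bigcap_r\ker d_r$ is the largest subspace of $\ker d_s$ stable under the right $\mathfrak S_s$-action. What must be shown is that this subspace is zero whenever $\lambda_1>N-s$. The phrase ``by symmetry under the right action'' does not do that.

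\textbf{Your fallback only defers to \lem{spectrum}.} If you cite it, the theorem follows immediately from \clm{kernel}, \clm{kernelbis} and \lem{spectrum}. That is exactly the paper's proof, and Steps 1--3 become redundant. To make the spectral route self-contained you need three ingredients.
\begin{enumerate}
\item The $\mathfrak S_N\times\mathfrak S_s$ decomposition $C_s^\perp\cong\bigoplus S^\lambda\otimes S^\mu$, summed over pairs with $\lambda/\mu$ a horizontal strip (Pieri).
\item The identity $A_{N,s}=T_N-T_s-\binom{N-s}{2}$ on $C_s^\perp$. Here $T_N$ is the sum of all label transpositions acting on the left, and $T_s$ is the sum of all position transpositions. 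It follows by sorting each transposition $(a\,b)$ according to whether $a$ and $b$ are used labels. Consequently $A_{N,s}+s\cdot\id$ acts on the $(\lambda,\mu)$ block by the scalar $\gamma(\lambda)-\gamma(\mu)-\binom{N-s}{2}+s$.
\item A proof that this scalar vanishes only for $\lambda=(N-s,\mu)$. Note that $\gamma(\lambda)-\gamma(\mu)$ is the content sum of the strip $\lambda/\mu$. The strip has exactly one box in each of $N-s$ distinct columns $j\in J$, at row $\mu'_j+1$. Hence the content sum equals $\sum_{j\in J}(j-1-\mu'_j)\ge\binom{N-s}{2}-s$. Equality holds if and only if $J=\{1,\dots,N-s\}$ and $\mu_1\le N-s$, that is, $\lambda=(N-s,\mu)$.
\end{enumerate}
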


The Specht modules appearing in this decomposition have large dimension (if $s$ is large), which will be important later to show that the space $C_s^\perp\cap W^{\otimes s}$ is not reachable by our quantum algorithm if it has limited space at its disposal.
\begin{theorem}[Theorem~E in~\cite{rasala1977minimal}]\label{thm:dim-lower}
    Assume $s \geq 1$ and $N\ge 2s$. Then
    \[
    \min_{\mu\vdash s}\dim S^{(N-s,\mu)} \geq \binom{N}{s}-\binom{N}{s-1}.
    \]
\end{theorem}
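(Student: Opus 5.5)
The plan is to prove the bound by induction on $s$, using the branching rule for restricting Specht modules from $\mathfrak S_N$ to $\mathfrak S_{N-1}$. The target quantity telescopes correctly under Pascal's rule. First, a sanity check that pins down the extremal case. By the hook length formula, writing $\lambda=(N-s,\mu)$ and separating the first-row hooks from the hooks of $\mu$,
\[
\dim S^{(N-s,\mu)}=\abs{\mathrm{SYT}(\mu)}\binom{N}{s}\prod_{j=1}^{\mu_1}\frac{N-s-j+1}{N-s-j+1+\mu'_j},
\]
where $\mu'_j$ is the length of the $j$-th column of $\mu$. For $\mu=(s)$ this equals $\binom Ns\frac{N-2s+1}{N-s+1}=\binom Ns-\binom N{s-1}$. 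So the claim says that the two-row shape $(N-s,s)$ is the minimiser.

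\textbf{Base case and inductive step.} For $s=1$ the only choice is $\mu=(1)$, and $\dim S^{(N-1,1)}=N-1=\binom N1-\binom N0$. For $s\ge2$, fix $\mu\vdash s$ and restrict $S^{(N-s,\mu)}$ to $\mathfrak S_{N-1}$. By the branching rule, the restriction is the direct sum of $S^{\lambda^-}$ over all partitions $\lambda^-$ obtained by removing one corner box of $\lambda=(N-s,\mu)$. I keep two kinds of terms:
\begin{itemize}
\item Removing a corner box of $\mu$ gives $\lambda^-=(N-s,\mu^-)=((N-1)-(s-1),\mu^-)$ with $\mu^-\vdash s-1$. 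Since $N-1\ge2(s-1)$, the induction hypothesis applies.
\item When $N-s>\mu_1$, removing the last box of the first row gives $((N-1)-s,\mu)$. This falls under the hypothesis at parameters $(N-1,s)$ provided $N-1\ge2s$.
\end{itemize}
Keeping one term of each kind, nonnegativity of the remaining summands gives
\[
\dim S^{(N-s,\mu)}\ge\Bigl[\tbinom{N-1}{s-1}-\tbinom{N-1}{s-2}\Bigr]+\Bigl[\tbinom{N-1}{s}-\tbinom{N-1}{s-1}\Bigr]=\binom Ns-\binom N{s-1}.
\]
This uses Pascal's rule twice. To make the double induction well founded, I induct on $N+s$ (or on $N$, with all $s\le N/2$ handled simultaneously).

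\textbf{Main obstacle: the boundary cases.} Two situations escape the step above.
\begin{itemize}
\item The case $\mu_1=N-s$, where the first-row box is not removable. Here $s\ge\mu_1=N-s$ forces $N\le 2s$, hence $N=2s$ and $\mu=(s)$. In that case we already have equality by the hook computation above.
\item The case $N=2s$ with $\mu\neq(s)$, where the first-row term $S^{(s-1,\mu)}$ lies outside the range $N-1\ge2s$ of the hypothesis.
\end{itemize}
For the second case I would try the hook length formula directly. The goal is to show
\[
\abs{\mathrm{SYT}(\mu)}\prod_{j\le\mu_1}\frac{N-s-j+1}{N-s-j+1+\mu'_j}\ge\frac{N-2s+1}{N-s+1}=\frac1{s+1}.
\]
I expect this to follow from $\abs{\mathrm{SYT}(\mu)}\ge1$ together with $\mu'_j\ge1$ and $\sum_j\mu'_j=s$. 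The reason is that the product is smallest when all column mass sits in as many columns as possible, i.e.\ $\mu'_j=1$ for $j\le s$, which is exactly the two-row case. As a fallback, if $\mu$ has at least two corners, I would keep two corner-removal terms at parameters $(N-1,s-1)$ instead of the first-row term, and compare them against the target directly. The remaining case, where $\mu$ is a rectangle, would then be settled by the explicit hook formula.
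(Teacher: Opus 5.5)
Your proof is correct in substance, and it takes a different route from the paper: the paper gives no argument and imports the bound from Rasala's Theorem~E, while you give a self-contained proof. Your induction via the branching rule and Pascal's identity is sound for all $N\ge 2s+1$, and your hook computation settles $\mu=(s)$ at $N=2s$. The one step you leave as an expectation, $N=2s$ with $\mu\neq(s)$, is easy to close. At $N=2s$ the bracket that the first-row term was meant to supply is $\binom{2s-1}{s}-\binom{2s-1}{s-1}=0$, so you can simply drop that term. For every $\mu\vdash s$, including $\mu=(s)$, and any corner-removal $\mu^-$,
\[
\dim S^{(s,\mu)}\ \ge\ \dim S^{(s,\mu^-)}\ \ge\ \binom{2s-1}{s-1}-\binom{2s-1}{s-2}\ =\ \binom{2s}{s}-\binom{2s}{s-1}.
\]
The last equality is Pascal's rule together with $\binom{2s-1}{s}=\binom{2s-1}{s-1}$. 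So neither the heuristic nor the two-corner and rectangle fallbacks are needed.

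Your hook-length heuristic is also true, and it proves the whole theorem directly, with no induction. Write $a=N-s\ge s$, $m_j=\mu'_j\ge 1$ and $M_j=m_1+\cdots+m_j$, so that $M_{j-1}\ge j-1$ and $M_{\mu_1}=s$. Set $x=a-j+1\ge1$ and $y=a-M_{j-1}+1$. Then $y\le x+1$, hence $x+m_j\ge y$. Since $x\ge1$ and $y-m_j=a-M_j+1\ge a-s+1\ge1$, this is equivalent to
\[
\frac{a-j+1}{a-j+1+m_j}\ \ge\ \frac{a-M_j+1}{a-M_{j-1}+1}.
\]
The right-hand sides telescope over $j=1,\dots,\mu_1$ to $\frac{a-s+1}{a+1}=\frac{N-2s+1}{N-s+1}$. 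Combining this with your hook-length identity and $\abs{\mathrm{SYT}(\mu)}\ge1$ gives $\dim S^{(N-s,\mu)}\ge\binom Ns-\binom N{s-1}$ for every $N\ge 2s$ in one line. Either of your routes therefore yields an elementary, self-contained proof of exactly the bound the paper needs, where the paper relies on the external citation.
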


By combining these two theorems, we show in the next section that for every nonzero vector in $C_s^\perp\cap W^{\otimes s}$, the dimension of the span of its $\mathfrak S_N$-orbit is large. 

We now set out to prove \thm{kernel} by constructing a linear map $D: C_s^\perp \rightarrow (C_{s-1}^{\perp})^{\oplus s}$, whose kernel is $C_s^\perp\cap W^{\otimes s}$ (see \clm{kernel}), but on the other hand, whose kernel is also isomorphic to $\bigoplus_{\mu\vdash s} \abs{\mathrm{SYT}(\mu)} S^{(N-s,\mu)}$ (see \lem{spectrum}). 

For each $r\in \{1,\dots,s\}$, define the deletion map
\begin{equation}\label{eq:deletion}
    d_r:\mathbb{C}[N]^{\otimes s}\longrightarrow \mathbb{C}[N]^{\otimes (s-1)}, \qquad d_r\ket{y_1,\dots,y_s} = \ket{y_1,\dots,y_{r-1},y_{r+1},\dots,y_s}.
\end{equation}
For $s=1$, we use the convention that the zeroth tensor power is identified with $\mathbb{C}$, so in particular $C_0^\perp:=\mathbb{C}$ and $d_1\ket{y}=1 \in \mathbb{C}$.

By taking the direct sum of these maps for the different values of $r$, we obtain the operator
\begin{equation}\label{eq:D}
    D:= \bigoplus_{r=1}^s d_r\big|_{C_s^\perp}:C_s^\perp\longrightarrow (C_{s-1}^\perp)^{\oplus s}.
\end{equation}

\begin{claim}\label{clm:kernel}
    \[
    \ker D = C_s^\perp \cap W^{\otimes s}.
    \]
\end{claim}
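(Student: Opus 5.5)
The plan is to characterise both sides through contractions with the all-ones vector. Write $\mathbf 1:=\sum_{y\in[N]}\bra y = \sqrt N\bra{\widehat 0}$. The first step is the observation that, on all of $\mathbb C[N]^{\otimes s}$, the deletion map is a partial contraction:
\[
d_r=\id^{\otimes(r-1)}\otimes\mathbf 1\otimes\id^{\otimes(s-r)}.
\]
This holds on basis vectors, since $\mathbf 1\ket{y_r}=1$ for every $y_r$, and hence on all vectors by linearity. For $s=1$ the formula reads $d_1=\mathbf 1$, which matches the convention $d_1\ket y=1$.

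Next I would identify $W^{\otimes s}$ as the common kernel of these contractions. Since $\mathbb C[N]=W\oplus\mathbb C\ket{\widehat0}$, the space $\mathbb C[N]^{\otimes s}$ decomposes into the orthogonal sum of the $2^s$ tensor products of factors drawn from $W$ and $\mathbb C\ket{\widehat0}$. The operator $\id^{\otimes(r-1)}\otimes\bra{\widehat0}\otimes\id^{\otimes(s-r)}$ annihilates exactly those summands whose $r$-th factor is $W$. On the remaining summands it is injective: it maps $A\otimes\ket{\widehat0}\otimes B$ isometrically onto $A\otimes B$, and distinct summands land in mutually orthogonal subspaces. Consequently
\[
\bigcap_{r=1}^s\ker d_r=W^{\otimes s},
\]
the unique summand with all factors in $W$. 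One could alternatively phrase this as: the product of the commuting projectors $\id-\proj{\widehat0}$ acting on the separate factors is $\Pi_{W^{\otimes s}}$, and a vector lies in $W^{\otimes s}$ precisely when every $\bra{\widehat0}$-contraction of it vanishes.

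It then remains to check the codomain and conclude. For an injective tuple, deleting one coordinate leaves an injective tuple, so $d_r(C_s^\perp)\subseteq C_{s-1}^\perp$ and $D$ is well defined as stated. Moreover $D$ is only the restriction of $\bigoplus_r d_r$ to $C_s^\perp$, so its kernel is
\[
\ker D=C_s^\perp\cap\bigcap_{r=1}^s\ker d_r=C_s^\perp\cap W^{\otimes s}.
\]

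There is no real obstacle here. The only point needing care is the injectivity of the single-factor contraction on the non-$W$ summands, which is what justifies the intersection-of-kernels identity.
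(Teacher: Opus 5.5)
Your proof is correct and follows the paper's argument. You identify each $d_r$ with a rescaled contraction $\id^{\otimes(r-1)}\otimes\bra{\widehat 0}\otimes\id^{\otimes(s-r)}$, use that $W^{\otimes s}$ is the common kernel of these contractions, and then restrict to $C_s^\perp$. Your decomposition of $\mathbb C[N]^{\otimes s}$ into $2^s$ orthogonal summands additionally justifies the common-kernel identity, which the paper only asserts.
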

\begin{proof}
    Recall from \eq{W-space} that $W:=\ker(\bra{\widehat{0}})$. Hence,
    \[
    W^{\otimes s} = \bigcap_{r=1}^s \ker\bigg(\id^{\otimes(r-1)}	\otimes \bra{\widehat{0}} \otimes \id^{\otimes(s-r)}\bigg).
    \]
    On the other hand, each contraction in the $r$-th tensor factor is related to the deletion map $d_r$ by
    \[
    \id^{\otimes(r-1)} \otimes \bra{\widehat{0}} \otimes \id^{\otimes(s-r)} = \sqrt{\frac{1}{N}} d_r.
    \]
    Thus, these two maps have the same kernel, and therefore
    \begin{equation}\label{eq:kerdr}
        \bigcap_{r=1}^s \ker d_r = W^{\otimes s}.
    \end{equation}
    Since $D$ is the map obtained by restricting $\bigoplus_{r=1}^s d_r$ to $C_s^\perp$, we have
    \[
    \ker D = C_s^\perp \cap W^{\otimes s}. \qedhere
    \]
\end{proof}

We now look at $D$ from a different perspective. Let
\begin{equation}\label{eq:omega}
    \Omega_s:=\{(y_1,\dots,y_s)\in [N]^s:\ y_1,\dots,y_s \text{ are pairwise distinct}\}.
\end{equation}
Then by \eq{C-perp} it is immediate that the orthonormal basis of $C_s^\perp$ is precisely given by $\bigl\{\ket{y_1,\dots,y_s}:(y_1,\dots,y_s)\in\Omega_s\bigr\}$. Consider the adjoint of the restriction of the deletion map $d_r$ to $C_s^\perp$, which is a linear map from $C_{s-1}^\perp$ to $ C_s^\perp$, acting as 
\begin{equation}\label{eq:adjoint}
    (d_r\big|_{C_s^\perp})^\dagger\ket{y_1,\dots,y_{s-1}} = \sum_{\substack{y\in[N]\\ y\notin\{y_1,\dots,y_{s-1}\}}} \ket{y_1,\dots,y_{r-1},y,y_r,\dots,y_{s-1}}.
\end{equation}

Let $A_{N,s}$ be the adjacency matrix of the \textit{arrangement graph} on the
basis vectors $\{\ket{y_1,\dots,y_s}:(y_1,\dots,y_s)\in\Omega_s\}$ of $C_s^\perp$, where two
basis vectors are adjacent if and only if the corresponding (injective) $s$-tuples differ exactly in one coordinate. So $A_{N,s}$ is a linear operator from $C_s^{\perp}$ to $C_s^{\perp}$, acting as
\begin{equation}\label{eq:adjacency}
    A_{N,s}\ket{y_1,\dots,y_s} = \sum_{r=1}^s \sum_{y\notin\{y_1,\dots,y_s\}}\ket{y_1,\dots,y_{r-1},y,y_{r+1},\dots,y_s}.
\end{equation}

The following claim means that $\ker D$ is the $-s$-eigenspace of $A_{N,s}$, and we can analyse $\ker D$ by studying the spectrum of $A_{N,s}$.
\begin{claim}\label{clm:kernelbis}
    \[
    \ker D=\ker(D^\dagger D)=\ker(A_{N,s}+s\cdot\id).
    \]
\end{claim}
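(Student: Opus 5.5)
The claim has two equalities, and I will prove them separately.

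\emph{First equality.} The identity $\ker D=\ker(D^\dagger D)$ is the standard fact that holds for any linear map between finite-dimensional inner product spaces. If $D^\dagger Dv=0$, then
\[
0=\bra{v}D^\dagger D\ket{v}=\norm{Dv}^2,
\]
so $Dv=0$. The reverse inclusion is trivial.

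\emph{Second equality.} It suffices to show the operator identity $D^\dagger D=A_{N,s}+s\cdot\id$ on $C_s^\perp$. Since $D=\bigoplus_r d_r|_{C_s^\perp}$ maps into a direct sum, we have
\[
D^\dagger D=\sum_{r=1}^s (d_r|_{C_s^\perp})^\dagger\, d_r|_{C_s^\perp}.
\]
One point to check is that $d_r$ really maps $C_s^\perp$ into $C_{s-1}^\perp$. This holds because deleting a coordinate of an injective tuple leaves an injective tuple. Consequently the adjoint taken inside $C_{s-1}^\perp$ is the map given in \eq{adjoint}.

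Next I compute each summand on a basis vector $\ket{y_1,\dots,y_s}$ with $(y_1,\dots,y_s)\in\Omega_s$. Applying $d_r$ deletes $y_r$. Applying the adjoint from \eq{adjoint} then reinserts at position $r$ every $y\notin\{y_1,\dots,y_s\}\setminus\{y_r\}$. The term $y=y_r$ returns the original vector. The terms $y\notin\{y_1,\dots,y_s\}$ give exactly the neighbours of $(y_1,\dots,y_s)$ that differ from it in coordinate $r$.

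Summing over $r$, the original vector appears $s$ times, and the remaining terms reproduce \eq{adjacency}. Hence $D^\dagger D=A_{N,s}+s\cdot\id$, and taking kernels gives the second equality.

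The only delicate step is the bookkeeping in the adjoint: the excluded set after deletion is $\{y_1,\dots,y_s\}\setminus\{y_r\}$, which is exactly why the diagonal term $y=y_r$ survives. The edge case $s=1$ uses the convention $C_0^\perp=\mathbb C$. There $d_1^\dagger(1)=\sum_y\ket y$, so
\[
D^\dagger D\ket y=\sum_{y'}\ket{y'}=\ket{y}+A_{N,1}\ket y,
\]
which is consistent with the general formula.
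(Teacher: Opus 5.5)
Your proposal is correct and follows essentially the same route as the paper. You obtain $D^\dagger D=\sum_r (d_r|_{C_s^\perp})^\dagger d_r|_{C_s^\perp}=A_{N,s}+s\cdot\id$ by reinserting the deleted coordinate via \eq{adjoint}, then combine this with the general fact $\ker D=\ker(D^\dagger D)$; your explicit checks that $d_r$ maps $C_s^\perp$ into $C_{s-1}^\perp$ and of the case $s=1$ are harmless additions that the paper leaves implicit.
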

\begin{proof}
We verify that
\begin{equation}\label{eq:laplacian}
    D^{\dagger}D=A_{N,s}+s\cdot\id.
\end{equation}
Indeed, by \eq{adjoint} for each $r$,
\[
(d_r\big|_{C_s^\perp})^\dagger d_r\big|_{C_s^\perp}\ket{y_1,\dots,y_s} = \ket{y_1,\dots,y_s} + \sum_{\substack{\mathbf y'\in\Omega_s\\ \mathbf y' \text{ differs from } (y_1,\dots,y_s)\text{ only at index } r}} 	\ket{\mathbf y'},
\]
so summing over $r\in \{1,\dots,s\}$ counts each neighbour in the arrangement graph exactly once, recovering \eq{adjacency} and additionally 
contributes the diagonal term $s\ket{y_1,\dots,y_s}$. 

Since for every linear map $\ker D=\ker(D^\dagger D)$, we conclude by \eq{laplacian} that 
\[
\ker D=\ker(D^\dagger D)=\ker(A_{N,s}+s\cdot\id). \qedhere
\]
\end{proof}

Chen, Ghorbani, and Wong showed that $-s$ is the least eigenvalue of $A_{N,s}$ for $N\geq2s$, obtained a lower bound on its multiplicity, and conjectured that it is eventually the unique negative eigenvalue~\cite{chen2013cyclic}. Araujo and Bratten subsequently determined the spectrum through character ratios and proved uniqueness under the sufficient condition $N>s(s+1)(s+5)/6$~\cite[Theorem~3.5, Proposition~4.1, and its proof]{araujo2017spectra}. The following lemma determines the entire bottom of the spectrum for every $N\geq2s$.

\begin{restatable}[Arrangement Spectral Gap]{lemma}{Arrangement}\label{lem:spectrum}
    Assume $s \geq 1$ and $N\geq 2s$. The smallest eigenvalue of $A_{N,s}$ on $C_s^\perp$ is $-s$, and its eigenspace is, as an $\mathfrak S_N$-representation,
    \[
    \bigoplus_{\mu\vdash s} \abs{\mathrm{SYT}(\mu)} S^{(N-s,\mu)}.
    \]
    Moreover, the next distinct eigenvalue is exactly $N-(3s-2)$. Consequently, the gap above the least eigenvalue is $N-2s+2$, and $-s$ is the unique negative eigenvalue exactly when $N\geq3s-2$.  
\end{restatable}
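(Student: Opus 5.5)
The plan is to diagonalise $A_{N,s}$ via Jucys--Murphy/content calculus. Identify $\Omega_s$ with the coset space $\mathfrak S_N/\mathfrak S_{N-s}$: an injective tuple $(y_1,\dots,y_s)$ is the image of positions $1,\dots,s$ under a permutation $g\in\mathfrak S_N$, defined up to right multiplication by $\mathfrak S_{N-s}$ acting on positions $s+1,\dots,N$. Thus $C_s^\perp\cong\mathbb C[\mathfrak S_N]^{\mathfrak S_{N-s}}$. Replacing coordinate $r$ by a label $y$ not in the tuple is right multiplication by a transposition $(r\,j)$ with $j>s$. Summing over the $N-s$ choices of $j$ gives each unused label exactly once, so
\[
A_{N,s}=\text{right action of } X:=\sum_{r\le s<j}(r\,j).
\]
We then write
\[
X=T_N-T_{[s]}-T_{[s+1,N]},
\]
where $T_{(\cdot)}$ denotes the sum of all transpositions in the indicated symmetric group. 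Each piece commutes with the left $\mathfrak S_N$-action, and $T_{[s+1,N]}$ acts on $\mathfrak S_{N-s}$-invariants as the scalar $\binom{N-s}{2}$.

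The bimodule decomposition is $\mathbb C[\mathfrak S_N]\cong\bigoplus_{\lambda\vdash N}S^\lambda\otimes S^\lambda$. Taking right $\mathfrak S_{N-s}$-invariants and decomposing further under the right $\mathfrak S_s$ by Pieri's rule gives
\[
C_s^\perp\cong\bigoplus_{\lambda\vdash N}\ \bigoplus_{\mu\vdash s,\ \lambda/\mu\ \text{horizontal strip}} S^\lambda\otimes S^\mu .
\]
Since $T_N$ acts on $S^\lambda$ by the content sum $c(\lambda)$ and $T_{[s]}$ acts on $S^\mu$ by $c(\mu)$, the $(\lambda,\mu)$ block is an eigenspace of $A_{N,s}$ with eigenvalue
\[
\theta(\lambda,\mu)=\sum_{b\in\lambda/\mu}c(b)-\binom{N-s}{2}.
\]
A sanity check for $s=1$: the block $\lambda=(N-1,1)$, $\mu=(1)$ gives $\binom{N-1}{2}-1-\binom{N-1}{2}=-1$, matching $J-I$.

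The core step is the combinatorial minimisation of $\theta$ over pairs where $\lambda/\mu$ is a horizontal strip of size $N-s$. The strip boxes lie in distinct columns, so $\sum c(b)=\sum\mathrm{col}-\sum\mathrm{row}$.

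For $\lambda=(N-s,\mu)$, which is a partition precisely because $N\ge 2s$ gives $\mu_1\le s\le N-s$, the strip occupies every column $0,\dots,N-s-1$ exactly once. Its row indices sum to $|\mu|=s$, since row $i\ge1$ holds $\mu_{i}-\mu_{i+1}$ strip boxes shifted one row down. This gives $\theta=\binom{N-s}{2}-s-\binom{N-s}{2}=-s$, so this family realises the claimed eigenspace $\bigoplus_\mu|\mathrm{SYT}(\mu)|S^{(N-s,\mu)}$, with multiplicity $\dim S^\mu=|\mathrm{SYT}(\mu)|$ coming from the right factor.

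For any other pair, the first row of $\lambda$ is strictly longer than $N-s$ (it cannot be shorter, since a horizontal strip of size $N-s$ needs $N-s$ distinct columns, and those columns are bounded by $\lambda_1$). I would show that every unit of increase of $\lambda_1$ beyond $N-s$ pushes some strip box to a column $\ge N-s$ and lowers the available row sum. The cheapest such move is $\lambda=(N-s+1,\nu)$ with $\nu\vdash s-1$ and $\mu$ chosen so that one strip box leaves row $\ge1$ for column $N-s$. I expect this move to shift $\theta$ by exactly $N-2s+2$, with equality attained (for instance) by $\mu$ a single column $(1^s)$ and $\lambda=(N-s+1,1^{s-1})$; this yields the next eigenvalue $N-3s+2$.

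The main obstacle is this extremal step: proving that every non-$(N-s,\mu)$ pair satisfies $\theta\ge N-3s+2$. I would first try an exchange argument that moves strip boxes one at a time toward the configuration $\lambda=(N-s,\mu)$, tracking $\Delta(\sum\mathrm{col}-\sum\mathrm{row})$ and using $N\ge2s$ to bound how many rows the strip can occupy. The statement that $-s$ is the unique negative eigenvalue iff $N\ge 3s-2$ then follows immediately.
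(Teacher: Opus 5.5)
\textbf{Gap: the extremal step is not proved, and your equality witness is wrong.}

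Your setup is sound. In fact it derives directly what the paper imports from Nikitin and from Araujo--Bratten. You identify $C_s^\perp$ with right $\mathfrak S_{N-s}$-invariants and write $A_{N,s}$ as right multiplication by $X=T_N-T_{[s]}-T_{[s+1,N]}$. The bimodule decomposition plus Pieri's rule then reproduces the paper's blocks: your pair $(\lambda,\mu)$ is the paper's $S^{(N-|\nu|,\nu)}\subseteq C_s^\perp e_{\mathfrak t}$ with $\nu\prec\mu$. It also reproduces the eigenvalue $\theta=c(\lambda)-c(\mu)-\binom{N-s}{2}$, and your check that $\theta=-s$ for $\lambda=(N-s,\mu)$ is correct.

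However, the only part of the lemma that goes beyond those known results is the claim that every other block has $\theta\ge N-3s+2$, with equality attained. You leave this as an expectation. Without it you have neither the identification of the $(-s)$-eigenspace (which needs $\theta\neq-s$ off the family $(N-s,\mu)$) nor the gap.

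Your proposed extremal configuration is also wrong. For $\mu=(1^s)$ and $\lambda=(N-s+1,1^{s-1})$ one gets $c(\lambda)-c(\mu)=\binom{N-s+1}{2}$, hence $\theta=N-s$. This exceeds $N-3s+2$ as soon as $s\ge2$. The deleted box there is at the bottom of a column with content $-s$, so it is the most expensive move, not the cheapest. The correct witness is $\mu=(s)$, $\lambda=(N-s+1,s-1)$: a box of content $s-2$ at the end of row $2$ is traded for a box of content $N-s$ in row $1$.

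\textbf{How to close it.} The missing step needs no iterative exchange; compare directly with $(N-s,\mu)$. Index rows and columns from $1$, with content $j-i$.
\begin{itemize}
\item The strip condition $\lambda_1\ge\mu_1\ge\lambda_2\ge\mu_2\ge\cdots$ gives $\lambda=(N-|\nu|,\nu)$ with $\nu_i\le\mu_i$.
\item So $\lambda$ arises from $(N-s,\mu)$ by deleting $\delta_i=\mu_i-\nu_i$ boxes from the end of row $i+1$ and appending $q=\sum_i\delta_i=s-|\nu|$ boxes to row $1$. Here $q=0$ if and only if $\lambda=(N-s,\mu)$.
\item The appended boxes have contents $N-s,\dots,N-s+q-1$.
\item A deleted box in row $i+1$ has content at most $\mu_i-i-1$. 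The $i-1$ rows above it are nonempty, so $\mu_i\le s-i+1$, and this content is at most $s-2i\le s-2$.
\end{itemize}
Hence
\[
\theta(\lambda,\mu)\;\ge\;-s+q\,(N-2s+2).
\]
For $N\ge2s$ and $q\ge1$ this is at least $N-3s+2>-s$. Equality holds exactly when $q=1$ and the deleted box has content $s-2$, which forces $\mu=(s)$ and $\lambda=(N-s+1,s-1)$. This is precisely the paper's formula for $\gamma((N-|\lambda|,\lambda))-\gamma((N-s,\mu))$ together with its lower bound. It simultaneously gives the eigenspace identification, the exact next eigenvalue, and the $N\ge3s-2$ criterion.
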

\begin{proof}
    See \sec{proof-spectrum}.
\end{proof}

We end this subsection by combining our results together in order to prove \thm{kernel}.
\begin{proof}[Proof of {\thm{kernel}}]
    From \clm{kernel} and \clm{kernelbis}, we have that the kernel of $D$ coincides with both $C_s^\perp\cap W^{\otimes s}$ and the $(-s)$-eigenspace of $A_{N,s}$. Furthermore, \lem{spectrum} tells us that the $(-s)$-eigenspace of $A_{N,s}$ is isomorphic to    
    \[
    \bigoplus_{\mu\vdash s} \abs{\mathrm{SYT}(\mu)} S^{(N-s,\mu)}.
    \]
    
    Thus, in summary, 
    \[
    C_s^\perp\cap W^{\otimes s} = \ker D = \ker (A_{N,s} +s\cdot\id)   \cong \bigoplus_{\mu\vdash s} \abs{\mathrm{SYT}(\mu)} S^{(N-s,\mu)}. \qedhere
    \]
\end{proof}

\subsection{Alternating projections onto $C_s^\perp$ and $W^{\otimes s}$}\label{sec:alternating-projections}

The product $\Pi_{W^{\otimes s}}\Pi_{C_s^\perp}$ is not necessarily equal to the orthogonal projection onto $C_s^\perp\cap W^{\otimes s}$.  In general, the comparison of these three projections is related to the literature on alternating projections (see, e.g.,~\cite{Kayalar1988}). In particular, the operator-norm error is exactly the cosine of the Friedrichs angle between the two subspaces (the angle between their components orthogonal to the intersection).

In our case, we derive a direct upper bound on this error without computing the angle and show, in the following lemma, that these two operators are close.
\begin{lemma}\label{lem:proj-diff}
    Assume $s \geq 1$ and $N\geq2s$. Then
    \begin{equation}\label{eq:proj-diff}
        \norm{\Pi_{C_s^\perp}\Pi_{W^{\otimes s}}-\Pi_{C_s^\perp\cap W^{\otimes s}}}^2 = \norm{\Pi_{W^{\otimes s}}\Pi_{C_s^\perp}-\Pi_{C_s^\perp\cap W^{\otimes s}}}^2 \leq\frac{2s-2}{N}.
    \end{equation}
    In particular, if $u\in \mathbb C[N]^{\otimes s}$ satisfies $u\perp (C_s^\perp\cap W^{\otimes s})$, then 
    \begin{equation*}
        \norm{\Pi_{C_s^\perp}\Pi_{W^{\otimes s}}u}^2\leq\frac{2s-2}{N}\norm{u}^2.
    \end{equation*}
\end{lemma}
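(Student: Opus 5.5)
The plan is to reduce the operator-norm statement to the spectral gap of the arrangement graph in \lem{spectrum}. I will use the identity $D^\dagger D=A_{N,s}+s\cdot\id$ from \clm{kernelbis}, and the observation that the contractions with $\ket{\widehat0}$ become the deletion maps once we compress to $C_s^\perp$.

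Write $P:=\Pi_{C_s^\perp}$, $Q:=\Pi_{W^{\otimes s}}$ and $R:=\Pi_{C_s^\perp\cap W^{\otimes s}}$.

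\textbf{Step 1 (reduction).} Since $R\preceq P$ and $R\preceq Q$, we have $PR=QR=R$, and $R$ commutes with both $P$ and $Q$. Hence
\[
PQ-R=PQ(\id-R)=(\id-R)PQ(\id-R),
\]
and the adjoint of this operator is $QP-R$. This gives the claimed equality of the two norms. It also shows that both norms equal $\sup\norm{Qw}/\norm{w}$ over nonzero $w\in C_s^\perp$ with $w\perp C_s^\perp\cap W^{\otimes s}$. The ``in particular'' statement then follows at once, since for $u\perp(C_s^\perp\cap W^{\otimes s})$ we have $PQu=(PQ-R)u$.

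\textbf{Step 2 (local contractions versus deletions).} Let $E_r:=\id^{\otimes(r-1)}\otimes\proj{\widehat0}\otimes\id^{\otimes(s-r)}$. These are commuting projections with $Q=\prod_r(\id-E_r)$. By the computation in the proof of \clm{kernel}, $E_r=\frac1N d_r^\dagger d_r$ up to the identification of the $r$-th slot. Compressing to $C_s^\perp$ gives
\[
PE_rP=\tfrac1N (d_r|_{C_s^\perp})^\dagger d_r|_{C_s^\perp},\qquad\text{so}\qquad P\Big(\sum_rE_r\Big)P=\tfrac1N D^\dagger D=\tfrac1N(A_{N,s}+s\cdot\id).
\]
The $(-s)$-eigenspace of $A_{N,s}$ is exactly $C_s^\perp\cap W^{\otimes s}$. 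By \lem{spectrum}, for $w\in C_s^\perp$ orthogonal to this space,
\[
\sum_r\norm{E_rw}^2\geq\frac{N-2s+2}{N}\norm{w}^2.
\]
In words, a collision-free vector orthogonal to the kernel carries almost all of its weight on the $\ket{\widehat0}$-directions of some slot.

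\textbf{Step 3 (main obstacle).} I then need to turn the bound of Step 2 into the sharp estimate
\[
\norm{Qw}^2=\norm{w}^2-\norm{(\id-Q)w}^2\leq\frac{2s-2}{N}\norm{w}^2.
\]
The crude route loses a factor of $s$. Indeed, $Q\preceq \id-E_r$ for every $r$, so averaging over $r$ gives $PQP\preceq P-\frac1{sN}D^\dagger D$, which is only a $1-\Theta(1/s)$ bound.

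I would first try to avoid this loss by restricting attention to $\operatorname{im}(D^\dagger)$. This is the orthogonal complement of $\ker D$ inside $C_s^\perp$, and every such $w$ has the form $w=\sum_r (d_r|_{C_s^\perp})^\dagger v_r$. I would then bound $\norm{Qw}$ directly. Write $Q(d_r|_{C_s^\perp})^\dagger v_r=QPd_r^\dagger v_r$, and use that $Qd_r^\dagger=0$ on the unrestricted space, because $d_r^\dagger v$ has $\ket{\widehat0}$ (up to a factor $\sqrt N$) in slot $r$. The defect $(d_r|_{C_s^\perp})^\dagger v_r-d_r^\dagger v_r=-(\id-P)d_r^\dagger v_r$ consists only of the $s-1$ colliding insertions. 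Each of these is the inserted label equal to some $y_j$, and carries weight $\norm{v_r}^2$ per $j$.

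This suggests the estimate
\[
\norm{Qw}^2\leq\Big\|\sum_r(\id-P)d_r^\dagger v_r\Big\|^2\lesssim (s-1)\sum_r\norm{v_r}^2.
\]
Combined with $\norm{w}^2=\norm{D^\dagger v}^2\geq (N-2s+2)\norm{v}^2$, which holds when $v$ is chosen in $(\ker D^\dagger)^\perp$, this would give a bound of order $(s-1)/N$.

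The delicate part is handling the cross terms between different $r$ so as to reach exactly $2s-2$. Here I would compute $\norm{\sum_r(\id-P)d_r^\dagger v_r}^2$ explicitly on collision basis states with exactly one repeated pair. Each such state arises from at most two insertion positions, which is plausibly the source of the factor $2$ in $2s-2$. I would then absorb the mismatch between $N$ and $N-2s+2$ using $N\geq 2s$.
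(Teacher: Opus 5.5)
\textbf{Gap: your route only reaches $\frac{2s-2}{N-2s+2}$, not the claimed $\frac{2s-2}{N}$.}

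Steps 1 and 2 are correct. In Step 3 you also have the right objects. With $y:=\sum_r d_r^\dagger v_r$ you get $Qy=0$, $w=Py$ and $Qw=-Q(\id-P)y$. Here $(\id-P)y=B^\dagger v$ for the paper's operator $B=\bigoplus_r\Pi_{C_{s-1}^\perp}d_r|_{C_s}$. The cross terms need no explicit computation: $B$ has $0/1$ entries, with at most $2$ nonzero entries per column and at most $s-1$ per row. The Schur test then gives $\norm{B^\dagger v}^2\le(2s-2)\norm{v}^2$.

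The problem is the last step. Your chain only yields
\[
\norm{Qw}^2\le\norm{B^\dagger v}^2\le(2s-2)\norm{v}^2\le\frac{2s-2}{N-2s+2}\norm{w}^2.
\]
The mismatch cannot be absorbed using $N\ge 2s$. That hypothesis only gives $N-2s+2\ge 2$, so this bound is strictly weaker than $\frac{2s-2}{N}$ for every $s\ge2$. It even exceeds $1$ when $N$ is close to $2s$. As planned, the argument does not prove the stated inequality.

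\textbf{Missing idea: a Pythagorean, self-improving step.} It uses the fact that the error vector itself lies in $W^{\otimes s}$.

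The paper takes $v=(QP-R)u$, which plays the role of your $Qw$. This $v$ lies in $W^{\otimes s}$ and is orthogonal to $C_s^\perp\cap W^{\otimes s}$. Split $v=z+c$ with $z=Pv$ and $c=(\id-P)v\in C_s$. Since $d_rv=0$ for every $r$, you get $Dz=-Bc$. Hence
\[
(N-2s+2)\norm{z}^2\le\norm{Dz}^2=\norm{Bc}^2\le(2s-2)\norm{c}^2.
\]
The identity $\norm{v}^2=\norm{z}^2+\norm{c}^2$ then upgrades this to $\norm{z}^2\le\frac{2s-2}{N}\norm{v}^2$. This is exactly where $N-2s+2$ becomes $N$. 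Finally, $\norm{v}^2=\langle z,u\rangle\le\norm{z}\norm{u}$ concludes.

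\textbf{Repairing your own route.} Set $x:=\norm{Qw}^2/\norm{w}^2$ and $a:=\frac{2s-2}{N-2s+2}$, with $v\in(\ker D^\dagger)^\perp$ as you chose.
\begin{itemize}
\item Since $\langle Qw,y\rangle=\langle w,Qy\rangle=0$ and $B^\dagger v\in C_s$, you can write $\norm{Qw}^2=-\langle(\id-P)Qw,B^\dagger v\rangle$.
\item Since $\norm{PQw}\ge\langle w,Qw\rangle/\norm{w}$, you get $\norm{(\id-P)Qw}^2\le(1-x)\norm{Qw}^2$.
\item You already have $\norm{B^\dagger v}^2\le a\norm{w}^2$.
\end{itemize}
Together these give $x\le a(1-x)$, that is, $x\le\frac{a}{1+a}=\frac{2s-2}{N}$.
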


\begin{proof}

The equality in \eq{proj-diff} follows from taking the adjoint, so it is sufficient to bound 
\[\norm{\Pi_{W^{\otimes s}}\Pi_{C_s^\perp}-\Pi_{C_s^\perp\cap W^{\otimes s}}}^2.\]

We first consider the case $s=1$. Then,
\[
C_1^\perp=\mathbb C[N], \qquad C_1^\perp\cap W=W.
\]
Consequently,
\[
\Pi_W\Pi_{C_1^\perp} - \Pi_{C_1^\perp\cap W} = \Pi_W-\Pi_W = 0,
\]
and the result is immediate.

Henceforth, assume $s\geq2$. Let $u \in \mathbb{C}[N]^{\otimes s}$ and consider $v := (\Pi_{W^{\otimes s}}\Pi_{C_s^\perp} -\Pi_{C_s^\perp\cap W^{\otimes s}})u$. We first prove that 
\[
v\in W^{\otimes s},\qquad  v\in (C_s^\perp\cap W^{\otimes s})^\perp. 
\]

For the first containment, observe that $v\in W^{\otimes s}$, since the images of both projectors $\Pi_{W^{\otimes s}}$ and $\Pi_{C_s^\perp\cap W^{\otimes s}}$ lie in $W^{\otimes s}$. For the second containment, we prove the more general statement 
\[
\Pi_{C_s^\perp\cap W^{\otimes s}}( \Pi_{W^{\otimes s}}\Pi_{C_s^\perp} -\Pi_{C_s^\perp\cap W^{\otimes s}})=0,
\]
leading to $\Pi_{C_s^\perp\cap W^{\otimes s}}v=0$, that is, $v\in (C_s^\perp\cap W^{\otimes s})^\perp$. Indeed, 
\begin{align*}
    \Pi_{C_s^\perp\cap W^{\otimes s}}( \Pi_{W^{\otimes s}}\Pi_{C_s^\perp} -\Pi_{C_s^\perp\cap W^{\otimes s}})
    &=(\Pi_{C_s^\perp\cap W^{\otimes s}} \Pi_{W^{\otimes s}}\Pi_{C_s^\perp} -\Pi_{C_s^\perp\cap W^{\otimes s}}) \\
    &=(\Pi_{C_s^\perp\cap W^{\otimes s}}  -\Pi_{C_s^\perp\cap W^{\otimes s}})=0,
\end{align*}
where for the second equality we have used that $C_s^\perp\cap W^{\otimes s}$ is a subspace of both $C_s^\perp$ and $W^{\otimes s}$, and therefore  $\Pi_{C_s^\perp\cap W^{\otimes s}} \Pi_{W^{\otimes s}}=\Pi_{C_s^\perp\cap W^{\otimes s}}$ and $\Pi_{C_s^\perp\cap W^{\otimes s}}\Pi_{C_s^\perp}=\Pi_{C_s^\perp\cap W^{\otimes s}}$.

We now decompose $v\in (C_s^\perp\cap W^{\otimes s})^\perp$ into its $C_s^{\perp}$- and $C_s$-components (see \eq{C-space} and \eq{C-perp}). Observe first that, since $C_s\subseteq (C_s^\perp\cap W^{\otimes s})^{\perp}$, the decomposition lies in fact inside $(C_s^\perp\cap W^{\otimes s})^\perp$. So we let $v=z+c$ with $z \in C_s^{\perp} \cap (C_s^\perp\cap W^{\otimes s})^{\perp}$ and $c\in C_s\subseteq (C_s^\perp\cap W^{\otimes s})^{\perp}$.

We relate the norm of $v$ to $u$ as follows, using that $v\in W^{\otimes s}$ and $v \in (C_s^\perp\cap W^{\otimes s})^{\perp}$:
\begin{equation}\label{eq:vzu}
\begin{split}
    \norm{v}^2 &= \left\langle v, \left(\Pi_{W^{\otimes s}}\Pi_{C_s^\perp} - \Pi_{C_s^\perp\cap W^{\otimes s}} \right)u \right\rangle  \\
    &= \left\langle \left(\Pi_{C_s^\perp}\Pi_{W^{\otimes s}} - \Pi_{C_s^\perp\cap W^{\otimes s}} \right) v ,u \right\rangle \\
    &= \left\langle \Pi_{C_s^\perp}v,u \right\rangle =\langle z,u\rangle. 	
\end{split}   
\end{equation}

For now, assume the following claim about $z$ and $c$, which we prove separately below.
\begin{claim}\label{clm:DB}
\begin{equation*}
     (N-2s+2)\norm{z}^2 \leq (2s-2)\norm{c}^2,
\end{equation*}
\end{claim}
The proof then concludes as follows:
\[
\norm{v}^2=\norm{z}^2+\norm{c}^2 \geq \left(1+\frac{N-2s+2}{2s-2}\right)\norm{z}^2 = \frac{N}{2s-2}\norm{z}^2.
\]
Combining this with \eq{vzu} and applying Cauchy-Schwarz gives
\[
\norm{v}^2 	= \abs{\langle z,u\rangle} \leq \norm{z}\norm{u} \leq \sqrt{\frac{2s-2}{N}}\norm{v}\norm{u}.
\]
Dividing by $\norm{v}$ (assuming $v \neq 0$, as otherwise the inequality is trivially true) yields the desired
\[
\norm{v} \leq \sqrt{\frac{2s-2}{N}}\norm{u} \qedhere.
\]
\end{proof}

\begin{proof}[Proof of {\clm{DB}}]
For $r\in\{1,\dots,s\}$, recall the deletion map $d_r$ from \eq{deletion}:
\[
d_r:\mathbb{C}[N]^{\otimes s}\longrightarrow \mathbb{C}[N]^{\otimes (s-1)},\qquad d_r\ket{y_1,\dots,y_s} = \ket{y_1,\dots,y_{r-1},y_{r+1},\dots,y_s}.
\]       
Thanks to \eq{kerdr} in the proof of \clm{kernel},  we have seen that $d_r(W^{\otimes s}) = \{0\}$ for every $r\in\{1,\dots,s\}$. In particular, for our $v \in W^{\otimes s}$,
\begin{equation}\label{eq:dr}
    d_r(z) = -d_r(c) = -\Pi_{C_{s-1}^\perp}d_r(c).
\end{equation}
For the second equality, we used $z\in C_s^\perp$, which implies $d_r(z)\in C_{s-1}^\perp$. By the first equality of \eq{dr}, $d_r(c)\in C_{s-1}^\perp$ as well, so it is invariant under $\Pi_{C_{s-1}^\perp}$ and we may insert this projection.

Taking direct sums of $d_r$ and $\Pi_{C_{s-1}^\perp}d_r$ over $r$ gives the following operators, the first of which was already defined in \eq{D}:
\begin{align}\label{eq:DB}
    &D:= \bigoplus_{r=1}^s d_r\big|_{C_s^\perp}:C_s^\perp\longrightarrow (C_{s-1}^\perp)^{\oplus s}, &&B:= \bigoplus_{r=1}^s \Pi_{C_{s-1}^\perp}d_r\big|_{C_s}:C_s\longrightarrow (C_{s-1}^\perp)^{\oplus s}.
\end{align}
By \eq{dr}, these operators satisfy
\[\norm{Dz}^2 = \norm{Bc}^2.\]

We now study these operators in more detail and prove the following, which will conclude the proof: 
\[
\norm{Dz}^2 \geq (N-2s+2)\norm{z}^2, \qquad \norm{Bc}^2 \leq (2s-2)\norm{c}^2.
\]

We start with the lower bound on $\norm{Dz}^2$. By \eq{laplacian} and \lem{spectrum}, $D^\dagger D=A_{N,s}+s\cdot\id$ is positive semidefinite. Its kernel is $C_s^\perp\cap W^{\otimes s}$ by \clm{kernel}, and its smallest positive eigenvalue is $s+\bigl(N-(3s-2)\bigr)=N-2s+2$. Therefore, on $\bigl(C_s^\perp\cap W^{\otimes s}\bigr)^\perp\cap C_s^\perp$, the operator $D^\dagger D$ is bounded below by $(N-2s+2)\id$. Thus, for any $z\in\bigl(C_s^\perp\cap W^{\otimes s}\bigr)^\perp\cap C_s^\perp$,
\begin{equation}\label{eq:bound-D}
    \norm{Dz}^2 \geq (N-2s+2)\norm{z}^2.
\end{equation}

With respect to the standard orthonormal bases, the matrix entries of $B$ lie in $\{0,1\}$. Starting with a non-injective $s$-tuple in $C_s$, deleting one coordinate through $d_r$ either yields an injective tuple, in which case $\Pi_{C_{s-1}^\perp}$ acts as the identity, or leaves a non-injective tuple, in which case the projection acts as $0$. Since deleting one coordinate from a non-injective $s$-tuple can yield an injective $(s-1)$-tuple for at most two choices of the deleted coordinate, each column of $B$ contains at most $2$ nonzero entries. Each row contains at most $s-1$ nonzero entries, since for fixed $r\in\{1,\dots,s\}$ and fixed injective tuple $(z_1,\dots,z_{s-1})$, the preimages under $d_r$ are obtained by inserting at position $r$ one of the $s-1$ entries $z_1,\dots,z_{s-1}$.
Therefore, for any $c=\sum_j c_j e_j \in C_s$ we have, by Cauchy-Schwarz,
\[
\norm{Bc}^2 = \sum_i\abs{\sum_j B_{ij}c_j}^2 \leq (s-1)\sum_{i,j} B_{ij}\abs{c_j}^2 \leq 2(s-1)\sum_j\abs{c_j}^2 = 2(s-1)\norm{c}^2. \qedhere
\]		        
\end{proof}

\subsection{Space bounds for collision-free database states}\label{sec:database-size}

We now apply the results from the previous subsection to states in the recording query model. In particular, we will now be working both on (tensor products of) $\mathbb C[N]$, as well as $\mathbb C[[N] \cup \{\bot\}]$.

Define
\[
V_{\eta,I}:{\cal H}_{\eta,I}\longrightarrow\mathbb C[N]^{\otimes s},\qquad V_{\eta,I}\ket{\eta,f}=\ket{f(i_1),\dots,f(i_s)}.
\]
This map is unitary, because it maps the computational basis of ${\cal H}_{\eta,I}$ bijectively onto the computational basis of $\mathbb C[N]^{\otimes s}$.

Moreover, the map $V_{\eta,I}$ is $\mathfrak S_N$-equivariant with respect to the action of $\mathfrak S_N$  on ${\cal H}_{\eta,I}$, since for every $\sigma\in\mathfrak S_N$ and every $\ket{\eta,f}\in{\cal H}_{\eta,I}$,
\begin{equation}\label{eq:V-equivariant}
\begin{split}
V_{\eta,I}\sigma\ket{\eta,f} &= V_{\eta,I}\ket{\eta,\sigma\circ f} \\ 
&= \ket{\sigma(f(i_1)),\dots,\sigma(f(i_s))} = \sigma V_{\eta,I}\ket{\eta,f}.
\end{split}
\end{equation}
Here range relabelling fixes $\bot$, so it preserves the occupied set $I$ and hence the subspace ${\cal H}_{\eta,I}$.

The map $V_{\eta,I}$ allows us to define the following commutative diagrams, as stated by the next claim:
\[
\begin{array}{ccc}
    \mathcal H_{\eta,I} & \xrightarrow{\ \Pi_{\sf Comp}\ } & \mathcal H_{\eta,I}
    \\[1ex]
    {\scriptstyle V_{\eta,I}}\downarrow && \downarrow{\scriptstyle V_{\eta,I}} \\[1ex]
    \mathbb C[N]^{\otimes s} & \xrightarrow{\ \Pi_{W^{\otimes s}}\ } & \mathbb C[N]^{\otimes s} 
\end{array},
\qquad\text{and}\qquad
\begin{array}{ccc}
    \mathcal H_{\eta,I} & \xrightarrow{\ \Pi_{=0}\ } & \mathcal H_{\eta,I} \\[1ex]
    {\scriptstyle V_{\eta,I}}\downarrow && \downarrow{\scriptstyle V_{\eta,I}} \\[1ex]
    \mathbb C[N]^{\otimes s} & \xrightarrow{\ \Pi_{C_s^\perp}\ } & \mathbb C[N]^{\otimes s}
\end{array}.
\]

\begin{claim}\label{clm:intertwine}
    On ${\cal H}_{\eta,I}$, the map $V_{\eta,I}$ relates $\Pi_{\sf Comp}$ to $\Pi_{W^{\otimes s}}$ as follows:
    \begin{equation}\label{eq:intertwine}
        V_{\eta,I}\Pi_{\sf Comp}\Lambda_{\eta,I}=\Pi_{W^{\otimes s}}V_{\eta,I}\Lambda_{\eta,I}.
    \end{equation}

Moreover, restricting to collision-free databases corresponds to projecting onto $C_s^\perp$:
\begin{equation}\label{eq:collision-free-intertwine}
    V_{\eta,I}\Pi_{=0}\Lambda_{\eta,I}=\Pi_{C_s^\perp}V_{\eta,I}\Lambda_{\eta,I}.
\end{equation}
\end{claim}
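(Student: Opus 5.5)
The proof works directly on computational basis states. Both identities follow from two observations: the projections involved act cell-by-cell, or diagonally, on $\mathcal H_{\eta,I}$, and $V_{\eta,I}$ only forgets the algorithm register $\eta$ and the $\bot$-cells outside $I$. I would first check that each projection preserves $\mathcal H_{\eta,I}$. Then I would compare the actions on basis vectors $\ket{\eta,f}$ with $f(x)=\bot$ for $x\notin I$ and $f(i_j)\in[N]$ for $i_j\in I$.

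For \eq{intertwine}, I would use the explicit tensor form \eq{image-iso}: $\Pi_{\sf Comp}=\id_{\cal WXY}\otimes\bigotimes_{x\in[M]}(\id-\proj{\widehat 0})$. The algorithm registers are untouched, so $\eta$ is preserved. The vector $\ket{\widehat 0}$ lies in $\mathbb C[N]\subset\mathbb C[[N]\cup\{\bot\}]$ and is orthogonal to $\ket\bot$. Hence for $x\notin I$ the local factor fixes $\ket{\bot}_{{\cal I}_x}$. For $x\in I$ it maps $\mathbb C[N]$ into itself, acting there as $\id-\proj{\widehat 0}=\Pi_W$. So $\Pi_{\sf Comp}$ maps $\mathcal H_{\eta,I}$ into itself, and on this subspace it equals $\proj{\eta}\otimes\bigotimes_{x\notin I}\proj{\bot}_{{\cal I}_x}\otimes\bigotimes_{j=1}^s(\Pi_W)_{{\cal I}_{i_j}}$. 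Conjugating by $V_{\eta,I}$, which sends cell ${\cal I}_{i_j}$ to the $j$-th tensor factor in the order $i_1<\dots<i_s$ and drops the fixed parts, gives $\Pi_W^{\otimes s}=\Pi_{W^{\otimes s}}$. Multiplying on the right by $\Lambda_{\eta,I}$ then yields \eq{intertwine}. The one point needing care is the identification of the $s$ cells with ordered tensor factors: both sides use the same ordering, and $\Pi_W^{\otimes s}$ is invariant under reordering anyway.

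For \eq{collision-free-intertwine}, both $\Pi_{=0}$ and $\Pi_{C_s^\perp}$ are diagonal in the computational bases. It therefore suffices to check that for each basis vector $\ket{\eta,f}\in\mathcal H_{\eta,I}$, $f$ is collision-free (ignoring $\bot$) if and only if $V_{\eta,I}\ket{\eta,f}=\ket{f(i_1),\dots,f(i_s)}$ lies in $C_s^\perp$. By definition, the non-$\bot$ values of $f$ are exactly $f(i_1),\dots,f(i_s)$. So $f$ contains no collision exactly when these values are pairwise distinct, which by \eq{C-perp} is exactly membership of the image in the basis of $C_s^\perp$, and non-membership means lying in $C_s$. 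Thus both sides send $\ket{\eta,f}$ to $V_{\eta,I}\ket{\eta,f}$ or to $0$ in the same cases. Linearity finishes the argument, using that $\Pi_{=0}$ commutes with $\Lambda_{\eta,I}$.

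I expect no real obstacle. The only subtle step is the first: checking that $\Pi_{\sf Comp}$, which is not diagonal in the computational basis, nevertheless preserves $\mathcal H_{\eta,I}$. This relies on $\ket{\bot}\perp\ket{\widehat 0}$ and on $\ket{\widehat 0}$ being supported on $[N]$, so that no amplitude moves between $\bot$ and non-$\bot$ values.
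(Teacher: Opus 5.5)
Your proof is correct and takes essentially the same approach as the paper: you verify both identities on computational basis vectors of $\mathcal H_{\eta,I}$, using that $\id-\proj{\widehat 0}$ fixes $\ket{\bot}$ and acts as $\Pi_W$ on each occupied cell, and that $f$ is collision-free exactly when $f(i_1),\dots,f(i_s)$ are pairwise distinct. Your explicit check that $\Pi_{\sf Comp}$ preserves $\mathcal H_{\eta,I}$, which is needed for $V_{\eta,I}$ to be applied on the left, is left implicit in the paper's proof of this claim and is only stated later, in the proof of \thm{database-space}.
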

\begin{proof}
It suffices to verify both identities on a computational basis vector $\ket{\eta,f}$. If $\ket{\eta,f}\notin{\cal H}_{\eta,I}$, then $\Lambda_{\eta,I}\ket{\eta,f}=0$, so both sides of both identities vanish. Assume therefore that the occupied set of $f$ is
$I=\{i_1<\cdots<i_s\}$.

First, $f$ is collision-free if and only if $f(i_1),\dots,f(i_s)$ are pairwise distinct. Consequently,
\[
V_{\eta,I}\Pi_{=0}\ket{\eta,f} = 
\begin{cases}
    \ket{f(i_1),\dots,f(i_s)},&
    f(i_1),\dots,f(i_s)\text{ are pairwise distinct},\\
    0,&\text{otherwise},
\end{cases}
\]
which is exactly
\[
\Pi_{C_s^\perp}V_{\eta,I}\ket{\eta,f}.
\]
This proves \eq{collision-free-intertwine}.

For \eq{intertwine}, observe that
\[
(\id-\proj{\widehat0})\ket{\bot}=\ket{\bot}.
\]
Thus, the cells outside $I$ remain equal to $\ket{\bot}$, while on each occupied cell $i_j$ the operator $\Pi_{\sf Comp}$ acts as
$\id-\proj{\widehat0}$. After applying $V_{\eta,I}$, we therefore obtain
\[
V_{\eta,I}\Pi_{\sf Comp}\ket{\eta,f} = \bigotimes_{j=1}^s (\id-\proj{\widehat0})\ket{f(i_j)}.
\]
Since
\[
\Pi_{W^{\otimes s}} = (\id-\proj{\widehat0})^{\otimes s},
\]
the right-hand side equals
\[
\Pi_{W^{\otimes s}}V_{\eta,I}\ket{\eta,f}.
\]
This proves \eq{intertwine}.
\end{proof}

We now state a general fact that we will use for the following theorem.
\begin{fact}\label{fct:orbit-span} 
    Let $V$ be a finite-dimensional representation of ${\mathfrak S}_N$, let $v\in V$, and let $\lambda\vdash N$. Suppose that there is an ${\mathfrak S}_N$-equivariant linear map $B:V\longrightarrow S^\lambda$ such that $B(v)\neq0$. Then 
    \[ 
    \dim\left( \operatorname{span} \{ \sigma v:\sigma\in{\mathfrak S}_N\} \right) \geq \dim S^\lambda. 
    \]
\end{fact}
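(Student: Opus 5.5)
The plan is to show that the orbit span $U:=\operatorname{span}\{\sigma v:\sigma\in\mathfrak S_N\}$ surjects onto $S^\lambda$ under $B$. This immediately gives $\dim U\geq\dim S^\lambda$, since a linear map cannot increase dimension.

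First, I will note that $U$ is a subrepresentation of $V$. It is spanned by the orbit of $v$, and applying any $\tau\in\mathfrak S_N$ to a generator $\sigma v$ gives $(\tau\sigma)v$, which is again a generator. Because $B$ is $\mathfrak S_N$-equivariant, the image $B(U)$ is a subrepresentation of $S^\lambda$. Indeed, for $u\in U$ we have $\tau B(u)=B(\tau u)\in B(U)$. Concretely, $B(U)=\operatorname{span}\{\sigma B(v):\sigma\in\mathfrak S_N\}$.

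Next, the hypothesis $B(v)\neq 0$ tells us that $B(U)\neq\{0\}$. Since $S^\lambda$ is irreducible (it is a Specht module, as recalled in the preliminaries), its only subrepresentations are $\{0\}$ and $S^\lambda$. Hence $B(U)=S^\lambda$. Finally, rank--nullity applied to $B|_U:U\to S^\lambda$ gives
\[
\dim U\geq \dim B(U)=\dim S^\lambda .
\]

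There is no real obstacle here. The only point requiring care is to verify equivariance carefully, so that $B(U)$ is genuinely invariant, and to invoke irreducibility of $S^\lambda$ over $\mathbb C$. No semisimplicity or Schur's lemma is needed beyond the definition of irreducibility.
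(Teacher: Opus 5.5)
Your proposal is correct and matches the paper's proof: the paper also uses equivariance to write $B(\operatorname{span}\{\sigma v\})=\operatorname{span}\{\sigma B(v)\}$, notes this is a nonzero invariant subspace of the irreducible $S^\lambda$ and hence all of $S^\lambda$, and concludes by comparing dimensions. Your extra remark that the orbit span is itself a subrepresentation is harmless but not needed.
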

\begin{proof}
    Let $W=\operatorname{span} \{\sigma v:\sigma\in{\mathfrak S}_N\}\subseteq V$. By the equivariance of $B$,
    \begin{align*}
        B(W) &= \operatorname{span}\left\{B\bigl(\sigma v\bigr): \sigma\in\mathfrak S_N \right\}\\
        &= \operatorname{span} \left\{\sigma B(v):\sigma\in\mathfrak S_N \right\}.
    \end{align*}
    Thus, $B(W)$ is an $\mathfrak S_N$-invariant subspace of $S^\lambda$. Since $S^\lambda$ is irreducible, either $B(W)=\{0\}$ or $B(W)=S^\lambda$. Since $v\in W$ and $B(v)\neq0$, the first case can be excluded. Hence, we obtain the desired result, since $\dim W\geq \dim B(W)$.
\end{proof} 
In particular, if
\[
    V=U\oplus U'
\]
is a direct sum of $\mathfrak S_N$-subrepresentations of $V$, then the projection $B:V\longrightarrow U$ is $\mathfrak S_N$-equivariant. In this case, $B(v)\neq0$ means that $v$ has a nonzero component in $U$ with respect to this decomposition.

\begin{theorem}\label{thm:database-space}
    Assume $N\geq9$, $S\geq\log_2N$, and let $t$ be a nonnegative integer such that $t \leq \sqrt{N}$. Let $\phi\in{\cal H}_{\cal WXY}\otimes{\cal H}_{\cal I}$ be a not necessarily normalised vector satisfying $\Pi_{\sf Comp}\phi=\phi$, and suppose that every database in the support of $\phi$ has at most $t$ entries different from $\bot$. Assume that, for every product-basis value $\eta = (w,x,\widehat{p})$ of the algorithm registers,
    \[
    \dim\left(\operatorname{span}\left\{\sigma(\proj{\eta}_{\cal WXY}\otimes\id_{\cal I})\phi:\sigma\in{\mathfrak S}_N\right\}\right)\leq 2^S.
    \]
    Then, for every $s \in [t+1]$,
    \begin{equation*}
        \norm{\Pi_{=0}\Lambda_s\phi}^2\leq
        \begin{cases}
            \norm{\Lambda_s\phi}^2, & s\leq\dfrac{2S}{\log_2N},\\[2ex]
            \dfrac{2s-2}{N}\norm{\Lambda_s\phi}^2, & s>\dfrac{2S}{\log_2N}.
        \end{cases}
    \end{equation*}
\end{theorem}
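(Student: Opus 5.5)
The plan is to decompose $\phi$ into blocks $\Lambda_{\eta,I}\phi$, transport each block to $\mathbb C[N]^{\otimes s}$ via the unitary $V_{\eta,I}$, and apply \lem{proj-diff} there. The case $s\le 2S/\log_2N$ is immediate, since $\Pi_{=0}$ is a projection commuting with $\Lambda_s$. So I assume $s>2S/\log_2N$, which forces $s\geq 3$ because $S\ge\log_2 N$. Since $\Pi_{=0}$ and $\Lambda_s$ are diagonal in the product basis $\ket{\eta,f}$, the blocks $\Pi_{=0}\Lambda_{\eta,I}\phi$ over $\eta$ and $\abs I=s$ are pairwise orthogonal. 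It therefore suffices to show, for each fixed $\eta$ and $I$ with $\abs I=s$, that
\[
\norm{\Pi_{=0}\Lambda_{\eta,I}\phi}^2\le\frac{2s-2}{N}\norm{\Lambda_{\eta,I}\phi}^2,
\]
and then sum over blocks.

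Fix such $\eta,I$ and put $u:=V_{\eta,I}\Lambda_{\eta,I}\phi\in\mathbb C[N]^{\otimes s}$. The operator $\Lambda_{\eta,I}$ commutes with $\Pi_{\sf Comp}$: it acts diagonally, $\Pi_{\sf Comp}$ acts cellwise, and each cell factor $\id-\proj{\widehat0}$ preserves both $\ket\bot$ and $\mathbb C[N]$. Hence $\Lambda_{\eta,I}\phi=\Lambda_{\eta,I}\Pi_{\sf Comp}\phi=\Pi_{\sf Comp}\Lambda_{\eta,I}\phi$, and \clm{intertwine} gives $u=\Pi_{W^{\otimes s}}u$. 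Applying \eq{collision-free-intertwine} then yields
\[
\norm{\Pi_{=0}\Lambda_{\eta,I}\phi}=\norm{\Pi_{C_s^\perp}u}=\norm{\Pi_{C_s^\perp}\Pi_{W^{\otimes s}}u}.
\]
By \lem{proj-diff}, which needs $N\ge 2s$ (this follows from $s\le t\le\sqrt N$ and $N\ge 9$, since $2\sqrt N\le N$ for $N\ge 4$), the bound follows once we know $u\perp C_s^\perp\cap W^{\otimes s}$.

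The main step is to prove this orthogonality from the orbit hypothesis. First, $\Lambda_{\eta,I}$ commutes with every $\sigma\in\mathfrak S_N$, and $\Lambda_{\eta,I}\phi=\Lambda_{\eta,I}(\proj{\eta}\otimes\id)\phi$. So the orbit span of $\Lambda_{\eta,I}\phi$ is the image under $\Lambda_{\eta,I}$ of the orbit span of $(\proj\eta\otimes\id)\phi$, and has dimension at most $2^S$. Since $V_{\eta,I}$ is equivariant by \eq{V-equivariant}, the orbit span of $u$ also has dimension at most $2^S$.

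Now suppose $u$ has a nonzero component in $C_s^\perp\cap W^{\otimes s}$. This space is an $\mathfrak S_N$-subrepresentation of $\mathbb C[N]^{\otimes s}$. By \thm{kernel} it decomposes into copies of $S^{(N-s,\mu)}$, and the orthogonal projection onto it is equivariant. Composing with a projection onto a suitable irreducible summand gives an equivariant map $B$ to some $S^{(N-s,\mu)}$ with $B(u)\neq0$. \fct{orbit-span} and \thm{dim-lower} then give
\[
2^S\ge\binom Ns-\binom N{s-1}.
\]

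It remains to contradict this using $s>2S/\log_2 N$, i.e.\ $2^S<N^{s/2}$, so I need $\binom Ns-\binom N{s-1}\ge N^{s/2}$. We have
\[
\binom Ns-\binom N{s-1}=\binom Ns\,\frac{N-2s+1}{N-s+1}.
\]
Using $\binom Ns\ge (N/s)^s$ and $s\le\sqrt N$, we get $\binom Ns\ge N^{s/2}$. The leftover factor $\frac{N-2s+1}{N-s+1}<1$ is absorbed with room to spare, since $(N/s)^s/N^{s/2}=(\sqrt N/s)^s$. This is the part I expect to be fiddly, especially when $s$ is close to $\sqrt N$. If the slack is insufficient at the extremes, I would instead use the sharper estimate
\[
\binom Ns=\prod_{i<s}\frac{N-i}{s-i}\ge N^{s/2}\cdot\frac{N-s+1}{N-2s+1},
\]
checking it directly for $3\le s\le\sqrt N$ and $N\ge9$. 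With the contradiction in hand, $u\perp C_s^\perp\cap W^{\otimes s}$, which finishes the proof.
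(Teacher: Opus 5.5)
Your proposal is correct and follows the paper's proof essentially step for step. You transport each block by $V_{\eta,I}$, use \clm{intertwine} to land in $W^{\otimes s}$, derive orthogonality to $C_s^\perp\cap W^{\otimes s}$ from the orbit bound via \thm{kernel}, \fct{orbit-span} and \thm{dim-lower}, and finish with \lem{proj-diff}.

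The only soft spot is the final numerical estimate. Your first claim of ``room to spare'' fails at $s=\sqrt N$, since there $(\sqrt N/s)^s=1$. Your fallback inequality is nonetheless true, and it is checked, as in the paper, by keeping the last factor of the product exactly: $\binom Ns\ge (N/s)^{s-1}(N-s+1)\ge N^{(s-1)/2}(N-s+1)$. Combined with $N-2s+1\ge(\sqrt N-1)^2\ge\sqrt N$ for $N\ge 9$, this gives $\binom Ns-\binom N{s-1}\ge N^{s/2}$.
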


\begin{proof}
    Fix $s \in [t+1]$. The first case where $s\leq\dfrac{2S}{\log_2N}$ is immediate, because $\Pi_{=0}$ is an orthogonal projector. 
    
    Suppose now that $s>\frac{2S}{\log_2N}$. Since $s\leq t\leq\sqrt N$ and $N\geq9$, we have $N \geq 2s$ and thus, \thm{kernel}, \thm{dim-lower}, and \lem{proj-diff} are applicable.
         
    For every $\eta$ and every $I\subseteq[M]$ with $\abs{I}=s$, set
    \[
    \phi_{\eta,I}:=\Lambda_{\eta,I}\phi.
    \]
    The rest of the proof consists of proving the statement for each $\phi_{\eta,I}\in {\cal H}_{\eta,I}$, since the Pythagorean identity then proves the theorem after summing over $\eta$ and $I$.
    
    Since $\Pi_{\sf Comp}$ acts independently on each database cell and preserves both $\operatorname{span}\{\ket{\bot}\}$ and $\mathbb C[N]$, it preserves every sector ${\cal H}_{\eta,I}$. Hence, $\Lambda_{\eta,I}$ commutes with $\Pi_{\sf Comp}$ and we have $\Pi_{\sf Comp}\phi_{\eta,I}=\phi_{\eta,I}$.
    
    It follows from \eq{intertwine} that
    \[
    V_{\eta,I}\phi_{\eta,I}\in W^{\otimes s}.
	\]	
    Define
    \[
    \varphi_{\eta,I}:=\Pi_{C_s^\perp\cap W^{\otimes s}}V_{\eta,I}\phi_{\eta,I}.
    \]
    The map $\phi_{\eta,I}\mapsto\varphi_{\eta,I}$ is ${\mathfrak S}_N$-equivariant: relabelling does not change $\eta$ or the occupied set $I$, the map $V_{\eta,I}$ is equivariant by \eq{V-equivariant}, and $C_s^\perp\cap W^{\otimes s}$ is $\mathfrak S_N$-invariant because relabelling preserves pairwise distinctness and fixes $\ket{\widehat 0}$. Therefore,
    \[
    \dim\left(\operatorname{span}\{\sigma\varphi_{\eta,I}:\sigma\in{\mathfrak S}_N\}\right) \leq\dim\left(\operatorname{span}\left\{\sigma(\proj{\eta}_{\cal WXY}\otimes\id_{\cal I})\phi:\sigma\in{\mathfrak S}_N\right\}\right) \leq 2^S.
    \]
    
    By \thm{kernel},
    \[
    C_s^\perp\cap W^{\otimes s}\cong\bigoplus_{\nu\vdash s}\abs{\mathrm{SYT}(\nu)} S^{(N-s,\nu)}.
    \]
    If $\varphi_{\eta,I}\neq0$, then it has a nonzero component in some Specht module $S^{(N-s,\nu)}$. By \fct{orbit-span} and \thm{dim-lower},
    \[
    \dim\left(\operatorname{span}\{\sigma\varphi_{\eta,I}:\sigma\in{\mathfrak S}_N\}\right)\geq\binom Ns-\binom N{s-1}.
    \]
    Since $s>\frac{2S}{\log_2N}\geq2$, we have $s\geq3$, and since $s\leq t\leq\sqrt N$,
    \begin{align*}
        \binom Ns-\binom N{s-1}&=\binom Ns\frac{N-2s+1}{N-s+1}\\
        &\geq\left(\frac Ns\right)^s\frac{s(N-s+1)}{N}\frac{N-2s+1}{N-s+1}\\
        &\geq\left(\frac Ns\right)^ \geq N^{s/2}>2^S.
    \end{align*}
    This is a contradiction, and hence
    \[
    V_{\eta,I}\phi_{\eta,I}\perp C_s^\perp\cap W^{\otimes s}.
    \]
    
    Using \eq{collision-free-intertwine}, the fact that $V_{\eta,I}\phi_{\eta,I}\in W^{\otimes s}$, and \lem{proj-diff}, we obtain
    \[
    \begin{split}
        \norm{\Pi_{=0}\phi_{\eta,I}}^2
        &=\norm{\Pi_{C_s^\perp}V_{\eta,I}\phi_{\eta,I}}^2\\
        &=\norm{\left(\Pi_{C_s^\perp}\Pi_{W^{\otimes s}}-\Pi_{C_s^\perp\cap W^{\otimes s}}\right)V_{\eta,I}\phi_{\eta,I}}^2\\
        &\leq\frac{2s-2}{N}\norm{\phi_{\eta,I}}^2. \qedhere
    \end{split}
    \]
\end{proof}

The following corollary of \thm{database-space} will be useful in our collision finding application in the next section.

\begin{corollary}\label{cor:database-space}
    Under the assumptions of \thm{database-space},
    \begin{equation*}
        \sum_{s=0}^t s\norm{\Pi_{=0}\Lambda_s\phi}^2 \leq \min\left\{t,\frac{2S} {\log_2N}\right\}\norm{\phi}^2.
    \end{equation*}
\end{corollary}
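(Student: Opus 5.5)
The corollary follows by summing the case-wise bound of \thm{database-space} over $s$, using that the projections $\Lambda_s$ are mutually orthogonal. Set $k:=2S/\log_2N$. Since every database in the support of $\phi$ has at most $t$ non-$\bot$ entries, we have $\phi=\sum_{s=0}^t\Lambda_s\phi$ and hence $\sum_{s=0}^t\norm{\Lambda_s\phi}^2=\norm{\phi}^2$. Each $\Lambda_s$ commutes with $\Pi_{=0}$, because both are diagonal in the same computational basis.

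First, the bound by $t$ is immediate. Since $\Pi_{=0}$ is a projector, $s\norm{\Pi_{=0}\Lambda_s\phi}^2\leq t\norm{\Lambda_s\phi}^2$ for every $s\leq t$, and summing gives $t\norm{\phi}^2$.

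For the bound by $k$, split the sum at $k$. For $s\leq k$, the first case of \thm{database-space} gives $s\norm{\Pi_{=0}\Lambda_s\phi}^2\leq k\norm{\Lambda_s\phi}^2$. For $k<s\leq t$, the second case gives
\[
s\norm{\Pi_{=0}\Lambda_s\phi}^2\leq\frac{s(2s-2)}{N}\norm{\Lambda_s\phi}^2.
\]
It therefore suffices to show that $s(2s-2)/N\leq k$ in this range. Using $s\leq t\leq\sqrt N$, we get $s(2s-2)\leq 2s^2\leq 2N$, so the coefficient is at most $2$. Since $S\geq\log_2N$, we have $k\geq2$. Hence every term is bounded by $k\norm{\Lambda_s\phi}^2$, and summing over $s$ gives $k\norm{\phi}^2$.

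Taking the smaller of the two bounds proves the claim. The only mildly delicate step is checking that the coefficient $s(2s-2)/N$ from the large-$s$ case stays below $k$. This is exactly where the hypotheses $t\leq\sqrt N$ and $S\geq\log_2N$ are used. The crude estimate $2s^2\leq 2N$ is enough, so no sharper estimate is needed.
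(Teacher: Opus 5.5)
Your proposal is correct and follows essentially the same approach as the paper. Both split at $s=2S/\log_2N$, apply the two cases of \thm{database-space}, use $2s(s-1)/N\leq 2\leq 2S/\log_2N$, and sum via the orthogonality of the $\Lambda_s$. The only cosmetic difference is that you prove the $t$-bound and the $2S/\log_2N$-bound separately and take the minimum at the end, whereas the paper bounds each term by the minimum directly.
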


\begin{proof} 
    The proof follows from the inequality \[
    s\norm{\Pi_{=0}\Lambda_s\phi}^2 \leq \min\left\{t,\frac{2S}{\log_2N}\right\}\norm{\Lambda_s\phi}^2. 
    \] 
    Indeed, we can then conclude by 
    summing over $0\leq s\leq t$ and using $\sum_s \norm{\Lambda_s\phi}^2\leq \norm{\phi}^2$ because of
    the orthogonality of the projectors $\Lambda_s$. 
    
    Let us now prove the claimed inequality. For $s\leq\frac{2S}{\log_2N}$, \thm{database-space} gives 
    \[
    s\norm{\Pi_{=0}\Lambda_s\phi}^2 \leq \norm{\Lambda_s\phi}^2 \leq \min\left\{t,\frac{2S}{\log_2N}\right\}\norm{\Lambda_s\phi}^2, 
    \] 
    where we used that $s\leq t$. 
    
    Now suppose that $s>\frac{2S}{\log_2N}$. Then \thm{database-space} gives 
    \[
    s\norm{\Pi_{=0}\Lambda_s\phi}^2 \leq \frac{2s(s-1)}{N}\norm{\Lambda_s\phi}^2 \leq  \min\left\{t,\frac{2S}{\log_2N}\right\} \norm{\Lambda_s\phi}^2,
    \]
    where we first used $s\leq t\leq\sqrt N$ to get $\frac{2s(s-1)}{N}\leq2$ and then $s>\frac{2S}{\log_2N} \geq 2$ and $s\leq t$ to obtain $ 2 \leq \min\left\{t,\frac{2S}{\log_2N}\right\}$. 
\end{proof}

\subsection{Time-Space tradeoff}\label{sec:time-space} 

We now apply \cor{database-space} to bound the progress of the algorithm towards finding a collision. 

\begin{lemma}\label{lem:coll-prog}
    Let ${\cal A}$ be a label-symmetric algorithm. Assume $N\geq9$,$S\geq\log_2N$, and $T \geq 0$. Then
    \[
    \Delta_T\leq\frac{4T}{\sqrt N}\sqrt{\min\left\{T,\frac{2S}{\log_2N}\right\}}.
    \]
\end{lemma}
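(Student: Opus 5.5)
The plan is to run the progress recursion \eq{progress-split} step by step. At each step I bound the one-query increment with \lem{one-query}, and then replace the crude factor $t$ by $\min\{t,2S/\log_2N\}$ using \cor{database-space}.

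First I dispose of the trivial regimes. If $T=0$ then $\Delta_0=\norm{\Pi_{\geq1}\ket{0}\ket{\bot^M}}=0$, so the bound holds. If $T>\sqrt N$, then $4T/\sqrt N>4$. Moreover $T\geq1$ and $2S/\log_2N\geq2$, so $\min\{T,2S/\log_2N\}\geq1$. The right-hand side therefore exceeds $1\geq\Delta_T$, since $\Delta_T$ is the norm of a projected unit vector. Hence I may assume $1\leq T\leq\sqrt N$, so that every $t<T$ satisfies $t\leq\sqrt N$, which is what \thm{database-space} requires.

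Fix $t<T$ and apply \lem{one-query} to $\phi:=\Pi_{=0}\ket{\widetilde\psi_t}$. Since $\Pi_{=0}$ is diagonal in the same basis as the $\Lambda_{\eta,I}$, the two commute. Grouping the sets $I$ by size $s=\abs I$ therefore gives
\[
\sum_\eta\sum_I\abs I\norm{\Lambda_{\eta,I}\Pi_{=0}\ket{\widetilde\psi_t}}^2=\sum_{s=0}^t s\norm{\Pi_{=0}\Lambda_s\ket{\widetilde\psi_t}}^2 ,
\]
where the range $s\leq t$ comes from \lem{size}. I then apply \cor{database-space} with $\phi=\ket{\widetilde\psi_t}$, checking its hypotheses in turn:
\begin{itemize}
\item $\Pi_{\sf Comp}\ket{\widetilde\psi_t}=\ket{\widetilde\psi_t}$ holds by \eq{compress-t}.
\item Every database in the support has at most $t$ entries, by \lem{size}.
\item $N\geq9$, $S\geq\log_2N$ and $t\leq\sqrt N$ hold by assumption and by the reduction above.
\item The orbit-span condition, which is discussed next.
\end{itemize}

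The orbit-span condition is the step I expect to need the most care, although it is essentially the remark after \defin{label-symmetric}. Write $\beta_\eta:=(\bra{\eta}\otimes\id_{\cal I})\ket{\widetilde\psi_t}$. This vector lies in the support of $\widetilde\rho^t_{\cal I}=\mathsf{Comp}_{\cal I}\rho^t_{\cal I}\mathsf{Comp}^\dagger_{\cal I}$. By label symmetry and \eq{comp-equivariant}, $\widetilde\rho^t_{\cal I}$ is invariant under every $V_\sigma$. Its rank is at most $\dim{\cal H}_{\cal WXY}\leq2^S$, because $\mathsf{Comp}_{\cal I}$ is an isometry. So the argument of \lem{orbit-span-symm} gives $\dim\operatorname{span}\{V_\sigma\beta_\eta\}\leq2^S$. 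Finally, $\sigma(\proj{\eta}\otimes\id)\ket{\widetilde\psi_t}=\ket\eta\otimes V_\sigma\beta_\eta$, because relabelling acts trivially on the algorithm registers, so this orbit has the same span dimension as that of $\beta_\eta$.

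With the hypotheses verified, \cor{database-space} and $\norm{\ket{\widetilde\psi_t}}=1$ give
\[
\Delta_{t+1}\leq\Delta_t+\frac4{\sqrt N}\sqrt{\min\{t,2S/\log_2N\}}\leq\Delta_t+\frac4{\sqrt N}\sqrt{\min\{T,2S/\log_2N\}} .
\]
Summing this over $t=0,\dots,T-1$ and using $\Delta_0=0$ yields the claimed bound $\Delta_T\leq\frac{4T}{\sqrt N}\sqrt{\min\{T,2S/\log_2N\}}$.
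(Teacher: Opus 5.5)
Your proposal is correct and follows the paper's proof essentially step for step: the same reduction to $T\le\sqrt N$ via $\Delta_T\le 1$, the same verification of the orbit-span hypothesis from label symmetry, \eq{comp-equivariant} and the rank bound, and the same combination of \lem{one-query}, \cor{database-space} and \eq{progress-split}, summed over $t$. Your separate treatment of $T=0$ and your explicit regrouping of the $(\eta,I)$-sum into the $\Lambda_s$-sum are harmless details that the paper leaves implicit.
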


\begin{proof}
    If $T>\sqrt N$, then, since $S\geq\log_2N$ and $N\geq9$,
    \[
        \min\left\{T,\frac{2S}{\log_2N}\right\} \geq2.
    \]
    Hence the right-hand side of the claimed inequality is greater than $1$, whereas $\Delta_T\leq1$. Thus, the result is immediate in this case.

    It remains to consider the case where $T \leq\sqrt N$. We have $\Delta_0=0$, since the empty database contains no collision. We show that, for every $0\leq t<T$,
    \begin{equation}\label{eq:collision-recurrence}
        \Delta_{t+1}\leq\Delta_t+\frac4{\sqrt N}\sqrt{\min\left\{t,\frac{2S}{\log_2N}\right\}}.
    \end{equation}
    
    By \eq{compress-t}, we have $\Pi_{\sf Comp}\ket{\widetilde{\psi}_t}=\ket{\widetilde{\psi}_t}$, and, by \lem{size}, every database in the support of $\ket{\widetilde{\psi}_t}$ has at most $t$ non-$\bot$ positions. 
    
    We next verify the orbit-span hypothesis of \cor{database-space}. We expand the compressed state in the chosen product basis $\eta$ of the algorithm registers as
    \[
    \ket{\widetilde\psi_t} = \sum_\eta \alpha_{\eta,t}\ket{\eta}_{\cal WXY} \ket{\psi_{\eta,t}}_{\cal I},
    \]
    Then
    \[
    \widetilde\rho_{\cal I}^t = \sum_\eta \abs{\alpha_{\eta,t}}^2\proj{\psi_{\eta,t}},
    \]
    so every $\ket{\psi_{\eta,t}}$ belongs to $\operatorname{supp}(\widetilde\rho_{\cal I}^t$). By label symmetry of $\rho_{\cal I}^t$ and \eq{comp-equivariant}, this support is also invariant under the range-label action. Moreover, since $\mathsf{Comp}_{\cal I}$ is an isometry, we have by the orbit span bound of \lem{orbit-span-symm} that 
    \[
    \operatorname{rank}(\widetilde\rho_{\cal I}^t) = \operatorname{rank}(\rho_{\cal I}^t) \leq2^S.
    \]
    Therefore, for every $\eta$,
    \[
    \dim\operatorname{span}\left\{\sigma(\proj{\eta}_{\cal WXY}\otimes\id_{\cal I})\ket{\widetilde\psi_t} :\sigma\in\mathfrak S_N\right\} \leq 2^S,
    \]
    and we may apply \cor{database-space} to $\ket{\widetilde\psi_t}$, which gives
    \begin{equation}\label{eq:space-progress-bound}
        \sum_{\eta}\sum_{I\subseteq[M]}\abs{I}\norm{\Pi_{=0}\Lambda_{\eta,I}\ket{\widetilde\psi_t}}^2\leq\min\left\{t,\frac{2S}{\log_2N}\right\}\norm{\ket{\widetilde\psi_t}}^2 = \min\left\{t,\frac{2S}{\log_2N}\right\}.
    \end{equation}
    
    The vector $\Pi_{=0}\ket{\widetilde\psi_t}$ is collision-free, so by applying \lem{one-query}, using that $\Pi_{=0}$ commutes with every $\Lambda_{\eta,I}$, and then using
    \eq{space-progress-bound}, we obtain
    \begin{align*}
        \norm{\Pi_{\geq1}\widetilde{\cal O}\Pi_{=0}\ket{\widetilde\psi_t}} &\leq \frac4{\sqrt N} \bigg(\sum_\eta\sum_{I\subseteq[M]}\abs{I}\norm{\Lambda_{\eta,I}\Pi_{=0}\ket{\widetilde\psi_t}}^2\bigg)^{1/2}\\
        &=\frac4{\sqrt N}\bigg(\sum_\eta\sum_{I\subseteq[M]}\abs I\norm{\Pi_{=0}\Lambda_{\eta,I}\ket{\widetilde\psi_t}}^2\bigg)^{1/2} \\
        &\leq \frac4{\sqrt N}\sqrt{\min\left\{t,\frac{2S}{\log_2N}\right\}}.
    \end{align*}
    Combining this with \eq{progress-split} proves \eq{collision-recurrence}:
    \[
    \Delta_T\leq\frac4{\sqrt N}\sum_{t=0}^{T-1}\sqrt{\min\left\{t,\frac{2S}{\log_2N}\right\}}\leq\frac{4T}{\sqrt N}\sqrt{\min\left\{T,\frac{2S}{\log_2N}\right\}}. \qedhere
    \]
\end{proof}

Combining \lem{collision-success} with \lem{coll-prog} gives the following bound on the actual success probability.

\begin{corollary}\label{cor:collision-success}
    Let ${\cal A}$ be a label-symmetric algorithm. Assume $N\geq9$, $S\geq\log_2N$, and $T \geq 0$. Then
    \[
    p_{\rm succ}^{\sf U} \leq \left(\frac{4T}{\sqrt N}\sqrt{\min\left\{T,\frac{2S}{\log_2N}\right\}}+\sqrt{\frac{2}{N}}\right)^2.
    \]
    In particular, for $T \geq 1$
    \[
    p_{\rm succ}^{\sf U} \leq \frac{36T^2}{N} \min\left\{T,\frac{2S}{\log_2N}\right\}.
    \]
\end{corollary}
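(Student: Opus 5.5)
The plan is to combine \lem{collision-success} with \lem{coll-prog} directly, and then simplify the resulting expression into the cleaner form for $T\geq1$.

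\textbf{First inequality.} Under the stated hypotheses ($N\geq9$, $S\geq\log_2N$), \lem{coll-prog} gives
\[
\Delta_T\leq\frac{4T}{\sqrt N}\sqrt{m},\qquad m:=\min\left\{T,\frac{2S}{\log_2N}\right\}.
\]
\lem{collision-success} gives $p_{\rm succ}^{\sf U}\leq(\Delta_T+\sqrt{2/N})^2$. The map $x\mapsto(x+\sqrt{2/N})^2$ is increasing for $x\geq0$, so substituting the bound on $\Delta_T$ yields the first displayed inequality. The case $T=0$ is included trivially, since then $\Delta_0=0$.

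\textbf{Second inequality.} For $T\geq1$ I will absorb the additive error $\sqrt{2/N}$ into the main term. Since $S\geq\log_2N$, we have $2S/\log_2N\geq2$. Hence $m\geq\min\{T,2\}\geq1$, and so $T^2m\geq1$, that is, $T\sqrt m\geq1$. Therefore
\[
\sqrt{\frac2N}\leq\frac{\sqrt2\,T\sqrt m}{\sqrt N}\leq\frac{2T\sqrt m}{\sqrt N}.
\]
The total inside the square is then at most $6T\sqrt m/\sqrt N$. Squaring gives $36T^2m/N$, which is exactly the claimed bound.

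There is no real obstacle here. The only point that needs care is checking that $T\sqrt m\geq1$ when $T\geq1$. This is where the assumption $S\geq\log_2N$ is used, since it ensures $m\geq1$ even when $T=1$.
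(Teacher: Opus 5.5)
Your proposal is correct and follows essentially the same route as the paper. You combine \lem{collision-success} with \lem{coll-prog} for the first bound, then use $m\geq1$ (from $T\geq1$ and $S\geq\log_2N$) to absorb $\sqrt{2}$ into $2T\sqrt m$, which gives the factor $36$ after squaring.
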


\begin{proof}
    The first inequality follows immediately from \lem{coll-prog} and \lem{collision-success}. Set
    \[
    m:=\min\left\{T,\frac{2S}{\log_2N}\right\}.
    \]
    Since $T\geq1$ and $S\geq\log_2N$, we have $m\geq1$, and hence
    \[
    4T\sqrt m+\sqrt2\leq6T\sqrt m.
    \]
    Squaring proves the second inequality.
\end{proof}

\begin{theorem}\label{thm:collision-time-space}
    Let $M,N$ be positive integers with $N\geq9$, and let $f:[M]\to[N]$ be chosen uniformly at random. Let ${\cal A}$ be a label-symmetric quantum algorithm that makes $T$ queries to $f$ and uses $S$ qubits of space. If ${\cal A}$ outputs a triple $(x_1,x_2,y) \in [M]^2 \times [N]$ satisfying $x_1 < x_2$ and $f(x_1)=f(x_2) = y$ with probability at least $2/3$, then
    \[
    T=\Omega(N^{1/3}) \qquad\text{and}\qquad T^2S=\Omega(N\log N).
    \]
\end{theorem}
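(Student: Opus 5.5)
The plan is to apply \cor{collision-success} directly and then read off the two bounds. The corollary requires $S\geq\log_2N$ and $T\geq1$, so these are handled first.

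If $T=0$, the algorithm makes no queries. The compressed database is then $\ket{\bot^M}$, so $\Delta_0=0$, and \lem{collision-success} gives $p_{\rm succ}^{\sf U}\leq 2/N<2/3$ for $N\geq9$. Hence $T\geq1$.

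For the space assumption, any algorithm that outputs a triple $(x_1,x_2,y)$ must hold it in its registers. This forces $2^S\geq\dim{\cal H}_{\cal WXY}\geq N$, so $S\geq\log_2N$ essentially by definition. If one prefers not to argue this way, one can simply note that replacing $S$ by $\max\{S,\lceil\log_2N\rceil\}$ only weakens the conclusion by a constant factor, because the output register alone already needs $\log_2 N$ qubits.

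With both conditions in place, \cor{collision-success} gives
\[
\frac23\leq p_{\rm succ}^{\sf U}\leq\frac{36T^2}{N}\min\left\{T,\frac{2S}{\log_2N}\right\}.
\]
Each term of the minimum is an upper bound, so the two conclusions follow separately:
\begin{itemize}
\item Using the first term, $\frac23\leq 36T^3/N$, which gives $T^3\geq N/54$ and hence $T=\Omega(N^{1/3})$.
\item Using the second term, $\frac23\leq 72T^2S/(N\log_2N)$, which gives $T^2S\geq N\log_2N/108$ and hence $T^2S=\Omega(N\log N)$.
\end{itemize}

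The hypotheses $N\geq9$ and label symmetry are exactly what \cor{collision-success} requires, so no further work is needed. The only delicate step is justifying $S\geq\log_2N$. I would do this with the output-register argument, noting that output triples range over at least $N$ distinct values of $y$.
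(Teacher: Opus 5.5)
Your proposal is correct and follows the paper's proof essentially line by line. You rule out $T=0$ via $\Delta_0=0$ and \lem{collision-success}, then apply \cor{collision-success} and bound the minimum by each of its two terms, obtaining $T^3\geq N/54$ and $T^2S\geq N\log_2N/108$. The one difference is how you justify $S\geq\log_2N$: the paper reads it off directly from the query register ${\cal Y}\cong\mathbb C[N]$ being part of ${\cal H}_{\cal WXY}$, which is cleaner than your output-register argument, whose claim that the outputs must range over $N$ distinct values of $y$ is not needed and is not immediate for an arbitrary algorithm.
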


\begin{proof}
    Since the range-value register ${\cal Y}$ has dimension $N$, our definition of space implies
    \[
    S\geq\log_2N.
    \]

    If $T=0$, then the first inequality of \cor{collision-success} gives
    \[
    p_{\rm succ}^{\sf U} \leq \frac{2}{N} < \frac{2}{3},
    \]
    since $N\geq9$. Thus, $T\geq1$.

    By \cor{collision-success} and the assumption that the success probability is at least $2/3$,
    \[
    T^2 \min\left\{T,\frac{2S}{\log_2N}\right\} \geq\frac{N}{54}.
    \]
    Since
    \[
    \min\left\{T,\frac{2S}{\log_2N}\right\} \leq T,\qquad \min\left\{T,\frac{2S}{\log_2N}\right\} \leq\frac{2S}{\log_2N}
    \]
    we obtain
    \[
    T^3\geq\frac{N}{54} \qquad\text{and}\qquad  T^2S \geq \frac{N\log_2N}{108}. \qedhere
    \]
\end{proof}

For the problem of Element Distictness, the task is to decide whether a given function $f: [n] \to [q]$ is injective or not. We instead prove a lower bound for the search version, where the full collision triple $(x_1,x_2,y)$ must be output. The decision and
search versions have the same asymptotic bounded-error quantum query complexity~\cite{aaronson2004QLowerBndCollisionAndElementDistinct,ambainis2004QWalkForElementDist}.

Since a uniformly random function $f:[n]\to[n^2]$ contains a collision with constant probability, we can take the domain size to be $n$ and the range size to be $n^2$ in \thm{collision-time-space} to obtain the following result.

\begin{corollary}\label{cor:search-element-distinctness}
    Let $n\geq9$ be a positive integer, and let $\mathcal A$ be a label-symmetric quantum algorithm such that, for every non-injective function $f:[n]\to[n^2]$, the algorithm outputs a triple $(x_1,x_2,y) \in [n]^2 \times [n^2]$ satisfying $x_1<x_2$ and $f(x_1)=f(x_2)=y$ with probability at least $2/3$. If $\mathcal A$ makes $T$ queries and uses $S$ qubits, then
    \[
    T=\Omega(n^{2/3})\qquad\text{and}\qquad T^2S=\Omega(n^2\log n).
    \]
\end{corollary}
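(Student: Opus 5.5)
The plan is to reduce the corollary to \thm{collision-time-space}, taking $M=n$ and $N=n^2$, under the uniform distribution ${\sf U}$ on $[n^2]^{[n]}$. The worst-case guarantee only holds on non-injective inputs, so we first convert it into an average-case guarantee. By the birthday bound, a uniformly random $f:[n]\to[n^2]$ is non-injective with probability
\[
p_n \geq 1-\prod_{i<n}\Bigl(1-\tfrac{i}{n^2}\Bigr)\geq 1-e^{-(n-1)/(2n)}\geq c
\]
for an absolute constant $c>0$ (roughly $1-e^{-4/9}$ when $n\geq9$). Since $\mathcal A$ succeeds with probability at least $2/3$ on every non-injective $f$, its success probability under ${\sf U}$ is at least $\tfrac23 c$, which is a constant.

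\thm{collision-time-space} is stated with the threshold $2/3$, but its proof only uses \cor{collision-success}. So we rerun that argument with the threshold $\epsilon_0:=\tfrac23 c$ in place of $2/3$. First, the case $T=0$ gives $p_{\rm succ}^{\sf U}\le 2/N=2/n^4<\epsilon_0$, so $T\geq1$. Then \cor{collision-success} yields
\[
T^2\min\{T,2S/\log_2 N\}\geq \epsilon_0 N/36 .
\]
This gives $T^3=\Omega(N)=\Omega(n^2)$, hence $T=\Omega(n^{2/3})$. It also gives $T^2S=\Omega(N\log N)=\Omega(n^2\log n)$. Alternatively, one could amplify the success probability by repetition, but that changes $T$ and $S$ only by constant factors and needs label symmetry to be preserved, so it is cleaner to avoid it.

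The remaining points are routine checks.
\begin{itemize}
\item The hypotheses $N=n^2\geq9$ and $S\geq\log_2 N$ hold, the latter because the ${\cal Y}$ register has dimension $N$.
\item Label symmetry is a property of $({\cal A},{\sf U})$ with ${\sf U}$ uniform on $[n^2]^{[n]}$, which is exactly the hypothesis assumed.
\item The output format $(x_1,x_2,y)$ coincides with ${\sf Coll}(f)$.
\end{itemize}
The only genuine step is confirming the constant lower bound on the collision probability for $M=n$, $N=n^2$. The main obstacle is only bookkeeping: making explicit that the constant $2/3$ in \thm{collision-time-space} can be replaced by any constant, which is immediate from \cor{collision-success}.
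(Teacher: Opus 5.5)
Your proposal is correct and follows the paper's route: the paper obtains the corollary by instantiating \thm{collision-time-space} with $M=n$ and $N=n^2$, using only that a uniformly random $f:[n]\to[n^2]$ is non-injective with constant probability. Your explicit rerun of \cor{collision-success} with threshold $\tfrac23 c$ instead of $2/3$ supplies bookkeeping the paper leaves implicit: the theorem as stated needs average success at least $2/3$ under ${\sf U}$, whereas the worst-case hypothesis only gives success at least $\tfrac23 p_n$.
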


\section{Examples of label-symmetric algorithms}\label{sec:label-symmetric}

We verify that the BHT collision-finding algorithm~\cite{brassard1997collision} and Ambainis's quantum walk for Element Distinctness~\cite{ambainis2004QWalkForElementDist} with their usual list-based implementations are in fact label-symmetric, and we additionally show that label symmetry is guaranteed in the equality-query oracle model. The following criterion will be convenient.

\begin{lemma}\label{lem:label-equivariance}
    Let $\mathcal A$ be a $T$-query algorithm, let $\ket{\psi_t^f(\mathcal A)}$ denote its pure state at the checkpoint after exactly $t$ oracle calls on input $f:[M]\to[N]$, with $t=0$ denoting the initial state, and let $f$ be drawn from the uniform distribution ${\sf U}$ on $[N]^M$. Suppose that, for every $\sigma\in\mathfrak S_N$ and every $t\in\{0,\ldots,T\}$, there is a unitary $R_{\sigma,t}$ on $\mathcal H_{\mathcal{WXY}}$, independent of $f$, such that
    \begin{equation}\label{eq:algorithm-equivariance}
        \ket{\psi_t^{\sigma\circ f}(\mathcal A)}=R_{\sigma,t}\ket{\psi_t^f(\mathcal A)}\qquad\text{for every }f:[M]\to[N].
    \end{equation}
    Then $\mathcal A$ is label-symmetric.
\end{lemma}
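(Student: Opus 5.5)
The plan is to verify the Gram-matrix characterisation of label symmetry from \fct{gram} directly, since under the uniform distribution ${\sf U}$ this condition is equivalent to $\mathfrak S_N$-symmetry of $\rho_{\cal I}^t({\cal A},{\sf U})$. Fix $\sigma\in\mathfrak S_N$, $t\in\{0,\dots,T\}$ and two inputs $f,g:[M]\to[N]$. Applying the hypothesis \eq{algorithm-equivariance} to both inputs with the \emph{same} unitary $R_{\sigma,t}$, which is allowed because $R_{\sigma,t}$ does not depend on the input, gives
\[
\braket{\psi_t^{\sigma\circ f}({\cal A})}{\psi_t^{\sigma\circ g}({\cal A})}=\bra{\psi_t^f({\cal A})}R_{\sigma,t}^\dagger R_{\sigma,t}\ket{\psi_t^g({\cal A})}=\braket{\psi_t^f({\cal A})}{\psi_t^g({\cal A})}.
\]
This holds for all $f,g,\sigma,t$, and it is exactly the condition in \fct{gram}.

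To avoid relying solely on the fact, I would also check the invariance $V_\sigma\rho_{\cal I}^tV_\sigma^\dagger=\rho_{\cal I}^t$ directly using \eq{input-gram}. Since ${\sf U}(f)=N^{-M}$ is constant, conjugating by $V_\sigma$ turns $\ketbra{f}{g}$ into $\ketbra{\sigma\circ f}{\sigma\circ g}$. We then reindex the double sum by $f'=\sigma\circ f$ and $g'=\sigma\circ g$. This substitution is a bijection of $[N]^M$, because $\sigma$ is a permutation of $[N]$. After reindexing, the coefficient of $\ketbra{f'}{g'}$ becomes $N^{-M}\braket{\psi_t^{\sigma^{-1}\circ g'}}{\psi_t^{\sigma^{-1}\circ f'}}$. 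By the identity above, applied with $\sigma^{-1}$ in place of $\sigma$, this coefficient equals $N^{-M}\braket{\psi_t^{g'}}{\psi_t^{f'}}$, which is the original coefficient. Hence $\rho_{\cal I}^t$ is invariant.

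The argument has no substantial obstacle. The only points needing care are the following.
\begin{itemize}
\item The uniformity of ${\sf U}$ is what makes the weights $\sqrt{\delta(f)\delta(g)}$ invariant under relabelling.
\item The unitary $R_{\sigma,t}$ must be independent of $f$; otherwise $R^\dagger R$ would not cancel between two different inputs.
\item The equivariance is needed for every $\sigma$, including $\sigma^{-1}$; this is automatic because the hypothesis quantifies over all of $\mathfrak S_N$.
\end{itemize}
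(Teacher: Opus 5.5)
Your proposal is correct and is the paper's proof: unitarity and input-independence of $R_{\sigma,t}$ give the Gram-matrix identity, and \fct{gram}, which uses the invariance of ${\sf U}$ under $f\mapsto\sigma\circ f$, then yields label symmetry. Your direct verification via \eq{input-gram}, reindexing the sum and applying the identity with $\sigma^{-1}$, just writes out the justification of \fct{gram} that the paper states without proof.
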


\begin{proof}
    For every $f,g:[M]\to[N]$, unitarity of $R_{\sigma,t}$ and \eq{algorithm-equivariance} give
    \[
    \braket{\psi_t^{\sigma\circ g}(\mathcal A)}{\psi_t^{\sigma\circ f}(\mathcal A)}=\braket{\psi_t^g(\mathcal A)}{\psi_t^f(\mathcal A)}.
    \]
    The uniform distribution is invariant under the bijection $f\mapsto\sigma\circ f$, so the Gram-matrix characterization from \fct{gram} proves the claim.
\end{proof}

\subsection{BHT and Ambainis's quantum walk}

We use the oracle from \defin{query-og} in its addition form:
\[
\mathcal O_f\ket{x,y}_{\cal XY}=\ket{x,y+f(x)\bmod N}_{\cal XY}\qquad(x\in[M],\ y\in\mathbb Z_N).
\]

Recall from \eq{V-action} that $V_\sigma\ket f=\ket{\sigma\circ f}$ denotes the relabelling action on the input register. We use the same notation for the permutation unitary $V_\sigma\ket y=\ket{\sigma(y)}$ on a single label valued register and for its tensor powers on several such registers. This is consistent with \eq{V-action}, where $V_\sigma$ acts on all $M$ values of $f$.

We now describe implementations of the BHT algorithm and Ambainis’s quantum walk. Since we are concerned only with query complexity, not time complexity, we omit the time optimal implementation, which sorts the data structure by label value and thereby breaks the necessary label symmetry.

\paragraph{The BHT collision-finding algorithm.} Fix a parameter $r$. A convenient unitary implementation proceeds as follows.

\begin{enumerate}
    \item Prepare a uniform superposition over tuples $K=(x_1 < \cdots <x_r) \subseteq [M]^r$ in the index fields of the table.
    \item Query the indices $x_1,\ldots,x_r$ and store
    \[
    D_f(K):=\bigl((x_i,f(x_i))\bigr)_{i=1}^r.
    \]
    \item Reversibly compute a flag $c_K$ indicating whether $D_f(K)$ contains a collision and, when $c_K=1$, compute the lexicographically least pair $(x_i,x_j)$ with $i<j$ and $f(x_i)=f(x_j)$.
    \item Perform the prescribed fixed number of Grover or amplitude-amplification iterations over $[M]\setminus K$ with marking predicate
    \begin{equation}\label{eq:bht-marking}
        h_f(x):=\mathbf 1\left[\exists i\in[r]\text{ such that }f(x)=f(x_i)\right].
    \end{equation}
    A phase query for $h_f$ loads $f(x)$ into a clean temporary register, computes $h_f(x)$ using equality tests against the stored values, applies the phase, uncomputes the equality tests, and erases the temporary value.
    \item Load the value of the candidate index $x$, reversibly verify that it is marked, and select the least $x_i$ satisfying $f(x)=f(x_i)$. If $c_K=1$, copy the internal collision from Step~3 to the output; otherwise copy the verified pair $(x_i,x)$. Finally, uncompute all flags and measure only the output registers.
\end{enumerate}

\paragraph{Ambainis's quantum walk for Element Distinctness.} Fix a parameter $r$. A list-based implementation of the walk on the Johnson graph $J(M,r)$ proceeds as follows.

\begin{enumerate}
    \item Prepare a uniform superposition over tuples $R=(x_1 < \cdots <x_r) \subseteq [M]^r$ in the index fields of the table.
    \item Query the indices $x_1,\ldots,x_r$ thereby store
    \[
    D_f(R):=\bigl((x_i,f(x_i))\bigr)_{i=1}^r.
    \]
    \item Implement the checking reflection using the marking predicate
    \begin{equation}\label{eq:ambainis-marking}
        m_f(R):=\mathbf 1\left[\exists i<j\text{ such that }f(x_i)=f(x_j)\right].
    \end{equation}
    The predicate is computed and uncomputed using reversible equality tests among the stored values.
    \item Implement the Johnson graph update using labels $x\in R$ and $y\notin R$. With $R':=R\setminus\{x\}\cup\{y\}$, the data update is the map
    \begin{equation}\label{eq:johnson-shift-data}
        \ket{R,x,y,D_f(R)}\longmapsto\ket{R',y,x,D_f(R')}.
    \end{equation}
    Concretely, move the slot $(x,f(x))$ to the update register, erase $f(x)$, replace $x$ by $y$, load $f(y)$, swap the two vertex labels, and apply a reversible index-controlled permutation of complete slots to restore increasing order; any comparison or swap-history workspace is uncomputed before the shift ends.
    \item Apply the quantum walk from the checking reflection, the shift, and input-independent reflections. At the end, select the lexicographically least colliding pair in $D_f(R)$, copy it to the output, uncompute all flags, and measure only the output registers.
\end{enumerate}

\begin{remark}
    Our lower bound assumes that the algorithm must output the collision value $y$, whereas in both the algorithms above, the collision output label $y$ value is already stored in the table. Thus outputting the label requires only a $\lceil\log_2N\rceil$-qubit output field and no additional oracle call. Equivalently, starting from only a collision pair $(x_1,x_2)$ output implementation, one final ordinary query obtains the common value with the same asymptotic query and space complexities.
\end{remark}

\begin{lemma}\label{lem:bht-label-symmetric}
    The fixed-query unitary implementations of the BHT algorithm and Ambainis's quantum walk described above are label-symmetric.
\end{lemma}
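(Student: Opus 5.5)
The plan is to verify the hypothesis of \lem{label-equivariance}. For each algorithm and each $\sigma\in\mathfrak S_N$, I will construct a unitary $R_{\sigma,t}$ on $\mathcal H_{\mathcal{WXY}}$, independent of $f$, such that $\ket{\psi_t^{\sigma\circ f}}=R_{\sigma,t}\ket{\psi_t^f}$ at every query checkpoint. Label symmetry then follows directly from that lemma.

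The natural choice for $R_{\sigma,t}$ is to apply $V_\sigma$ to every register that, at checkpoint $t$, holds a copy of a range label $f(x)$, and to act as the identity elsewhere. Such registers include the value fields of the stored table, the loaded temporary value, and the update-register slot. The query register $\mathcal Y$ needs separate treatment, because the additive oracle is not relabelling-equivariant: $y+\sigma(f(x))\neq\sigma(y+f(x))$ in general. I will therefore use the fact that in both implementations each query is made on a clean register $\ket{0}_{\mathcal Y}$ (a load) or is an erase of a value that was previously loaded.

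For a load, ${\mathcal O}_{\sigma\circ f}\ket{x,0}=\ket{x,\sigma(f(x))}=V_\sigma{\mathcal O}_f\ket{x,0}$. For an erase, subtracting $\sigma(f(x))$ from a register holding $\sigma(f(x))$ yields $0$, exactly as in the unrelabelled branch. So, at any checkpoint, every nonzero label register holds $V_\sigma$ applied to its $f$-counterpart. I will make this precise as an invariant proved by induction on the gates between checkpoints. The invariant states that at each step the state on input $\sigma\circ f$ equals $R$ applied to the state on input $f$, where $R$ is $V_\sigma$ on currently-occupied label registers.

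Between queries, the induction has to check that each input-independent gate $G$ satisfies $G R = R' G$ for the appropriate $R,R'$. The gates used are of a few kinds. Equality tests between label registers commute with $V_\sigma\otimes V_\sigma$ because $\sigma$ is a bijection. Flags, phases and Grover diffusions act only on index, flag, or output registers. Copy operations from label fields to label fields, and swaps or permutations of complete slots controlled only by indices, permute which registers carry $V_\sigma$ and so intertwine. The index-controlled sorting in Ambainis's shift uses only indices, not labels; this is exactly why the label-sorted, time-optimal version is excluded. The final copy of the collision value $y$ into the output field gives an output register that also carries $V_\sigma$. This is consistent with the invariant, since the output label register is included in $R$. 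The selection of least pairs is done by index order, so it commutes with $R$.

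The main obstacle is the bookkeeping needed to make $R_{\sigma,t}$ genuinely independent of $f$ and of the superposition branch. The set of label-holding registers must be the same in every branch at a given checkpoint; otherwise $R$ would depend on the branch, and I would have to show that registers not in use are in $\ket0$ and stay there. I will handle this by applying $V_\sigma$ on all designated label fields at every checkpoint, including fields currently in state $\ket0$. The catch is that $V_\sigma\ket0=\ket{\sigma(0)}\neq\ket0$, so this only works if empty fields are encoded with a flag or a reserved symbol outside $[N]$ fixed by $\sigma$. If this fails, I will fall back to $R$ acting as $V_\sigma$ on the subspace where a field is marked occupied by an index-valid flag, which is $f$-independent and unitary since it is block diagonal.
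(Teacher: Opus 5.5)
Your proposal is correct and follows essentially the same route as the paper. Both arguments verify the hypothesis of \lem{label-equivariance} with a relating unitary that applies $V_\sigma$ to the stored label fields, and both rest on the load/erase identities for ordinary queries together with the $\sigma$-invariance of equality tests.

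The differences are only in presentation. The paper handles all input-independent unitaries at once through the conjugation $R_\sigma\mapsto UR_\sigma U^\dagger$, and checks the input-dependent blocks $\mathsf M_f$, $\mathsf C_f$, $\mathsf S_f$ only ``on the reachable subspace.'' You instead check gates one at a time and make the occupancy subtlety ($V_\sigma\ket0\neq\ket0$) explicit, using a relabelling controlled on $f$-independent index or flag registers.
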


\begin{proof}
    Fix $\sigma\in\mathfrak S_N$. First, observe that an input-independent unitary requires no separate symmetry argument at the level of the reduced input state, since it acts only on the algorithm registers. In terms of the sufficient criterion \eq{algorithm-equivariance}, if $\ket{\phi^{\sigma\circ f}}=R_\sigma\ket{\phi^f}$ before an input-independent unitary $U$, then
    \[
    U\ket{\phi^{\sigma\circ f}}=(UR_\sigma U^\dagger)U\ket{\phi^f},
    \]
    and the new relating unitary is still independent of $f$. The only points of interest are are therefore the input-dependent Grover and quantum walk blocks and the \emph{ordinary oracle calls}. Ordinary oracle calls are used to load and erase values in the table, for example in Step~2 in both algorithms.
    
    Whenever a table contains $r$ queried values, $V_\sigma^{\otimes r}$ denotes the application of $V_\sigma$ to its $r$ value fields and the identity on all index and auxiliary registers. By definition,
    \begin{equation}\label{eq:data-list-relabel}
        V_\sigma^{\otimes r}\ket{D_f(K)}=\ket{D_{\sigma\circ f}(K)},\qquad V_\sigma^{\otimes r}\ket{D_f(R)}=\ket{D_{\sigma\circ f}(R)}.
    \end{equation}
    
    \paragraph{BHT.} Let $\mathsf M_f$ denote the Grover marking reflection, where the temporary query and workspace registers have returned to zero. On every reachable basis state it acts as
    \begin{equation}\label{eq:bht-phase-action}
        \mathsf M_f\ket{K,D_f(K),x}=(-1)^{h_f(x)}\ket{K,D_f(K),x},
    \end{equation}
    where $h_f$ is defined in \eq{bht-marking}. Since $\sigma$ is injective,
    \begin{equation}\label{eq:bht-predicate-invariant}
    \begin{split}
        h_{\sigma\circ f}(x)&=\mathbf 1\left[\exists i\in[r]\text{ such that }\sigma(f(x)) =\sigma(f(k_i))\right] \\
        &=\mathbf 1\left[\exists i\in[r]\text{ such that }f(x)=f(k_i)\right]=h_f(x).
    \end{split}
    \end{equation}
    Combining \eq{data-list-relabel}, \eq{bht-phase-action}, and \eq{bht-predicate-invariant} gives, on the reachable subspace,
    \begin{equation}\label{eq:grover-equiv}
        \mathsf M_{\sigma\circ f}V_\sigma^{\otimes r}=V_\sigma^{\otimes r}\mathsf M_f.
    \end{equation}
    Since the rest of the Grover iterate, i.e. the diffusion step, acts trivially on the registers storing labels, \eq{grover-equiv} also holds for the entire Grover iterate.
    
    Lastly, the collision flag and output indices are invariant under $\sigma$, because $f(x_i)=f(x_j)$ if and only if $\sigma(f(x_i))=\sigma(f(x_j))$, whereas the output collision label $y$ is mapped by $V_{\sigma}$. 
    
    \paragraph{Ambainis's quantum walk.} Let $\mathsf C_f$ denote the checking reflection. On every reachable basis state it acts as
    \[
    \mathsf C_f\ket{R,D_f(R)}=(-1)^{m_f(R)}\ket{R,D_f(R)},
    \]
    where $m_f$ is defined in \eq{ambainis-marking}. Since equality of range values is preserved by $\sigma$, we have $m_{\sigma\circ f}(R)=m_f(R)$, and hence, on the reachable walk subspace,
    \begin{equation}\label{eq:walk-check-equivariance}
        \mathsf C_{\sigma\circ f}V_\sigma^{\otimes r}=V_\sigma^{\otimes r}\mathsf C_f.
    \end{equation}
    
    Let $\mathsf S_f$ denote the Johnson graph update. For $x\in R$, $y\notin R$, and $R':=R\setminus\{x\}\cup\{y\}$, \eq{johnson-shift-data} gives
    \begin{equation}\label{eq:walk-shift-equivariance}
    \begin{split}
        \mathsf S_{\sigma\circ f}V_\sigma^{\otimes r}\ket{R,x,y,D_f(R)}&=\mathsf S_{\sigma\circ f}\ket{R,x,y,D_{\sigma\circ f}(R)}\notag\\
        &=\ket{R',y,x,D_{\sigma\circ f}(R')}\notag\\
        &=V_\sigma^{\otimes r}\ket{R',y,x,D_f(R')}\notag\\
        &=V_\sigma^{\otimes r}\mathsf S_f\ket{R,x,y,D_f(R)}.
    \end{split}
    \end{equation}
    Therefore, on the reachable walk subspace,
    \begin{equation}\label{eq:walk-shift-equivariance-operator}
        \mathsf S_{\sigma\circ f}V_\sigma^{\otimes r}=V_\sigma^{\otimes r}\mathsf S_f.
    \end{equation}
    
    Every remaining reflection in the walk acts trivially on the registers storing labels  and therefore it follows from \eq{walk-check-equivariance} that \eq{walk-shift-equivariance-operator} holds for the full quantum walk operator.
    
    The collision flag and the collision output are the same as for BHT.
    
    \paragraph{Ordinary oracle calls.} It remains to check the ordinary queries used to load and erase values. Let $J\ket z=\ket{-z\bmod N}$, and let $\ket d$ contain $\ell$ other stored values. On the reachable subspace,
    \begin{align*}
        \mathcal O_{\sigma\circ f}\bigl(V_\sigma^{\otimes\ell}\ket d\ket{x,0}\bigr) &=(V_\sigma^{\otimes\ell}\otimes V_\sigma) \mathcal O_f\ket d\ket{x,0},\\
        \mathcal O_{\sigma\circ f}\bigl((V_\sigma^{\otimes\ell}\otimes JV_\sigma J) \ket d\ket{x,-f(x)}\bigr)&=(V_\sigma^{\otimes\ell}\otimes\id)\mathcal O_f\ket d\ket{x,-f(x)}.
    \end{align*}
    Thus, loading a label a factor $V_\sigma$ on its target, while erasing a label removes it. Equality tests preserve this action, since $f(x)=f(x_i)$ if and only if $\sigma(f(x))=\sigma(f(x_i))$.
    
    We can therefore conclude that, after each query in the algorithm, we can define a unitary $R_{\sigma,t}$ independent of $f$ satisfying \eq{algorithm-equivariance} and the lemma follows from \lem{label-equivariance}.
\end{proof}

\paragraph{Query and space complexities.} For BHT in the standard case $M=N$, the table is prepared with $r$ queries and each Grover iteration uses $O(1)$ ordinary queries. The standard analysis gives $O(\sqrt{N/r})$ iterations, and hence
\[
T_{\rm BHT}=O\bigg(r+\sqrt{\frac Nr}\bigg)
\]
queries~\cite{brassard1997collision}.

For Ambainis's walk, preparing the initial table uses $r$ queries, checking uses no queries once the table is stored, and each shift uses $O(1)$ queries. The standard walk analysis therefore gives
\[
T_{\rm walk}=O\bigg(\max\left\{r,\frac{M}{\sqrt r}\right\}\bigg),
\]
which becomes $O(M^{2/3})$ for $r=\Theta(M^{2/3})$~\cite[Theorem~4]{ambainis2004QWalkForElementDist}.

For both algorithms, the $r$ explicit index slots use $\Theta\left(r(\log M+\log N)\right)$ qubits. All additional auxiliary registers fit within the same asymptotic bound and therfore both implementations use
\[
\Theta\left(r(\log M+\log N)\right)
\]
qubits. When $M=N^{O(1)}$, this is $\Theta(r\log N)$, so fitting either implementation into $S$ qubits requires $r=O(S/\log N)$.

\subsection{Equality-query algorithms}

Pairwise equality access is a classical restricted query model in which the algorithm may ask only whether two input positions contain identical values.

For $f:[M]\to[N]$, the coherent equality oracle is defined by
\begin{equation}\label{eq:equality-oracle}
    \mathcal E_f\ket{x,x',b}=\ket{x,x',b\oplus\mathbf 1[f(x)=f(x')]}
\end{equation}
for $x,x'\in[M]$ and $b\in\{0,1\}$. We call an algorithm an \emph{equality-query algorithm} if its only input-dependent operation is $\mathcal E_f$. In particular, comparing $f(x)$ with a fixed range label is not an equality query in this sense.

One equality query has a clean implementation using the standard oracle. Load $f(x)$ and $f(x')$ into two clean value registers, compute their equality into $b$, and erase both values using $\mathcal O_f^\dagger=J\mathcal O_fJ$. Thus one equality query uses four ordinary queries and $O(\log N)$ additional qubits; the two temporary value registers are returned to zero and reused.

\begin{lemma}\label{lem:equality-query-label-symmetric}
    Every equality-query algorithm is label-symmetric.
\end{lemma}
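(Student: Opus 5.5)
The plan is to apply \lem{label-equivariance} with the trivial choice $R_{\sigma,t}=\id_{\cal WXY}$ for every $\sigma\in\mathfrak S_N$ and every $t$. In other words, I will show that the pure state $\ket{\psi_t^f(\mathcal A)}$ of an equality-query algorithm depends on $f$ only through its equality pattern, that is, through the equivalence relation $x\sim_f x' \iff f(x)=f(x')$ on $[M]$. Since $\sigma$ is injective, $\sigma(f(x))=\sigma(f(x'))$ holds if and only if $f(x)=f(x')$. Hence $\sim_{\sigma\circ f}=\sim_f$, and directly from \eq{equality-oracle} we get the operator identity $\mathcal E_{\sigma\circ f}=\mathcal E_f$ on the whole space, not just on a reachable subspace.

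The steps, in order, are as follows. First, write the algorithm as input-independent unitaries $U_0,\dots,U_T$ interleaved with applications of $\mathcal E_f$. Then $\ket{\psi_t^{f}}=U_t\mathcal E_fU_{t-1}\cdots\mathcal E_fU_0\ket0$. Second, substitute $\mathcal E_{\sigma\circ f}=\mathcal E_f$, which gives $\ket{\psi_t^{\sigma\circ f}}=\ket{\psi_t^f}$ exactly. This is \eq{algorithm-equivariance} with $R_{\sigma,t}=\id$, which is unitary and independent of $f$. Third, invoke \lem{label-equivariance}, which in turn rests on the Gram-matrix criterion of \fct{gram} and the $\mathfrak S_N$-invariance of ${\sf U}$, to conclude that the algorithm is label-symmetric.

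The only point requiring care is the modelling convention. The lemma is about algorithms whose query is $\mathcal E_f$, whereas \defin{algorithm} is phrased with ${\cal O}_f$, and the time-index $t$ counts equality queries. If instead one regards each $\mathcal E_f$ as implemented by four ordinary oracle calls, as described just before the lemma, the intermediate checkpoints after the load steps are no longer $f$-independent. At those checkpoints I would use $R_{\sigma,t}=V_\sigma$ on the loaded temporary value registers, and identity elsewhere, exactly as in the ``ordinary oracle calls'' paragraph of the proof of \lem{bht-label-symmetric}. Loading $f(x)$ produces a factor $V_\sigma$, computing the equality bit commutes with $V_\sigma\otimes V_\sigma$, and erasing with $J\mathcal O_fJ$ removes the factor. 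The resulting $R_{\sigma,t}$ are unitary and independent of $f$, so \lem{label-equivariance} applies at every checkpoint. I expect this bookkeeping of the intermediate checkpoints to be the only mildly delicate step; the core argument is the one-line invariance $\mathcal E_{\sigma\circ f}=\mathcal E_f$.
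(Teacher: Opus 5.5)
Your proposal is correct and follows the paper's proof: $\mathcal E_{\sigma\circ f}=\mathcal E_f$, so by induction $\ket{\psi_t^{\sigma\circ f}(\mathcal A)}=\ket{\psi_t^f(\mathcal A)}$, and \lem{label-equivariance} applies with $R_{\sigma,t}=\id$. Your extra bookkeeping for the four-ordinary-query implementation is also correct and is not spelled out in the paper's proof of this lemma: each load contributes a factor $V_\sigma$ on the loaded value register, and the $J\mathcal O_fJ$ erasure removes it, as in the ``ordinary oracle calls'' paragraph of the proof of \lem{bht-label-symmetric}.
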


\begin{proof}
    For every $\sigma\in\mathfrak S_N$ and $x,x'\in[M]$,
    \[
    (\sigma\circ f)(x)=(\sigma\circ f)(x')\quad\Longleftrightarrow\quad f(x)=f(x'),
    \]
    so $\mathcal E_{\sigma\circ f}=\mathcal E_f$. Since the initial state and every inter-query unitary are independent of $f$, induction gives $\ket{\psi_t^{\sigma\circ f}(\mathcal A)}=\ket{\psi_t^f(\mathcal A)}$ for every $t\in\{0,\ldots,T\}$. Thus \eq{algorithm-equivariance} holds with $R_{\sigma,t}=\id$, and \lem{label-equivariance} proves the claim.
\end{proof}

\subsection{Extension to computations with intermediate measurements}\label{sec:intermediate-measurements}
Following the approach of~\cite{ChungGLQ20}, we extend the compression oracle technique for collision finding to computations with intermediate measurements. To do so, we condition the current state on the history of intermediate measurement outcomes. We express this mathematically using a direct-sum decomposition of the space, where each subspace corresponds to one possible history. We then extend our result to this setting.
Nonetheless, we have to be careful, since our space bound is exploited through the Schmidt rank of the full state in \lem{orbit-span-symm}.

\paragraph{Kraus maps and measurement histories.}

Instead of considering intermediate measurements separately from unitary evolutions, we prefer to consider them together using Kraus maps. By doing so, we do not deviate too much from the unitary evolution formalism.
We replace each unitary map $U$ by a Kraus map $K=(K_k)_k$, where
$K_k$ is a linear operator on the algorithm's registers, satisfying the trace-preserving condition $\sum_k K_k^\dagger K_k=\id$. This replaces the map $\ket{\psi}\mapsto U\ket{\psi}$ by
$$\ket{\psi}\mapsto \frac{1}{\norm{K_k\ket{\psi}}} K_k\ket{\psi} \quad\text{with probability}\quad \norm{K_k\ket{\psi}}^2.$$
For convenience, we retain the unnormalised vector $K_k\ket{\psi}$, whose squared norm gives the probability of this outcome to occur. This preserves linearity and allows us to use the direct-sum notation
\[
\ket\psi\mapsto K\ket\psi=\bigoplus_k K_k\ket\psi.
\]
Observe that this map is an isometry from the original space $H$, to the direct sum $\bigoplus_k H$. The isometry comes from the trace-preserving condition. If a measurement outcome has several Kraus operators, we refine the history to include their indices as well, so that each branch remains pure. The direct sum is bookkeeping for the analysis; it is not an additional coherent workspace. The space bound $S$ includes any classical records retained in the algorithm's memory.

This observation generalises directly to directly to sequences of Kraus maps. For instance, for a fixed input $f$, the state before the first query is $\ket{\psi_0}=U_0\ket0$, where we omit the explicit dependence on $f$. Replacing $U_0$ by $K_0=(K_{0,k})_k$ gives
\[
\ket{\psi_0}=K_0\ket0=\bigoplus_k\ket{\psi_{0,k}},
\qquad \ket{\psi_{0,k}}=K_{0,k}\ket0.
\]
We continue inductively. Just before the $(t+1)$-st query, let the current state be
\[
\ket{\psi_t}=\bigoplus_h\ket{\psi_{t,h}},
\]
where the sum is over histories from the first $t+1$ Kraus maps $K_0,\ldots,K_t$. After the query ${\cal O}_f$ (acting separately on each history branch) and the next Kraus map $K_{t+1}$ (whose restriction to branch $h$ is denoted by $K_{t+1,h}$), each history $h$ is extended to $hk$, giving
\[
\ket{\psi_{t+1}}=
K_{t+1}\mathcal{O}_f\ket{\psi_{t}} =\bigoplus_k K_{t+1,k}\mathcal{O}_f\ket{\psi_{t}} =\bigoplus_{h,k}\ket{\psi_{t+1,hk}},
\qquad\ket{\psi_{t+1,hk}}=K_{t+1,h,k}{\cal O}_f\ket{\psi_{t,h}}.
\]
Here we allow the Kraus family to depend on the preceding history, with $\sum_k K_{t+1,h,k}^\dagger K_{t+1,h,k}=\id$ for every $h$.

We now move to the compressed oracle technique, which we extend to history-dependent states. From now on, $\ket{\psi_{t,h}}$ denotes the joint algorithm-input state under the uniform input distribution, obtained by replacing ${\cal O}_f$ by the purified oracle ${\cal O}$ and adjoining the initial input state $\ket{\sf U}$. Compression acts separately on every branch, that is, on each copy of the space induced by a history branch. Extending $\Pi_{\geq1}$ similarly to each component-wise on the direct sum over all histories of some given length, we can define
\[
\Delta_t:=\norm{\Pi_{\geq1}\ket{\widetilde\psi_t}}
=\sqrt{\sum_h\Delta_{t,h}^2},
\qquad \Delta_{t,h}:=\norm{\Pi_{\geq1}\ket{\widetilde\psi_{t,h}}},
\]
Write $p_{t,h}=\norm{\ket{\widetilde\psi_{t,h}}}^2$ for the probability of history $h$, so that $\sum_h p_{t,h}=1$.

Extending the compressed oracle to the direct sum in the same way, we can bound the evolution of $\Delta_t$. Each Kraus operator acts trivially on the database register and therefore commutes with $\Pi_{\geq1}$.
By Pythagoras and the isometry property of each Kraus family, we get
$$\Delta_{t+1}^2
=\norm{\Pi_{\geq1}K_{t+1}\widetilde{\cal O}\ket{\widetilde\psi_{t}}}^2\\
=\norm{K_{t+1}\Pi_{\geq1}\widetilde{\cal O}\ket{\widetilde\psi_{t}}}^2\\
=\norm{\Pi_{\geq1}\widetilde{\cal O}\ket{\widetilde\psi_{t}}}^2.\\
$$

As in \eq{progress-split}, splitting the state into its collision and collision-free parts and applying the triangle inequality gives
\[
\Delta_{t+1}\leq\Delta_t+\norm{\Pi_{\geq1}\widetilde{\cal O}\Pi_{=0}\ket{\widetilde\psi_t}}.
\]

\paragraph{Symmetry conditioned on the histories.}

To obtain the space-dependent progress bound, we must now adapt the symmetry argument to states conditioned on measurement histories. Both \lem{one-query} and \lem{collision-success} continue to hold branchwise, so the remaining step is to adapt the notion of label symmetry in \sec{symmetry-restrictions}. For this purpose, define the unnormalised reduced input state for a history $h$ after $t$ queries by
\[
\rho_{{\cal I},h}^t:=\Tr_{\cal WXY}\left[\ket{\psi_{t,h}}\bra{\psi_{t,h}}\right],
\]
whereas the global reduced input state is $\rho_{\cal I}^t=\sum_h\rho_{{\cal I},h}^t$. Each branch is pure, but the rank of this sum need not be bounded by the workspace dimension. Requiring full label symmetry in each fixed branch would be sufficient, but would exclude measuring a label and subsequently using its value. We therefore allow the symmetry to fix the classical record retained by the algorithm.

At each checkpoint, separate the algorithm registers into a classical record register ${\cal C}$ and the remaining registers, with $\dim{\cal H}_{\cal C}\leq2^c$. The basis of ${\cal C}$ is indexed by a finite set $\Omega$ on which $\mathfrak S_N$ acts by relabelling the stored values. For a history of measurement outcomes $h$, the corresponding record has the value $z_h\in\Omega$,
so the branch state is of the form $\ket{z_h}_{\cal C}\ket{\phi_{t,h}}_{({\cal A}\setminus{\cal C}){\cal I}}$. Let
\[
G_{z_h}:=\{\sigma\in\mathfrak S_N:\sigma z_h=z_h\}
\]
be the subgroup of $\mathfrak S_N$ fixing this record. We require, at every checkpoint and for every history, that
\begin{equation}\label{eq:history-label-symmetry}
V_\sigma\rho_{{\cal I},h}^t V_\sigma^\dagger=\rho_{{\cal I},h}^t
\qquad\text{for every }\sigma\in G_{z_h}.
\end{equation}
With an empty record, this reduces to full label symmetry in each branch. 

To recover the orbit-span bound, write $W_{t,h}=\operatorname{supp}(\rho_{{\cal I},h}^t)$. Since the record is fixed in the branch, $\dim W_{t,h}\leq2^{S-c}$. By \eq{history-label-symmetry}, this subspace is invariant under $G_{z_h}$, hence
\[
\dim\operatorname{span}\{V_\sigma W_{t,h}:\sigma\in\mathfrak S_N\}
\leq[\mathfrak S_N:G_{z_h}]\dim W_{t,h}
\leq|\Omega|\,2^{S-c}\leq2^S.
\]
Thus every input vector in a branch still has a full $\mathfrak S_N$-orbit span of dimension at most $2^S$, as required by \cor{database-space}. The space used to store the record accounts for the possible relabellings of that record.

We now give a sufficient condition, stated in \eq{covar-hist}, on the measurement and its unitary continuation to preserve \eq{history-label-symmetry}. For a fixed $t$ and history $h$ with corresponding record $z_h$, the unnormalised algorithm state just before applying $K_{t+1,h,k}$ is
\[
\ket{\chi_{t+1,h}^f}:={\cal O}_f\ket{\psi_{t,h}^f}.
\]
Assume that the reduced input state of $N^{-M/2}\sum_f\ket{\chi_{t+1,h}^f}\ket f$ is invariant under $G_{z_h}$, as required by \eq{history-label-symmetry} at this checkpoint. By \eq{input-gram}, there are input-independent workspace unitaries $R_\sigma$ such that for every $\sigma\in G_{z_h}$
\[
\ket{\chi_{t+1,h}^{\sigma\circ f}}=R_\sigma\ket{\chi_{t+1,h}^f}.
\]
For the remainder of this paragraph, abbreviate $K_k:=K_{t+1,h,k}$ and suppress the fixed indices $t,h$. Let $U_{k,s}^f$ denote the unitary continuation of the algorithm, including oracle calls, from outcome $k$ to its $s$-th checkpoint, with $U_{k,0}^f=\id$ immediately after the measurement. It suffices that input-independent workspace unitaries $R_{\sigma,k,s}$ satisfy, on the reachable states, for every $\sigma\in G_{z_h}$
\begin{equation}\label{eq:covar-hist}
U_{\sigma k,s}^{\sigma\circ f}K_{\sigma k}R_\sigma
=R_{\sigma,k,s}U_{k,s}^fK_k,
\end{equation}
both at $s=0$ and at every ordinary-query checkpoint of the continuation.

\paragraph{Examples of covariant measurements.}

We illustrate two examples of measurement histories that satisfy \eq{covar-hist}. Throughout this section, $t,h$ are fixed and we suppress this indices as at the end of the previous subsection. 

For the first example, suppose that the workspace action $R_\sigma$ relabels two label registers simultaneously, acting on their basis states as $\ket{a,b}\mapsto\ket{\sigma(a),\sigma(b)}$. Consider the measurement on these two registers with Kraus operators
\[
K_{\rm eq}=\sum_{y\in[N]}\proj{y,y},
\qquad K_{\rm neq}=\id-K_{\rm eq},
\]
tensored with the identity on the remaining registers. Since the relabelling action preserves equality, we have
\[
R_\sigma K_{\rm eq}R_\sigma^\dagger =\sum_{y\in[N]}\proj{\sigma(y),\sigma(y)} =K_{\rm eq},
\]
and the same holds for $K_{\rm neq}$. Both outcomes are unchanged by relabelling, hence
\[
K_{\sigma k}R_\sigma=K_kR_\sigma=R_\sigma K_k.
\]
This verifies \eq{covar-hist} immediately after the measurement, with $R_{\sigma,k,0}=R_\sigma$. If the subsequent unitary continuation is symmetric, i.e. it satisfies
\[
U_{k,s}^{\sigma\circ f}R_\sigma =R_{\sigma,k,s}U_{k,s}^f
\]
on the postmeasurement states, then multiplying by $K_k$ verifies \eq{covar-hist} at each continuation checkpoint as well. In particular, the measurement itself is already symmetric under relabelling. 

For the second example, the measurement itself is not symmetric under relabelling. Suppose that a query at an address $x_0$ has placed $f(x_0) = y$ in a clean label register. Conditioned on $y$, the unitary continuation of the algorithm is a Grover search over the remaining indices $x\ne x_0$ for another occurrence of the label $y$. Relabelling the input changes the measured value to $\sigma(y)$, but leaves the set of indices searched for unchanged. Thus, the Grover search depends only on which values are equal, not on their actual labels. More explicitly, the measurement projectors $K_y=\proj y$ satisfy
\[
K_{\sigma(y)}R_\sigma=R_\sigma K_y,
\]
where $R_\sigma$ relabels the measured register. The oracle marks the same indices on inputs $(f,y)$ and $(\sigma\circ f,\sigma(y))$, and the Grover diffusion operator is independent of the labels. This gives
\[
U_{\sigma(y),s}^{\sigma\circ f}K_{\sigma(y)}R_\sigma = R_{\sigma,y,s}U_{y,s}^fK_y
\]
on the reachable states, as required by \eq{covar-hist}.

\section{Bottom spectrum of the arrangement graphs}\label{sec:proof-spectrum}

In this section, we prove \lem{spectrum}, which gives a complete description of the bottom of the spectrum of the arrangement graph $A_{N,s}$ (in the regime $N \geq 2s$). Building on the previous spectral analyses of Chen, Ghorbani, and Wong~\cite{chen2013cyclic} and Araujo and Bratten~\cite{araujo2017spectra}, we identify the full $(-s)$-eigenspace and determine the exact next distinct eigenvalue. The proof combines the representation-theoretic decomposition of $C_s^\perp$ with the character-ratio formula for the eigenvalues of $A_{N,s}$. We begin by recalling the required facts about skew diagrams, horizontal strips, and Young's seminormal idempotents.

If $\lambda$ and $\mu$ are partitions with $\lambda_i\le\mu_i$ for all $i$, then we write $\lambda\subseteq\mu$. The \textit{skew diagram} $\mu/\lambda$ is the set of boxes of $Y(\mu)$ not contained in $Y(\lambda)$:
\[
\mu/\lambda:=Y(\mu)\setminus Y(\lambda).
\]
A skew diagram is called a \textit{horizontal strip} if it contains at most one box in each column. We write
\[
\lambda\prec\mu
\]
if $\mu/\lambda$ is a horizontal strip. Equivalently, one can check that $\lambda\prec\mu$ if and only if
\[
\mu_1\ge \lambda_1\ge \mu_2\ge \lambda_2\ge \mu_3\ge \lambda_3\ge\cdots.
\]

We will also use the following standard facts about Young's seminormal idempotents, see, e.g., \cite[Sections~2.2-2.3]{garsia2020young}. For every partition $\lambda\vdash n$ and every standard Young tableau $\mathfrak{t}\in\mathrm{SYT}(\lambda)$, there is a seminormal idempotent $e_{\mathfrak{t}}\in\mathbb C[\mathfrak S_n]$. These idempotents form a complete family of pairwise orthogonal idempotents:
\begin{equation}\label{eq:idempotent}
    e_{\mathfrak{t}}^2=e_{\mathfrak{t}}, \qquad e_{\mathfrak{t}}e_{\mathfrak{t}'}=0\quad(\mathfrak{t}\neq \mathfrak{t}'), \qquad \sum_{\lambda\vdash n}\sum_{\mathfrak{t}\in\mathrm{SYT}(\lambda)}e_{\mathfrak{t}}=1.
\end{equation}

Since every element of $\mathbb C[\mathfrak S_n]$ is a linear combination of permutations, a right action of $\mathfrak S_n$ on $V$ allows such an element to act on $V$ by linearity. Under this action, the map $v\mapsto ve_{\mathfrak t}$ is an idempotent projection onto
\[
Ve_{\mathfrak t} := \{ve_{\mathfrak t}:v\in V\}.
\]
Moreover, \eq{idempotent} gives the direct-sum decomposition
\[
V = \bigoplus_{\lambda\vdash n} \bigoplus_{\mathfrak{t}\in\mathrm{SYT}(\lambda)} Ve_{\mathfrak{t}}.
\]

We are now ready to prove our Arrangement Spectral Gap lemma, that we restate here for clarity.
\Arrangement*
\begin{proof}
    We first consider the case $s=1$. Then $C_1^\perp=\mathbb C[N]$, and $A_{N,1}$ is the adjacency operator of the complete graph on $N$ vertices. Hence
    \[
    A_{N,1}=N\proj{\widehat{0}}-\id.
    \]
    It follows that $A_{N,1}$ has eigenvalue $N-1$ on
    $\operatorname{span}\{\ket{\widehat{0}}\}$ and eigenvalue $-1$
    on
    \[
    W=\ker(\bra{\widehat{0}})
        \cong S^{(N-1,1)}.
    \]
    Since $(1)$ is the unique partition of $1$ and has exactly one
    standard Young tableau, this is precisely the claimed
    $(-1)$-eigenspace. Moreover, the only other eigenvalue is
    \[
    N-1=N-(3\cdot 1-2),
    \]
    so the claimed bound on the remaining eigenvalues also holds.

    Henceforth, assume that $s\geq2$. Fix a partition $\mu\vdash s$ and a tableau $\mathfrak{t} \in \mathrm{SYT}(\mu)$. Recall that $e_{\mathfrak{t}}\in \mathbb C[\mathfrak S_s]$ is the corresponding seminormal idempotent of the standard Young tableau $\mathfrak{t}$, and let
    \[
    C_{s,\mu,\mathfrak{t}}^\perp:=C_s^\perp e_{\mathfrak{t}},
    \]
    meaning
    \begin{equation*}
        C_s^\perp = \bigoplus_{\mu\vdash s}\ \bigoplus_{\mathfrak{t} \in \mathrm{SYT}(\mu)} C_{s,\mu,\mathfrak{t}}^\perp.
    \end{equation*}
    
    In~\cite[Example~2]{nikitin2019decomposition}, it is shown that each of these subspaces $C_{s,\mu,\mathfrak{t}}^\perp$ satisfies
    \begin{equation*}
        C_{s,\mu,\mathfrak{t}}^\perp \cong 
        \bigoplus_{\lambda: \lambda \prec \mu} S^{(N-\abs{\lambda},\lambda)}.
    \end{equation*}
    Here $\lambda=\varnothing$ is allowed. This occurs when $\mu=(s)$, in which case the corresponding summand is understood as $S^{(N-\abs{\varnothing},\varnothing)}=S^{(N)}$. Among these summands, $S^{(N-s,\mu)}$ is the only one whose first row has length exactly $N-s$; every other summand has a strictly longer first row.
    
    By the proof of~\cite[Theorem~3.5]{araujo2017spectra}, more specifically by~\cite[Lemma~3.2, Proposition~3.3, and Theorem~3.4]{araujo2017spectra}, each summand
    $S^{(N-\abs{\lambda},\lambda)}$ (where $\lambda\prec\mu$) is contained in an eigenspace of $A_{N,s}$, and the corresponding eigenvalue is given by
    \[
    E_{\mu, \lambda} = \binom{N}{2}\frac{\chi_{(N-\abs{\lambda},\lambda)}(\tau_N)}{\chi_{(N-\abs{\lambda},\lambda)}(1)} - \binom{s}{2}\frac{\chi_\mu(\tau_s)}{\chi_\mu(1)} - \binom{N-s}{2}, 
    \]
    where $\tau_N$ and $\tau_s$ are transpositions in $\mathfrak S_N$ and $\mathfrak S_s$, respectively. 
    
    For any partition $\alpha\vdash m$, it is known (see, e.g.,~\cite[Eq.~(2.4.9)]{garsia2020young}) that these character ratios can be rewritten as
    \[
    \binom{m}{2}\frac{\chi_\alpha(\tau_m)}{\chi_\alpha(1)} = \sum_{(i,j)\in\alpha}(j-i) =: \gamma(\alpha).
    \]
    Therefore, the eigenvalues of $A_{N,s}$ are given by
    \begin{equation}\label{eq:eigenvalues}
        E_{\mu, \lambda} 	= \gamma((N-\abs{\lambda},\lambda))-\gamma(\mu)-\binom{N-s}{2}.
    \end{equation}
    
    We first compute the value for $\lambda=\mu$, meaning $(N-\abs{\lambda},\lambda) = (N-s,\mu)$. The first row of $(N-s,\mu)$ contributes
    \[
    \sum_{j=1}^{N-s}(j-1)=\binom{N-s}{2}
    \]
    to $\gamma((N-s,\mu))$. The remaining rows are obtained by moving the Young diagram of $\mu$ down by one row, which subtracts $1$ from the quantity $j-i$ for each of the $s$ boxes of $\mu$. Therefore
    \[
    \gamma((N-s,\mu))=\binom{N-s}{2}+\gamma(\mu)-s.
    \]
    Substituting this into the eigenvalue formula gives
    \[
    E_{\mu,\mu}=-s,
    \]
    meaning $S^{(N-s,\mu)}$ lies in the $-s$-eigenspace of $A_{N,s}$.
    
    Now let $\lambda\prec\mu$ with $\lambda\neq\mu$. Then
    \begin{equation}\label{eq:eigenvalue2}
        E_{\mu,\lambda}=E_{\mu,\mu}+\gamma((N-\abs{\lambda},\lambda))-\gamma((N-s,\mu)) = -s + \gamma((N-\abs{\lambda},\lambda))-\gamma((N-s,\mu)).
    \end{equation}
    To compute $E_{\mu,\lambda}$, we therefore only have to compute the difference $\gamma((N-\abs{\lambda},\lambda))-\gamma((N-s,\mu))$. To aid in this, we define the following:
    \[
    \delta_i:=\mu_i-\lambda_i,\qquad q:=\sum_{i\geq 1}\delta_i=s-\abs{\lambda}.
    \]
    Since $\lambda\subseteq\mu$, we have $\delta_i\geq0$, and since $\lambda\neq\mu$, we have $q\geq 1$.
    In terms of these new quantities, the partition $(N-\abs{\lambda},\lambda)$ is obtained from $(N-s,\mu)$ by deleting $\delta_i$ boxes from the end of row $i+1$ for each $i\geq 1$, and adding all of them, i.e.~$q$, to the end of the first row.
    
    The $q$ extra boxes in the first row of $(N-\abs{\lambda},\lambda)$ have a total contribution of 
    \[
    \sum_{a=0}^{q-1}(N-s+a)=q(N-s)+\binom q2
    \]
    to the difference $\gamma((N-\abs{\lambda},\lambda)) - \gamma((N-s,\mu))$. For each $i\geq 1$, the deleted boxes from row $i+1$ have positions
    \[
    (i+1,\mu_i),\; (i+1,\mu_i-1),\; \dots,\; (i+1,\mu_i-\delta_i+1).
    \]
    Since a deleted box $(i+1,j)$ contributes $-(j-(i+1))$ to the difference, the contribution from these deleted boxes to the difference $\gamma((N-\abs{\lambda},\lambda)) - \gamma((N-s,\mu))$ is
    \[
    -\sum_{a=0}^{\delta_i-1}(\mu_i-a-(i+1)) = 	-\delta_i\mu_i+(i+1)\delta_i+\binom{\delta_i}{2}.
    \]
    Combining these added and deleted contributions gives
    \begin{equation}\label{eq:difference}
        \gamma((N-\abs{\lambda},\lambda))-\gamma((N-s,\mu)) = \sum_{i\geq 1}\delta_i(N-s-\mu_i+i+1)+\binom{q}{2}+\sum_{i\geq 1}\binom{\delta_i}{2}.
    \end{equation}
    
    We now lower bound the right-hand side of \eq{difference}. Observe that when $\delta_i>0$ we have $\mu_i>0$. In those cases, since $\mu\vdash s$, we have $\mu_i\leq s-i+1$ and hence
    \begin{equation}\label{eq:lower-lambda}
        N-s-\mu_i+i+1\geq N-s-(s-i+1)+i+1=N-2s+2i\geq N-2s+2.
    \end{equation}
    The binomial terms in \eq{difference} are also nonnegative, so using $q\geq 1$ we obtain
    \[
    \gamma((N-\abs{\lambda},\lambda))-\gamma((N-s,\mu))\geq q(N-2s+2)\geq N-2s+2.
    \]
    
    Substituting this into \eq{eigenvalue2} yields
    \begin{equation}\label{eq:eigenvalue3}
        E_{\mu,\lambda} \geq -s+N-2s+2=N-(3s-2),
    \end{equation}
    meaning that the eigenvalue of $A_{N,s}$ for every $S^{(N-\abs{\lambda},\lambda)}$ with $\lambda \neq \mu$ is at least $N-(3s-2)$, which, by the assumption $N \geq 2s$, is strictly larger than $-s$.  This means that the $(-s)$-eigenspace of $A_{N,s}$ consists solely of the Specht modules $S^{(N-s,\mu)}$ for $\mu \vdash s$. Summing over all $\abs{\mathrm{SYT}(\mu)}$ standard tableaux of shape $\mu$ shows that the $(-s)$-eigenspace of $A_{N,s}$ is
    \[
     \bigoplus_{\mu\vdash s} \abs{\mathrm{SYT}(\mu)} S^{(N-s,\mu)}.
    \]

    The lower bound in \eq{eigenvalue3} is attained for the partitions
    \[
    \mu=(s), \qquad	\lambda=(s-1).
    \]
    Indeed, in that case one can verify that $\lambda\prec\mu$, $q=1$, $\delta_1=1$, and $\delta_i=0$ for all $i\geq2$. Thus, equality holds in \eq{lower-lambda} and the two binomial terms in \eq{difference} become zero.
\end{proof}

\subsection*{Acknowledgements.}
 
The authors thank Dmitry Grinko for helpful discussions and explanations concerning the representation theory of the symmetric group.

This research was supported in part by the French PEPR integrated projects EPIQ (ANR-22-PETQ-0007)
and HQI (ANR-22-PNCQ-0002), and the ERC Advanced Grant PARQ.

\paragraph{AI statement.} The main ideas and proofs in this work were developed by the authors, with the exception of the connection made by ChatGPT~$5.3$ between $C_s^{\perp}\cap W^{\otimes s}$ and arrangement graphs, via the deletion map $D$ defined in \eq{D}. AI tools were also used for supplementary verification of results, editing the manuscript and the discussing of ideas.

\newpage
\bibliographystyle{alpha}
\bibliography{refs}

\end{document}